\newtheorem{Def}{Definition}
\newtheorem{thm}{Theorem}
\newtheorem{prop}[thm]{Proposition}
\newtheorem{lemma}[thm]{Lemma}
\newtheorem{cor}[thm]{Corollary}
\newtheorem{ex}{Example}
\begin{document}

\title{Entanglement--Saving Channels}
\author{L. Lami}
\affiliation{\mbox{Universitat Aut\`onoma de Barcelona, ES-01893 Bellaterra (Barcelona), Spain} and Scuola Normale Superiore, I-56126 Pisa, Italy.}

\author{V. Giovannetti}
\affiliation{NEST, Scuola Normale Superiore and Istituto Nanoscienze--CNR, I-56127 Pisa, Italy.}

\begin{abstract}
The set of Entanglement Saving  (ES) quantum channels is introduced and characterized. These are completely positive, trace preserving transformations which when acting locally on a bipartite quantum system initially prepared into a maximally entangled configuration,  preserve its entanglement even when applied an arbitrary number of times. In other words,   a quantum channel $\psi$ is said to be ES if  its powers $\psi^n$ are not entanglement-breaking for all  integers $n$.  We also characterize the properties of the Asymptotic Entanglement Saving (AES) maps. These form a proper subset of the ES channels that is constituted by those maps which, not only preserve entanglement for all finite $n$, but which also sustain an explicitly not null level of entanglement in the asymptotic limit~$n\rightarrow \infty$.  
Structure theorems are provided for ES and for AES maps which  yield an almost complete characterization of the former and a full characterization of the latter. 
\end{abstract}

\maketitle

\section{Introduction}

Entanglement is a distinctive feature of quantum mechanical systems and  the key resource for  quantum data processing~\cite{ENTAN}. Even though ubiquitous this exotic form of correlations is  extremely  difficult to create and preserve, the problem arising from its monogamous character and from the tendency of quantum systems to establish spurious connections with environmental degrees of freedom that are not directly 
under experimental control~\cite{MONO1}. 
A comprehensive study of the processes which tend to deteriorate entanglement can be conducted in the context of quantum channels, i.e. completely positive, trace preserving, linear mapping operating on the set of the density matrices which describe the physical states of a quantum system. According to the postulate of quantum mechanics, quantum channels represent the most general physical transformations which 
a quantum system can undergo when interacting with an external, initially uncorrelated, environment, e.g. see Ref.~\cite{REVIEW,WolfQC,HOLEVOBOOK}.

A very special set of quantum channels is constituted by the so called entanglement--breaking maps~\cite{HorodeckiShorRuskai}. They represent the most detrimental form of noise one has to  face in any experimental  implementation of quantum information processing: when acting locally on an initially entangled  bipartite system, an entanglement--breaking map produces an output that is separable (i.e. not entangled) no matter how intense the initial entanglement was. 
Of course not all
the completely positive, trace preserving transformations 
are entanglement--breaking, the vast majority of maps operating on quantum mechanical system
 representing milder forms of noise. 
In Ref.~\cite{V} a classification of these less disruptive processes was proposed which is based on the accumulated effect arising under iterative applications of a given map. 
In particular a channel $\psi$ was defined to be entanglement--breaking of order $n$  if it requires $n$ recursive applications to  remove all the entanglement initially present in the system. Building up from such approach a series of functionals~\cite{V,LV}  were introduced aimed to quantify  how noisy a given quantum channel is, under the assumption that correcting filtering processes are allowed between two subsequent applications of the latter. 
Aim of the present work is to develop further on these ideas  by focusing  on two very special subsets of quantum channels which we dub Entanglement Saving (ES) and Asymptotically Entanglement Saving (AES), respectively.  An ES channel $\psi$ describes an extremely  weak, yet nontrivial, form of noise which is characterized by the property of  preserving the entanglement of any  bipartite  maximally entanglement state even when applied locally  $n$ times recursively on one of the two subsystems, with $n$ being a generic integer.  Accordingly a map $\psi$ is ES if for all $n$, its iterative $n$--fold application $\psi^n$  is  not entanglement--breaking.
The AES channels constitute a special subclass of the red ES set, formed by those maps which drive (as $n$ goes to infinite) a maximally entangled state of the composite system toward a final configuration which still contains an explicitly not null level of entanglement. 
Structural theorems and a complete characterization of these maps are derived for case of finite dimensional systems. 

Due to the rather technical character of some of the theorems we derive, in writing the present manuscript we made an effort to be as self--contained as possible.
In particular, the first two sections are devoted to review some known facts concerning the theory of quantum channels.
Specifically, in Sec.~\ref{sec generalities} we set the notation and a provide a formal characterization of unitary and entanglement--breaking maps. We also recall the Bloch representation  for qubit channels and the Kadison--Schwarz  inequality which will be extensively used in the subsequent sections. 
While referring to the lecture notes of M. M. Wolf~\cite{WolfQC} as fundamental reference, in Sec.~\ref{sec advanced} instead we give a comprehensive review of the spectral properties of quantum channels. 
Building from this technical introduction we start hence to present our original contributions. In Sec.~\ref{sec UEP} we begin by introducing the notion of universal entanglement--preserving channels, defined as those completely positive, trace--preserving linear maps which preserve all forms of entanglement, no matter how weak it may be at the beginning of the transformation. These can be seen as the counter--parts of entanglement--breaking channels  and, confirming a result 
which is intuitively expected,  we formally prove that they coincide with the set of unitary transformations (see Theorem~\ref{UEP U}). 
The study of ES channels is then presented in Sec.~\ref{sec ES}. In particular, we present a characterization of these maps based on assumption that their determinant is not null~(see Theorem~\ref{ES det =/ 0}) and use this result to give a complete classification  for qubit systems~(see Sec.~\ref{subs ES q}). 
Section~\ref{sec AES} is hence devoted to the study of AES maps. Conclusions and final discussion of the results are then provided in Sec.~\ref{sec con}. 

\section{Basic Theory of Quantum Channels} \label{sec generalities}

\subsection{Generalities}
 In this section  we shall set the notation and review some basic results on the theory of quantum channels
by focusing on the case of finite--dimensional quantum systems.
Recall that a quantum channel provides the proper mathematical framework to describe  the dynamical evolution of an open quantum system. 
 It is well--known  that such a transformation can be equivalently seen as \emph{i)} a unitary interaction with an external ancilla which is later discarded (Stinespring representation), as  \emph{ii)}  a sum of matrix conjugation operations (Kraus representation), or finally as  \emph{iii)} an abstract linear, completely positive, trace--preserving superoperator (axiomatic approach)~\cite{NC,REVIEW,WolfQC,HOLEVOBOOK}.

\subsection{Notation}

In what follows, we will denote by $\mathcal{H}(d;\mathds{C})$ the set of $d\times d$ hermitian matrices. In some cases it can be useful to consider more generally the set of $d\times d$ complex (real) square matrices, denoted by $\mathcal{M}(d;\mathds{C})$ (or $\mathcal{M}(d;\mathds{R})$, respectively). Through the paper, it will be necessary to distinguish between the operator statements $A\geq 0$, $A>0$ and $A\geq 0$ but $\det A=0$ (i.e. $A$ has no negative eigenvalues, but at least one of them is zero). We will refer to the first case by saying simply that $A$ is \emph{positive}, to the second by specifying that $A$ is \emph{strictly positive}, and to the third by using the expression \emph{semipositive}. For instance, we recall that the states of a $d$--dimensional quantum system are represented by positive matrices, either strictly positive or semipositive.

For what concerns the operations between linear subspaces $\mathcal{H}_1,\mathcal{H}_2$, we shall keep the simple notation $\mathcal{H}_1+\mathcal{H}_2$ for the sum, while deserving the notation $\mathcal{H}_1\oplus\mathcal{H}_2$ for \emph{orthogonal} sums, i.e. sums in which we want to specify that the addends are indeed orthogonal. Analogously, $X\oplus Y$ will indicate the direct sum of two operators acting on \emph{orthogonal} spaces. 

Now, let us fix the notation concerning the superoperators. There are various sets of interesting linear maps acting on states of a quantum system (or more generally, on square matrices). We will use a set of convenient abbreviations to indicate them, all written in bold types. In general, block capital letters indicate convex sets and italic capital letters sets defined by nonlinear equations, while small letters impose further linear constraints. Moreover, a subscript $d$ can be added if necessary to specify the dimension of the system (or the size of the square matrices) on which the maps are acting. In our conventions, the letters $\mathbf{P}$, $\mathbf{CP}$, $\mathbf{EB}$, $\boldsymbol{\mathcal{U}}$ denote the sets of positive, completely positive, entanglement--breaking or unitary channels, respectively. The small letters $\mathbf{t}$ and $\mathbf{u}$ are used to specify the trace--preserving or the unital conditions. Thus, with the above rules $\mathbf{CPt}_2$ will be the set of completely positive, trace--preserving qubit channels, while $\boldsymbol{\mathcal{U}}_3$ will denote the set of conjugations by $3\times 3$ unitary matrices on states of a 3--dimensional system. Recall that the matrix transposition, denoted by $T$, is a positive but not completely positive map, that is $T\in\mathbf{Ptu}-\mathbf{CPtu}$. The partial tranpose of a bipartite state $\rho$ only with respect to the second subsystem will be usually denoted by $\rho^{T_B}$.
There is a very natural operation we can define between quantum channels (or more generally between linear maps acting on matrices), i.e. their \emph{composition}. It consists of the consecutive application of two channels (linear maps) $\psi$ and $\phi$. As usual in linear algebra, the simple juxtaposition $\phi\psi$ of the symbols denotes the consecutive application of $\psi$ \emph{firstly}, and of $\phi$ \emph{secondly}. In the same way, $\phi^n$ will indicate the $n$--fold composition of $phi$ with itself. 

We adopt the standard notation $\text{rk}\, \mathcal{L}$ to indicate the rank (i.e. the dimension of the image) of a linear operator (e.g. a square matrix) and $\ker \mathcal{L}$ to denote its kernel (i.e. the linear subspace of vectors $x$ such that $\mathcal{L}x=0$).
Another standard notation is $\sigma(\mathcal{L})$ for the spectrum and $s(\mathcal{L})$ for the set of singular values of $\mathcal{L}$. Naturally, $s_i^\downarrow(\mathcal{L})$ refers to the $i$th greatest singular value of $\mathcal{L}$. Actually, they are understood to be \emph{multisets} rather than simple sets. In a multiset each element can be repeated a number of times equal to its multiplicity. We denote by $a_\mathcal{L}(\lambda)$ and $g_\mathcal{L}(\lambda)$ the algebraic and geometric multiplicities of the eigenvalue $\lambda\in\sigma(\mathcal{L})$, respectively. The definition of singular values requires of course the presence of a positive (semi)definite scalar or hermitian product on the vector space. If the vector space we are dealing with is $\mathds{R}^n$ or $\mathds{C}^n$ we will always choose the standard scalar or hermitian product. Instead, we will regard to the set of square matrices in every dimension as equipped with the Hilbert--Schmidt positive definite hermitian product $\text{Tr}[X^\dag Y]$. Observe that the hermitian conjugation $\phi\rightarrow\phi^\dag$ naturally induced on the superoperators is nothing but the Heisenberg representation of their action, and that $\phi\in\mathbf{Pt}\Leftrightarrow\phi^\dag\in\mathbf{Pu}$.

The Schatten norms of index $1\leq p\leq \infty$ are indicated with $\| \cdot \|_p$ and defined by
\begin{equation} \|\mathcal{L}\|_p\,\equiv\, \left(\ \text{Tr}[(\mathcal{L}^\dag \mathcal{L})^{p/2}]\ \right)^{1/p} \ . \label{Schatten} \end{equation}
One has
\begin{equation} \|\mathcal{L}\|_p\ =\ \left( \sum_i s_i^p(\mathcal{L})\right)^{1/p}\ . \label{Schatten sv} \end{equation}
Observe that $\|\cdot\|_2$ is precisely the norm induced by the Hilbert--Schmidt product. The $p=1$ Schatten norm is known also as trace norm. Furthermore, the natural generalization to the $p=\infty$ case imposes
\begin{equation} \|\mathcal{L}\|_\infty\equiv s_1^\downarrow(\mathcal{L})\ . \label{Schatten infty} \end{equation}

In what follows, $\Ket{\varepsilon}=\frac{1}{\sqrt{d}}\ \sum_{i=1}^d \Ket{i}\otimes\Ket{i}$ will always denote the maximally entangled state of a bipartite system $SS'$, with \mbox{$\dim\mathcal{H}_S=\dim\mathcal{H}_{S'}=d$}. The Choi state associated with a linear map $\phi$ acting on $d\times d$ square matrices is by definition
\begin{equation} R_\phi\equiv(\phi\otimes I)(\Ket{\varepsilon}\!\!\Bra{\varepsilon})\ . \label{Choi state} \end{equation}
Finally, we will indicate with $\mathcal{S}_{AB}$ the set of separable density matrices on a bipartite system $AB$ (the subscript can be removed if there is no ambiguity).

\subsection{Kadison--Schwarz inequality}
An important classical inequality that will turn out to be very important is called \emph{Kadison's inequality} (after its discoverer). Its stronger form which exploits the complete positivity assumption is known as \emph{Schwarz inequality}. For the original proof of the Kadison's inequality we refer to~\cite{Kadison}; otherwise, a more intuitive argument can be found in~\cite{Kadison Woronowicz}. The text~\cite{Bhatia} provides a comprehensive reference on the subject. \\

\begin{thm}[Kadison--Schwarz Inequality] \label{Kadison} $\\$
Let $\zeta\in\mathbf{Pu}_d$ be a positive unital map. Then
\begin{equation} \forall\ X=X^\dag\in\mathcal{H}(d;\mathds{C})\ ,\qquad \zeta(X)^2 \leq \zeta(X^2)\ .\label{Kadison P} \end{equation}
Moreover, if $\chi\in\mathbf{CPu}_d$ is completely positive and unital, then
\begin{equation} \forall\ X\in\mathcal{M}(d;\mathds{C})\ ,\qquad \zeta(X)^\dag\zeta(X)\, \leq\, \zeta(X^\dag X)\ . \label{Kadison CP} \end{equation}
\end{thm}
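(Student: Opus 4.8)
The plan is to establish the two inequalities by separate routes, because \eqref{Kadison CP} may exploit the full strength of complete positivity whereas \eqref{Kadison P} has only plain positivity at its disposal; the latter is where the real difficulty lies.

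I would dispatch the Schwarz inequality \eqref{Kadison CP} first. Denoting the completely positive unital map by $\chi$ (so the claim reads $\chi(X)^\dagger\chi(X)\leq\chi(X^\dagger X)$), recall that a positive map is automatically Hermiticity preserving, so $\chi(X^\dagger)=\chi(X)^\dagger$. The key observation is that the $2d\times 2d$ block matrix
\[ \begin{pmatrix} X^\dagger X & X^\dagger \\ X & I \end{pmatrix}=\begin{pmatrix} X^\dagger \\ I \end{pmatrix}\begin{pmatrix} X & I \end{pmatrix}\geq 0 \]
is manifestly positive. Applying the ampliation $\chi\otimes\mathrm{id}$ (with $\mathrm{id}$ the identity on $2\times 2$ matrices), which is positive because complete positivity entails $2$--positivity, and inserting $\chi(I)=I$, one finds
\[ \begin{pmatrix} \chi(X^\dagger X) & \chi(X)^\dagger \\ \chi(X) & I \end{pmatrix}\geq 0 . \]
Taking the Schur complement with respect to the invertible lower--right block $I$ then gives $\chi(X^\dagger X)-\chi(X)^\dagger\chi(X)\geq 0$, which is \eqref{Kadison CP}. (Alternatively, feeding in the Stinespring representation $\chi(Y)=V^\dagger\pi(Y)V$, with $V$ an isometry forced by unitality and $\pi$ a $*$--representation, the difference becomes $V^\dagger\pi(X)^\dagger(I-VV^\dagger)\pi(X)V\geq 0$, since $VV^\dagger$ is an orthogonal projection.) I expect this step to be routine.

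The inequality \eqref{Kadison P} is the genuine obstacle, since for a merely positive map the block--matrix trick is unavailable: $\zeta\otimes\mathrm{id}$ need no longer be positive. My route would be to diagonalise the Hermitian argument, $X=\sum_i\lambda_i P_i$ with orthogonal spectral projectors $P_i$ summing to $I$, and to set $Q_i\equiv\zeta(P_i)$. Positivity gives $Q_i\geq 0$ and unitality gives $\sum_i Q_i=\zeta(I)=I$, while $\zeta(X)=\sum_i\lambda_i Q_i$ and $\zeta(X^2)=\sum_i\lambda_i^2 Q_i$. The decisive step is then the algebraic identity
\[ \zeta(X^2)-\zeta(X)^2=\sum_i (\lambda_i I-\zeta(X))\,Q_i\,(\lambda_i I-\zeta(X)) , \]
which one checks by expanding the right--hand side and repeatedly using $\sum_i Q_i=I$ and $\sum_i\lambda_i Q_i=\zeta(X)$. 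Each summand has the form $M_iM_i^\dagger$ with $M_i\equiv(\lambda_i I-\zeta(X))Q_i^{1/2}$, hence is positive, and so is the sum; this is precisely \eqref{Kadison P}.

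The point I would stress is why positivity alone suffices here despite appearances. The naive attempt, fixing a unit vector $\xi$ and applying the scalar variance inequality to the state $A\mapsto\langle\xi|\zeta(A)|\xi\rangle$, yields only $\langle\xi|\zeta(X^2)|\xi\rangle\geq\langle\xi|\zeta(X)|\xi\rangle^2$; but the quantity one must dominate is $\langle\xi|\zeta(X)^2|\xi\rangle=\|\zeta(X)\xi\|^2\geq\langle\xi|\zeta(X)|\xi\rangle^2$, so the state inequality points the wrong way and is too weak. It is the operator identity above, crucially anchored by the unital condition $\sum_i Q_i=I$, that upgrades the scalar statement to the operator one. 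Finally, I would remark that this Hermitian argument does not extend to arbitrary $X$: the general case of \eqref{Kadison CP} really does require the multiplicativity $\pi(X^\dagger X)=\pi(X)^\dagger\pi(X)$ supplied by complete positivity, which is exactly why non--Hermitian arguments are covered only under the stronger hypothesis.
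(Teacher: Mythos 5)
Your proof is correct in both parts, but there is nothing in the paper to compare it against: the paper states Theorem~\ref{Kadison} as a recalled classical fact and defers all proofs to the cited literature (Kadison's original paper, Woronowicz, and Bhatia's monograph), so your blind attempt supplies an argument the paper deliberately omits. What you give is essentially the standard self-contained treatment, and it matches the arguments found in those references. For \eqref{Kadison CP}: the factorization
\begin{equation*} \begin{pmatrix} X^\dagger X & X^\dagger \\ X & \mathds{1} \end{pmatrix} \, =\, \begin{pmatrix} X^\dagger \\ \mathds{1} \end{pmatrix}\begin{pmatrix} X & \mathds{1} \end{pmatrix}\, \geq\, 0\, , \end{equation*}
the $2$--positivity implied by complete positivity, unitality, and the Schur--complement criterion (the same Horn--Johnson criterion the paper itself invokes in the proof of Lemma~\ref{ZZ}) together give the claim; your Stinespring variant is equally valid. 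For \eqref{Kadison P}: the spectral decomposition $X=\sum_i\lambda_i P_i$ and the identity $\zeta(X^2)-\zeta(X)^2=\sum_i(\lambda_i\mathds{1}-\zeta(X))\,\zeta(P_i)\,(\lambda_i\mathds{1}-\zeta(X))$ do expand correctly using only $\sum_i\zeta(P_i)=\mathds{1}$ and $\sum_i\lambda_i\zeta(P_i)=\zeta(X)$, and each summand is $M_iM_i^\dagger\geq 0$ since $\lambda_i\mathds{1}-\zeta(X)$ is hermitian and $\zeta(P_i)\geq 0$; this is precisely where positivity (applied to the projectors) and unitality enter, and it is essentially Bhatia's proof. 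Your closing diagnosis of why the scalar (state) inequality is too weak is also accurate. One small quibble: the Schwarz inequality for non--hermitian arguments needs only $2$--positivity, not the full strength of complete positivity (the transpose map shows that mere positivity is insufficient), so ``really does require complete positivity'' is a slight overstatement; it does not affect the correctness of anything you proved.
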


\subsection{Characterizing unitary evolutions}

A peculiar class of quantum channels is formed by the unitary evolutions, acting as $\,\mathcal{U}(\cdot)=U(\cdot)U^\dag$ for some unitary matrix $U$. It is very useful for what follows to have several less direct characterizations of the unitary channels, that is, several \emph{sufficient} conditions to claim that a given channel is indeed unitary. What this kind of results have in common is that for their proof a deep fact known as Wigner's theorem turns out to be fundamental. For the original work we refer the reader to~\cite{Wigneroriginal}, p. 251--254. Otherwise, a direct and mathematically clear proof can be found in~\cite{Wignerdirectproof}.

\begin{thm}[Wigner's Theorem] \label{Wigner} $\\$
Let $T:\mathcal{H}\rightarrow\mathcal{H}$ be a (not necessarily linear) operator on a (not necessarily finite--dimensional) Hilbert space $\mathcal{H}$. Suppose that
\begin{equation} \left| \Braket{T(x)|T(y)} \right|\,\equiv\,\left| \Braket{x|y}\right|\quad\forall\, x,y\in\mathcal{H}\, . \label{Wigner hyp}\end{equation}
Then there exists a real function $\varphi:\mathcal{H}\rightarrow\mathds{R}$ such that
\begin{equation} T(x)\,\equiv\, e^{i\varphi(x)}Vx\, , \label{Wigner thesis}\end{equation}
where $V:\mathcal{H}\rightarrow\mathcal{H}$ is an isometry or an anti--isometry. In particular, if $\mathcal{H}$ is finite--dimensional then $V$ is unitary or anti--unitary.
\end{thm}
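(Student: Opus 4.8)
The plan is to deduce from the modulus-preserving hypothesis \eqref{Wigner hyp} a rigid alternative between linear and antilinear behaviour by working in a fixed orthonormal basis. First I would note that setting $y=x$ in \eqref{Wigner hyp} gives $\|T(x)\|=\|x\|$, so $T$ preserves norms and, since it preserves $|\langle x|y\rangle|$, it also preserves orthogonality. Fixing an orthonormal basis $\{e_k\}$ I set $f_k \equiv T(e_k)$; these are then orthonormal, and in the finite-dimensional case they span the whole space. For an arbitrary $x=\sum_k c_k e_k$ I write $T(x)=\sum_k d_k f_k$ and apply \eqref{Wigner hyp} to the pair $(e_k,x)$, obtaining $|d_k|=|\langle T(e_k)|T(x)\rangle|=|\langle e_k|x\rangle|=|c_k|$. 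Thus the moduli of all coordinates are automatically reproduced, and the entire content of the theorem lies in controlling their phases.

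Next I would recover the relative phases by probing $T$ with two-term reference vectors. Applying \eqref{Wigner hyp} to the pairs $\big(\tfrac{1}{\sqrt2}(e_1+e_k),\,x\big)$ and $\big(\tfrac{1}{\sqrt2}(e_1+i e_k),\,x\big)$, and using that the images of these reference vectors are again supported on $f_1,f_k$ with coordinate moduli $1/\sqrt2$, I obtain equations matching both the real and the imaginary part of $\bar c_1 c_k$ to those of a phase times $\bar d_1 d_k$. After absorbing one global phase into the function $\varphi(x)$ and making an appropriate gauge choice for the phases of the $f_k$, these relations force, for each $k$ separately, either $\arg(\bar d_1 d_k)=\arg(\bar c_1 c_k)$ or $\arg(\bar d_1 d_k)=-\arg(\bar c_1 c_k)$; that is, each coordinate is reproduced either as $c_k$ or as its complex conjugate $\bar c_k$.

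The hard part, where the theorem truly has content, is to show that the choice of sign cannot depend on $k$: a vector cannot be transformed linearly in some coordinates and antilinearly in others. To exclude such mixing I would test $T$ on three-term superpositions such as $e_j+e_k+e_l$, whose pairwise overlaps couple the three relative phases simultaneously; the resulting system of phase equations is consistent only if a single global alternative (linear or antilinear) is selected. This three-term phase bookkeeping is the step I expect to be the main obstacle, since it is precisely here that the seemingly independent per-coordinate freedom collapses into the rigid dichotomy. Once the alternative is settled globally, I define $V$ by $V e_k = f_k$ and extend it linearly in the first case and antilinearly in the second; by construction $T(x)=e^{i\varphi(x)}Vx$, the map $V$ is an isometry or an anti-isometry, and in finite dimension it is unitary or anti-unitary, as claimed.
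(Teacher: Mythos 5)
First, a point of comparison: the paper does not prove Theorem~\ref{Wigner} at all --- it imports it from the literature (Wigner's original book and the Sharma--Almeida paper), so your sketch can only be measured against the standard Bargmann-style argument, whose skeleton (orthonormal images $f_k=T(e_k)$, moduli of coordinates via probes against $e_k$, two-term probes $\tfrac{1}{\sqrt2}(e_1+e_k)$, $\tfrac{1}{\sqrt2}(e_1+ie_k)$ plus gauge fixing, leading to the per-coordinate dichotomy $d_k\in\{c_k,\bar c_k\}$) you reproduce correctly. However, the proposal has a genuine gap, and it is not merely the deferred ``three-term phase bookkeeping'' that you flag yourself.

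The gap is that your three-term probe $e_j+e_k+e_l$ can only establish uniformity of the conjugation choice \emph{across the coordinates of a single image} $T(x)$; it says nothing about uniformity \emph{across different vectors}. After that step you would know that each $T(x)$ equals, up to a phase, either the ``linear copy'' $\sum_k c_k f_k$ or the ``antilinear copy'' $\sum_k \bar c_k f_k$ of $x$, but a map which picks the first option for some vectors and the second for others is not of the form $e^{i\varphi(x)}Vx$ for a single isometry or anti-isometry $V$, so the theorem is not yet proved. Ruling this mixing out requires one more use of \eqref{Wigner hyp}, applied to a \emph{pair} of vectors: if $T(x)\sim\sum_k c_k f_k$ while $T(y)\sim\sum_k \bar b_k f_k$, the hypothesis forces $\left|\sum_k \bar c_k b_k\right| = \left|\sum_k \bar c_k \bar b_k\right|$, which fails, e.g., for $x=e_1+ie_2$, $y=e_1+(1+i)e_2$ (the two sides are $\sqrt 5$ and $1$); hence all vectors with non-real coordinate ratios must select the same alternative, and vectors with real ratios are compatible with either. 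This cross-vector consistency step is a classical pitfall in proofs of Wigner's theorem and is absent from your plan. Two smaller repairs: in infinite dimension the expansion $T(x)=\sum_k d_k f_k$ is not automatic, since the $f_k$ need not be complete --- you need Parseval, $\sum_k|\langle f_k|T(x)\rangle|^2=\sum_k|\langle e_k|x\rangle|^2=\|x\|^2=\|T(x)\|^2$, to place $T(x)$ in the closed span of the $f_k$ (this is also why $V$ ends up only an isometry rather than unitary); and your probes against $e_1$ give no information on vectors with $c_1=0$, so the reference index must be allowed to vary.
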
 

Now, let us present the main result about the alternative characterizations of unitary evolutions. Its proof can be found through Chap. 6 of~\cite{WolfQC}.

\begin{thm}[Alternative Characterizations of Unitary Channels] \label{unitary} $\\$
Let $\phi\in\mathbf{CPt}$ be a quantum channel. Then the following are equivalent:
\begin{enumerate}
\item $\phi$ is a unitary evolution;
\item $\det\phi=\pm 1$, where the determinant is defined as the product of the eigenvalues (see Subsection~\ref{spectrum});
\item there exists the inverse $\phi^{-1}$ of $\phi$, and it is again a quantum channel;
\item $\phi$ maps pure states into pure states, and does not have the form \mbox{$\phi(X)=\Ket{\alpha}\!\!\Bra{\alpha}\,\text{\emph{Tr}}\,X$} for some fixed pure state $\Ket{\alpha}$.
\end{enumerate}
\end{thm}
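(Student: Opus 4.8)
The plan is to prove the four conditions equivalent through the implications $(1)\Rightarrow(2)$, $(1)\Rightarrow(3)$, $(1)\Rightarrow(4)$ (necessity), together with $(2)\Rightarrow(1)$, $(3)\Rightarrow(4)$ and $(4)\Rightarrow(1)$ (sufficiency); since $(1)$ implies each condition and each of $(2),(3),(4)$ routes back to $(1)$, the equivalence follows. The necessity part is the easy one. If $\phi(\cdot)=U(\cdot)U^\dag$ with $U=\sum_k e^{i\theta_k}\Ket{k}\!\!\Bra{k}$ unitary, then $\phi$ acts on the operator basis $\Ket{k}\!\!\Bra{l}$ diagonally with eigenvalue $e^{i(\theta_k-\theta_l)}$, and the product of all these phases is $1$, so $\det\phi=1$, giving $(2)$. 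The map $X\mapsto U^\dag X U$ is again a unitary conjugation, hence a channel, giving $(3)$; and a conjugation by $U$ manifestly sends pure states to pure states without collapsing the whole state space to a point, giving $(4)$.

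For $(3)\Rightarrow(4)$ I would argue at the level of the convex geometry of the state space. A channel maps density matrices to density matrices, and if $\phi^{-1}$ is also a channel then $\phi$ restricts to an affine bijection of the (compact, convex) set of states onto itself. An affine bijection carries extreme points to extreme points, and the extreme points of the state space are exactly the pure states, so $\phi$ preserves purity. Finally $\phi$ cannot be of the form $\phi(X)=\Ket{\alpha}\!\!\Bra{\alpha}\,\Tr X$, since that map has one--dimensional range and is therefore not injective, contradicting invertibility; this establishes $(4)$. For $(2)\Rightarrow(1)$ I would use the spectral picture: every eigenvalue of a channel obeys $|\lambda|\le 1$ because channels are contractions in trace norm, and since $\det\phi$ is the product of the eigenvalues, the hypothesis $|\det\phi|=1$ forces $|\lambda|=1$ for all of them. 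Thus the whole spectrum is peripheral, and the structure theory of the peripheral spectrum (reviewed below in Sec.~\ref{sec advanced}, cf.\ Chap.~6 of~\cite{WolfQC}) shows that $\phi$ is then a trace-- and $\ast$--preserving automorphism of $\mathcal{M}(d;\mathds{C})$, which by the classification of such automorphisms must be a unitary conjugation.

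The substantive step is $(4)\Rightarrow(1)$, and here Wigner's theorem (Theorem~\ref{Wigner}) is the essential tool. Writing $\phi(X)=\sum_a K_a X K_a^\dag$ in Kraus form, the hypothesis that every pure input $\Ket{\psi}\!\!\Bra{\psi}$ has pure image means that the vectors $K_a\Ket{\psi}$ are all parallel, i.e.\ $K_a\Ket{\psi}=\beta_a(\psi)\Ket{T\psi}$, so the Stinespring isometry $W\Ket{\psi}=\sum_a K_a\Ket{\psi}\otimes\Ket{a}$ maps every input to a product vector $\Ket{T\psi}\otimes\Ket{\beta(\psi)}$. Using trace preservation $\sum_a K_a^\dag K_a=I$ one gets the factorization $\Braket{\chi|\psi}=\Braket{\beta(\chi)|\beta(\psi)}\,\Braket{T\chi|T\psi}$, whence $|\Braket{T\chi|T\psi}|\ge|\Braket{\chi|\psi}|$, with equality exactly when the environment factors $\Ket{\beta(\cdot)}$ are aligned. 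The role of the non--collapse hypothesis is to exclude the opposite alternative, in which the system factor $\Ket{T\psi}$ is constant and $\phi$ degenerates to $\phi(X)=\Ket{\alpha}\!\!\Bra{\alpha}\,\Tr X$; ruling this out forces the $\Ket{\beta(\cdot)}$ to be projectively constant, so that $T$ preserves all transition probabilities, $|\Braket{T\chi|T\psi}|=|\Braket{\chi|\psi}|$.

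Wigner's theorem then yields that $T$ is unitary or anti--unitary, and complete positivity excludes the anti--unitary case (it would amount to composing with a transposition, which is positive but not completely positive), leaving $\phi(\cdot)=U(\cdot)U^\dag$ as required. I expect the main obstacle to be precisely the rigidity dichotomy underlying the reduction to Wigner: one must show that the product--vector condition on $W$ admits no intermediate, block--mixed solution interpolating between the unitary branch and the collapse branch, so that discarding the collapse map genuinely delivers exact overlap preservation. Establishing this clean alternative, rather than verifying the two extreme cases, is where the real work lies.
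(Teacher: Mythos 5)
Your proposal cannot be checked against an in-paper argument, because the paper never proves Theorem~\ref{unitary} --- it defers entirely to Chap.~6 of~\cite{WolfQC} --- so it must stand on its own. Its skeleton is sound, and the implications $(1)\Rightarrow(2),(3),(4)$ and $(3)\Rightarrow(4)$ (an affine bijection of the state space preserves extreme points; the collapse map is not injective) are correct as written. The problem is that the one implication carrying all the weight, $(4)\Rightarrow(1)$, is not actually proved: you reduce it to the dichotomy that a linear isometry $W$ all of whose outputs are product vectors must have either a projectively constant system factor (the collapse branch) or a projectively constant environment factor (the Wigner branch), and then you explicitly leave that dichotomy unestablished (``where the real work lies''). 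Without it, discarding the collapse map gives you nothing --- in particular it does not yield $|\Braket{T\chi|T\psi}|=|\Braket{\chi|\psi}|$, so Theorem~\ref{Wigner} cannot be invoked and the proof is incomplete at its crux. The gap is fillable by an elementary Schmidt-rank argument, which you should supply: for linearly independent $\Ket{\psi},\Ket{\chi}$, linearity forces $W(\Ket{\psi}+\Ket{\chi})=\Ket{T\psi}\otimes\Ket{\beta(\psi)}+\Ket{T\chi}\otimes\Ket{\beta(\chi)}$ to be a product vector, which holds iff $\Ket{T\psi}\parallel\Ket{T\chi}$ or $\Ket{\beta(\psi)}\parallel\Ket{\beta(\chi)}$; and if some pair has non-parallel $T$-factors (hence parallel $\beta$-factors), then any third vector must have its $\beta$-factor parallel to both of theirs, since it cannot have its $T$-factor parallel to both. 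Hence either \emph{all} $T$-factors coincide projectively (collapse) or \emph{all} $\beta$-factors do (exact overlap preservation), which is precisely the missing lemma.

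A second, lesser defect: your route for $(2)\Rightarrow(1)$ is circular as written inside this paper. You appeal to the structure theory of the peripheral spectrum (Theorem~\ref{phase sub}), but the paper's proof of that theorem itself invokes Theorem~\ref{unitary}, to identify the bijective pure-state-preserving blocks as unitary conjugations. The circle can be broken (the structure theorem only needs your Wigner-based $(4)\Rightarrow(1)$, which is independent), but it is cheaper to avoid it altogether: if $\det\phi=\pm 1$ then, since every eigenvalue satisfies $|\lambda|\leq 1$ (Theorem~\ref{spect prop Pt}), all $d^2$ eigenvalues are unimodular with trivial Jordan blocks, so in the decomposition~\eqref{Jordan} one has $E_\phi=\sum_k P_k=\mathds{1}$, while item 5 of Theorem~\ref{spect prop Pt} guarantees that the map $I_\phi$ of~\eqref{I phi} is a channel satisfying $\phi\, I_\phi=E_\phi=\mathds{1}$. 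Thus $\phi^{-1}=I_\phi$ is a channel, i.e.\ condition $(3)$ holds, and your own chain $(3)\Rightarrow(4)\Rightarrow(1)$ --- once the lemma above is supplied --- closes the equivalence without any structure theory.
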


A few comments are appropriate. In Subsection~\ref{spectrum} we will see that $|\det\phi|\leq 1$, so that the unitary channels are the ones and the only ones exhibiting the \emph{extremal} value of the modulus of the determinant. The third condition of Theorem~\ref{unitary} is usually interpreted as the \emph{irreversibility} of the evolution of an open quantum system. In Section~\ref{sec UEP} we will see another remarkable characterization of the unitary channels.

\subsection{Entanglement--breaking channels}\label{sec eb}

A class of quantum channels that will play a central role in what follows is composed of the so--called \emph{entanglement--breaking channels}~\cite{HorodeckiShorRuskai}. Recall that a quantum channel $\phi\in\mathbf{CPt}$ acting on a system $A$ is called entanglement--breaking (and we will write $\phi\in\mathbf{EBt}$) if for each system $B$ and for each global input state $\rho_{AB}$ of $AB$, the output $(\phi\otimes I)(\rho_{AB})$ is separable. Remarkably,
\begin{equation} \phi\in\mathbf{EB}\,,\ \psi\in\mathbf{CP}\quad\Rightarrow\quad \phi\psi,\,\psi\phi\in\mathbf{EB}\ . \label{EB propagates}\end{equation}
Moreover, it turns out that also $\mathbf{EB}$ (just like $\mathbf{P}$ or $\mathbf{CP}$) is a closed convex set which is in addition closed under the operation of taking the hermitian adjoint:
\begin{equation} \phi\in\mathbf{EB}\quad\Leftrightarrow\quad\phi^\dag\in\mathbf{EB}\ . \label{EB adjoint} \end{equation}
This equivalence follows also from Theorem~\ref{str thm EB}, which we state in a moment.

The problem of the operational characterization of the EB class has been solved in~\cite{HorodeckiShorRuskai}. This solution is the content of the following theorem.

\begin{thm}[Structure Theorem for EB Channels] \label{str thm EB} $\\$
Let $\phi\in\mathbf{EBt}$ be a quantum channel. Then the following facts are equivalent:
\begin{enumerate}
\item $\phi$ is entanglement--breaking.
\item The associated Choi state $R_\phi$ (see~\eqref{Choi state}) is separable.
\item $\phi$ can be written in the \emph{Holevo form} introduced in~\cite{Holevoform}, i.e. there are a (finite) set of density matrices $\{ \rho_i \}$ and positive operators $\{ E_i \}$ satisfying the sum rule $\sum_i E_i=\mathds{1}$, such that
\begin{equation} \phi(X)\ \equiv\ \sum_i \rho_i\, \text{\emph{Tr}}[E_i X] \qquad \forall\ X\in\mathcal{M}(d;\mathds{C})\ . \label{Holevo form} \end{equation}
\end{enumerate}
\end{thm}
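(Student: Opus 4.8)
The plan is to establish the cyclic chain of implications $(1)\Rightarrow(2)\Rightarrow(3)\Rightarrow(1)$, which is the most economical route since each single step is far easier than proving a direct equivalence. The first implication $(1)\Rightarrow(2)$ is essentially immediate from the definition recalled in Subsection~\ref{sec eb}: an entanglement--breaking $\phi$ sends \emph{every} bipartite input to a separable output, so it suffices to choose the ancilla $B=S'$ and feed in the maximally entangled state $\Ket{\varepsilon}\!\!\Bra{\varepsilon}$ of $SS'$. The corresponding output is by definition the Choi state $R_\phi$ of~\eqref{Choi state}, which must therefore be separable.

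For $(2)\Rightarrow(3)$ I would invert the Choi--Jamio\l kowski correspondence. The key tool is the reconstruction identity $\phi(X)=d\,\Tr_{S'}\!\big[(I\otimes X^{T})\,R_\phi\big]$, which one checks directly on the matrix units $\Ket{i}\!\!\Bra{j}$ and extends by linearity in $X$. Assuming $R_\phi$ separable, write it as a convex mixture of product states, $R_\phi=\sum_k p_k\,\sigma_k\otimes\tau_k$, with $\sigma_k,\tau_k\geq 0$ of unit trace and $p_k\geq 0$ summing to one (recall $R_\phi$ is itself a density matrix since $\phi$ is trace--preserving). Substituting into the reconstruction identity and using $\Tr[X^{T}\tau_k]=\Tr[\tau_k^{T}X]$ yields precisely the Holevo form~\eqref{Holevo form} under the identifications $\rho_k:=\sigma_k$ and $E_k:=d\,p_k\,\tau_k^{T}$; note $E_k\geq 0$ because the transpose preserves positivity. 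It then remains only to verify the resolution of the identity $\sum_k E_k=\mathds{1}$: this is where trace--preservation, i.e. $\phi\in\mathbf{EBt}$ rather than merely $\mathbf{EB}$, enters, since imposing $\Tr[\phi(X)]=\Tr[X]$ for all $X$ in the Holevo form forces $\Tr\big[(\sum_k E_k)X\big]=\Tr[X]$, hence $\sum_k E_k=\mathds{1}$.

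The closing implication $(3)\Rightarrow(1)$ is a short direct computation. Given $\phi$ in Holevo form and an arbitrary bipartite input $\rho_{AB}$, expanding the action of $\phi$ on the $A$ factor gives $(\phi\otimes I)(\rho_{AB})=\sum_k \rho_k\otimes\Tr_A\!\big[(E_k\otimes I)\,\rho_{AB}\big]$. Each operator $\Tr_A[(E_k\otimes I)\rho_{AB}]$ is positive: testing it against an arbitrary $\Ket{v}$ on $B$ turns it into $\Tr\big[(E_k\otimes\Ket{v}\!\!\Bra{v})\,\rho_{AB}\big]\geq 0$, a trace of the product of two positive operators. Hence the output is a sum of (subnormalized) product states and is therefore separable. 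Since $\rho_{AB}$ was arbitrary, $\phi$ is entanglement--breaking, closing the loop.

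I expect the only genuinely delicate point to be the transposition bookkeeping in $(2)\Rightarrow(3)$: one must fix a consistent convention for which tensor factor $\phi$ acts on and carefully track the transpose introduced by the reconstruction identity, so that it is the $E_k$ (and not the $\rho_k$) that inherit positivity and sum to $\mathds{1}$. Everything else is routine. It is worth emphasising the structural lesson encoded in the cycle: although property $(1)$ quantifies over \emph{all} ancillas and inputs, the step $(1)\Rightarrow(2)$ exploits only the single maximally entangled input, and the return path $(2)\Rightarrow(3)\Rightarrow(1)$ shows that this one input already constitutes a complete witness for the entanglement--breaking property --- precisely the power of the Choi isomorphism.
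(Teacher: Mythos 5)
Your proof is correct, but there is nothing in the paper to compare it against: the paper states Theorem~\ref{str thm EB} without proof, citing~\cite{HorodeckiShorRuskai} for the result. Your cyclic chain $(1)\Rightarrow(2)\Rightarrow(3)\Rightarrow(1)$ is essentially the standard argument from that reference, and each step checks out: the reconstruction identity $\phi(X)=d\,\Tr_{S'}[(I\otimes X^{T})R_\phi]$ is verified correctly on matrix units, the identifications $\rho_k=\sigma_k$, $E_k=d\,p_k\,\tau_k^{T}$ are sound (the transpose bookkeeping you flag as delicate is handled correctly, and positivity indeed lands on the $E_k$), and the positivity test $\bra{v}\Tr_A[(E_k\otimes I)\rho_{AB}]\ket{v}=\Tr[(E_k\otimes\Ket{v}\!\!\Bra{v})\rho_{AB}]\geq 0$ closes the loop. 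Two minor points you could make explicit: (i) the finiteness of the separable decomposition $R_\phi=\sum_k p_k\,\sigma_k\otimes\tau_k$ requires Carath\'eodory's theorem (the separable set in finite dimension is the convex hull of product states, so finitely many terms suffice), which matters since the Holevo form in~\eqref{Holevo form} demands a finite set; (ii) the paper's remark immediately after the theorem --- that the $\rho_i$ may be taken pure and the $E_i$ rank--one, ``a by--product of the proof'' --- is recovered from your construction simply by spectrally decomposing each $\sigma_k$ and $\tau_k$ before making the identifications, which is worth one sentence to tie your argument to that claim.
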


Observe that in the Holevo form~\eqref{Holevo form} we can freely suppose that the $\rho_i$ are pure states and that the $E_i$ are (positive) multiples of pure states. This can be seen directly by diagonalizing both operators, and it is also a by--product of the proof. Equation~\eqref{Holevo form} provides exactly the operative interpretation we were looking for. Indeed, it states that the entanglement--breaking channels are exactly those channel which can be implemented by a measurement process (POVM) followed by a re--preparation of the system.

\subsection{Qubit channels}

The qubit case, i.e. the case in which our channels act on a $2$--dimensional system, deserves particular attention. Let us recap the main geometrical tools which become available in this particular framework. It is well--known that a qubit state can be written in the \emph{Bloch representation} as
\begin{equation} \rho=\frac{\mathds{1}+\vec{r}\cdot\vec{\sigma}}{2}\quad , \label{Bloch q} \end{equation}
where $\vec{\sigma}=(X,Y,Z)$ is simply the vector of Pauli matrices, and $|\vec{r}|\leq 1$. Observe that here the pure states are exactly those states $\rho$ whose associated vector $\vec{r}$ has unit modulus. This can be immediately seen by noting that the spectrum of $\mathds{1}+\vec{r}\cdot\vec{\sigma}$ is given by
\begin{equation} \sigma\,(\mathds{1}+\vec{r}\cdot\vec{\sigma})\, =\, \{\, 1+|\vec{r}|,\, 1-|\vec{r}| \, \}\, . \label{Pauli spectrum} \end{equation}
We can choose to represent a quantum qubit channel $\phi$ by means of its action on~\eqref{Bloch q}. This means that $\phi$ is completely specified once we assign the $3\times 3$ real matrix $M$ and the $3$--vector $c$ such that
\begin{equation} \phi\left(\frac{\mathds{1}+\vec{r}\cdot\vec{\sigma}}{2}\right) \ = \ \frac{\mathds{1}+(M\vec{r}+\vec{c})\cdot\vec{\sigma}}{2}\quad . \label{Bloch act} \end{equation}

In view of~\eqref{Bloch act}, we will sometimes indicate the channel $\phi$ with the notation $(M,c)$. Remind that in this picture the unitary evolution $\mathcal{U}(\cdot)=U(\cdot)U^\dag$, where $U=e^{-i\,\vec{\theta}\cdot\vec{\sigma}/2}$ is a SU(2) matrix, is represented by the counterclockwise rotation $R(\vec{\theta})$ of an angle $\theta$ around $\vec{\theta}/|\vec{\theta}|$.

Since in the (orthogonal) Pauli basis $\mathds{1},X,Y,Z$ the linear map $\phi$ is represented by $\left(\begin{smallmatrix} 1 & 0 \\ c & M \end{smallmatrix}\right)$, the spectrum of $\phi$ as a linear application (see Subsection~\ref{spectrum}) is simply given by
\begin{equation} \sigma(\phi)\, =\, \{\,1\,\}\, \cup\, \sigma(M)\, , \label{spectrum q} \end{equation}
and its determinant by
\begin{equation} \det\phi\, =\, \det M\, . \label{det q} \end{equation}

The Bloch representation~\eqref{Bloch act} of the qubit quantum channels allows us to find a useful \emph{canonical decomposition} for this special case. As pointed out firstly in~\cite{KingRuskai}, by applying unitary evolutions to the left and to the right of $\phi=(M,c)\in\mathbf{CPt}_2$ the best \emph{special singular value decomposition} we can achieve has the form $M=P L Q$, with $P,Q\in\text{SO}(3)$ and
\begin{multline} L\ =\ \begin{pmatrix} l_1(M) & 0 & 0 \\ 0 & l_2(M) & 0 \\ 0 & 0 & l_3(M) \end{pmatrix}\ \equiv\\
\equiv\ \begin{pmatrix} s_1(M) & 0 & 0 \\ 0 & s_2(M) & 0 \\ 0 & 0 & \text{sgn} \det (M)\ s_3(M) \end{pmatrix}\ . \label{L matrix} \end{multline}
Here the symbol $\text{sgn}$ denotes the \emph{sign function}, defined by
\begin{equation*} \text{sgn}\, x\, \equiv\, \left\{ \begin{array}{cr} +1 & \text{ if $x>0$}\\ 0 & \text{ if $x=0$}\\ -1 & \text{ if $x<0$} \end{array} \right.\ , \end{equation*}
and $s_i(M)$ indicates the $i$th singular value of $M$. Usually one can suppose $|s_3(M)|\leq s_1(M),s_2(M)$, so that $l_1(M),l_2(M)\geq 0$ and only $l_3(M)$, which has the lowest modulus, can be negative. These $l(M)$ are called \emph{special singular values} of the real $3\times 3$ matrix $M$. Once the decomposition $M=P L Q$ is obtained, we can define $t\equiv P^T c$ and write 
\begin{equation} \phi\ =\ (M,c)\ =\ P\ (L, t)\ Q\ =\ \mathcal{U}\ \Lambda\ \mathcal{V}\ . \label{q canonical form} \end{equation}
Here $\mathcal{U},\mathcal{V}$ are the unitary channels corresponding to $P,Q\in\text{SO}(3)$, and $\Lambda\equiv(L,t)$ is the \emph{canonical diagonal form} of $\phi$. Remarkably, since the unitary evolutions are one--to--one applications between density matrices, the positivity, the complete positivity and the entanglement--breaking conditions are not affected if one passes to the canonical diagonal form. That is, with the notations of~\eqref{q canonical form} we have
\begin{gather} \phi\in\mathbf{Pt}_2\quad\Leftrightarrow\quad\Lambda\in\mathbf{Pt}_2\ , \label{P q canonical form} \\ \phi\in\mathbf{CPt}_2\quad\Leftrightarrow\quad\Lambda\in\mathbf{CPt}_2\ , \label{CP q canonical form} \\
\phi\in\mathbf{EBt}_2\quad\Leftrightarrow\quad\Lambda\in\mathbf{EBt}_2\ . \label{EB q canonical form} \end{gather}

What can be said about the entanglement--breaking qubit channels? As a matter of fact, it turns out that the Bloch representation allows us to find new useful characterizations for the EB qubit channels. Let us recall the following theorem, which is obtained by joining together Theorem~1 and~2 of~\cite{RuskaiEBqubit}. As usual, $\rho^{T_B}$ denotes the partial transpose of the bipartite state $\rho$ with respect to the second subsystem.

\begin{thm}[EB Conditions for Qubit Channels] \label{EB q} $\\$
Let $\phi\in\mathbf{CPt}_2$ be a qubit channel. Then the following facts are equivalent:
\begin{enumerate}

\item $\phi$ is entanglement--breaking.

\item $R_\phi^{T_B}\geq 0$ .

\item $T\phi\in\mathbf{CPt}_2\ \text{or} \ \phi T\in\mathbf{CPt}_2$ ($\,T$ is the matrix transposition channel).

\item $\phi$ has the ``sign--change'' property that changing any $l_i\mapsto -l_i$ of the matrix $L$ defined in~\eqref{L matrix} and employed in the canonical diagonal decomposition~\eqref{q canonical form} yields another completely positive map.

\item $\|\,R_\phi\,\|_\infty\,\leq\,\frac{1}{2}$ .

\end{enumerate} 
\end{thm}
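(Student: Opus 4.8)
The plan is to single out condition~(2)---the positivity of the partial transpose $R_\phi^{T_B}\geq0$---as a central hub and to show that each of the remaining four statements is equivalent to it. The equivalence (1)$\Leftrightarrow$(2) is almost immediate: by the Structure Theorem~\ref{str thm EB}, $\phi\in\mathbf{EBt}_2$ if and only if its Choi state $R_\phi$ defined in~\eqref{Choi state} is separable, and in a $2\otimes2$ system the Peres--Horodecki criterion is not only necessary but also \emph{sufficient} for separability. Hence $R_\phi\in\mathcal S\Leftrightarrow R_\phi^{T_B}\geq0$, which is precisely~(2).

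For (2)$\Leftrightarrow$(3) I would rely on the elementary identity $(\mathds{1}\otimes T)\Ket{\varepsilon}\!\!\Bra{\varepsilon}=(T\otimes\mathds{1})\Ket{\varepsilon}\!\!\Bra{\varepsilon}$, expressing that on the maximally entangled state transposing either leg has the same effect. Since the partial transposition $T_B$ commutes, as a superoperator, with $\phi\otimes\mathds{1}$ (they act on distinct tensor factors), this gives $R_\phi^{T_B}=R_{\phi T}$ and, symmetrically, $R_\phi^{T_A}=R_{T\phi}$. Because a linear map is completely positive exactly when its Choi state is positive, and trace preservation is automatic for $\phi T$ and $T\phi$, we obtain $R_\phi^{T_B}\geq0\Leftrightarrow\phi T\in\mathbf{CPt}_2$ and $R_\phi^{T_A}\geq0\Leftrightarrow T\phi\in\mathbf{CPt}_2$. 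Finally $R_\phi^{T_A}$ and $R_\phi^{T_B}$ are related by a global transposition and hence isospectral, so the two alternatives in~(3) coincide with each other and with~(2).

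To reach~(4) I would pass to the canonical diagonal form $\phi=\mathcal U\Lambda\mathcal V$ of~\eqref{q canonical form}: by~\eqref{EB q canonical form} the entanglement--breaking character is untouched, while $\mathcal U,\mathcal V$ act on $R_\phi$ by local unitaries and therefore preserve both complete positivity and positivity of the partial transpose. In the Bloch picture the transposition channel $T$ is the reflection $\mathrm{diag}(1,-1,1)$, so post--composing with $T$ simply negates the middle special singular value $l_2$ of~\eqref{L matrix}; conjugating by the $\pi$--rotations of $\mathrm{SO}(3)$, which are unitary and preserve complete positivity, moves the negated entry to any prescribed slot. In this way the ``sign--change'' property of~(4) becomes a reformulation of the complete positivity of $\phi T$, i.e.\ of~(3), the only point requiring care being the consistent transformation of the translation vector $t$ attached to $\Lambda$.

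The remaining equivalence (2)$\Leftrightarrow$(5) I would base on the spectral lemma $\sigma(R_\phi^{T_B})=\{\tfrac12-\lambda:\lambda\in\sigma(R_\phi)\}$. Granting it, and recalling from~\eqref{Schatten infty} that $\|R_\phi\|_\infty=\lambda_{\max}(R_\phi)$ because $R_\phi\geq0$, the positivity $R_\phi^{T_B}\geq0$ is equivalent to $\lambda\leq\tfrac12$ for every eigenvalue of $R_\phi$, that is to~(5). For a unital $\phi$ in diagonal form the lemma is transparent: $R_\Lambda$ is Bell--diagonal with eigenvalues $\tfrac14(1\pm l_1\pm l_2\pm l_3)$ carrying an even number of minus signs, $R_\Lambda^{T_B}$ has the same expressions with an odd number of minus signs, and negating all three signs interchanges the two parities---this \emph{is} the bijection $\lambda\mapsto\tfrac12-\lambda$. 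The hard part, and the main obstacle I anticipate, is the non--unital case $t\neq0$, where $R_\phi$ is no longer Bell--diagonal and the clean local--unitary argument breaks down. There I would prove the lemma by matching characteristic polynomials, i.e.\ the four power sums $\Tr(R_\phi^{T_B})^k=\sum_i(\tfrac12-\lambda_i)^k$ for $k=1,\dots,4$: the cases $k=1,2$ hold for every bipartite state because the trace and the Hilbert--Schmidt norm are invariant under partial transposition, but $k=3,4$ genuinely require the marginal constraint $\Tr_S R_\phi=\tfrac12\,\mathds{1}$ forced by trace preservation. Establishing these two moment identities from that single constraint is the real technical core of the theorem.
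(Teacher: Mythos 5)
First, a point of reference: the paper itself contains \emph{no} proof of Theorem~\ref{EB q} --- it is recalled verbatim from the literature (``obtained by joining together Theorem~1 and~2 of~\cite{RuskaiEBqubit}''), so your attempt has to be judged on its own merits rather than against an in-paper argument. Most of your architecture is sound. The hub choice (2) is natural: (1)$\Leftrightarrow$(2) is exactly Theorem~\ref{str thm EB} plus sufficiency of the PPT criterion in $2\otimes 2$; the Choi identities $R_{\phi T}=R_\phi^{T_B}$ and $R_{T\phi}=R_\phi^{T_A}$ are correct, as is the remark that $R_\phi^{T_A}=(R_\phi^{T_B})^T$ makes the two alternatives in (3) equivalent. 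For (3)$\Leftrightarrow$(4), the translation-vector issue you flag is real but resolves itself if you use \emph{right} (pre-)composition throughout rather than post-composition or conjugation: since $(L,t)(W,0)=(LW,t)$, pre-composing with $T=(\mathrm{diag}(1,-1,1),0)$ negates $l_2$ leaving $t$ untouched, and pre-composing with the $\pi$-rotations $\mathrm{diag}(-1,-1,1)$, $\mathrm{diag}(-1,1,-1)$, $\mathrm{diag}(1,-1,-1)$ flips pairs of signs, again leaving $t$ untouched; products of these realize any single sign change while preserving complete positivity. (Conjugation by these diagonal rotations, as you propose, actually does nothing to a diagonal $L$, and post-composition rotates $t$ --- so the bookkeeping only closes with one-sided right multiplication.)

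The one genuine gap is the spectral lemma $\sigma\bigl(R_\phi^{T_B}\bigr)=\{\tfrac12-\lambda:\ \lambda\in\sigma(R_\phi)\}$ for non-unital channels, which you explicitly leave open (``the real technical core''). The lemma is true, but the moment-matching route for $k=3,4$ is needlessly painful and you have not carried it out. The clean closure is the qubit spin-flip identity: for every $2\times 2$ matrix $X$,
\begin{equation*}
Y\,X^T\,Y\ =\ \mathds{1}\,\Tr X\ -\ X\ ,
\end{equation*}
as one checks on the Pauli basis. Apply it to the \emph{first} leg of $R_\phi$ (the leg on which $\phi$ acts), and use the marginal constraint forced by trace preservation, $\Tr_S R_\phi=\tfrac{\mathds{1}}{2}$, which lives on the \emph{other} leg:
\begin{equation*}
(Y\otimes\mathds{1})\ R_\phi^{T_S}\ (Y\otimes\mathds{1})\ =\ \mathds{1}_S\otimes\Tr_S R_\phi\ -\ R_\phi\ =\ \frac{\mathds{1}}{2}\ -\ R_\phi\ .
\end{equation*}
Hence $R_\phi^{T_S}$ is unitarily equivalent to $\tfrac{\mathds{1}}{2}-R_\phi$, so its spectrum is $\{\tfrac12-\lambda_i\}$; and since $R_\phi^{T_B}=\bigl(R_\phi^{T_S}\bigr)^T$ is isospectral to $R_\phi^{T_S}$ (a fact you already invoked in step (3)), the lemma holds for every qubit channel, unital or not. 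With this one-line substitution in place of the unproven moment identities, your proof is complete and self-contained; the only external inputs are the Horodecki--Shor--Ruskai theorem and the $2\otimes2$ Peres--Horodecki theorem, both already cited in the paper.
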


\section{Advanced Theory of Quantum Channels} \label{sec advanced}

Through this section, we will review some more advanced topics in the theory of quantum channels. The results we will present have been already studied in the literature, and a set of references to the previous works will be provided. However, we find convenient for the sake of clarity to uniform the notation and group together useful facts to be used extensively in the rest of the paper.

\subsection{Spectral properties of positive maps} \label{spectrum}
First of all, let us review the main results in the study of the spectral properties of positive, trace--preserving maps. Recall that a map $\phi\in\mathbf{Pt}$ is first of all a linear operator acting on the real space $\mathcal{H}(d;\mathds{C})$ of $d\times d$ hermitian matrices. Like all the linear operations on a $d^2$--dimensional real space, also $\phi$ can be regarded as a $d^2\times d^2$ real matrix. Therefore, a \emph{spectrum} $\sigma(\phi)$, the related \emph{eigenvectors} (actually, we should say eigen\emph{matrices}!) and the whole \emph{Jordan form} (see Chap. 3 of~\cite{HJ1} for an introduction to this standard subject) can be naturally associated to it.
Let us discuss some general properties concerning the spectrum of an arbitrary $\mathbf{Pt}$ map. The condition of complete positivity (pertaining to the physical quantum channels) has to be regarded as a particular case. The knowledge of these basic properties will be very useful through the following chapters. For an excellent overview with all the proofs, we refer the reader to Chap. 6 of~\cite{WolfQC}. We will approximately follow this text for our exposition. For the sake of simplicity, let us group all together in a theorem. \\

\begin{thm}[Spectral Properties of $\mathbf{Pt}$ Maps] \label{spect prop Pt} $\\$
Let $\phi\in\mathbf{Pt}$ be a positive, trace--preserving map with spectrum $\sigma(\phi)$ (counting multiplicities). Then the following properties hold.
\begin{enumerate}

\item The eigenvalues are real or come in complex conjugate pairs $z,z^*$, with the same multiplicity and Jordan structure for $z$ and $z^*$. If $\lambda\in\sigma(\phi)$ is real then the related eigenvector can be chosen hermitian. Otherwise, $\phi(Z)=z Z\Leftrightarrow\phi(Z^\dag)=z^* Z^\dag$. As a consequence, the linear span of the eigenvectors pertaining to complex conjugated eigenvalues is a real subspace, i.e. it admits a basis composed of hermitian operators. Finally, the trace--preserving condition imposes that the eigenvectors associated with $1\neq\lambda\in\sigma(\phi)$ can be chosen traceless.

\item Let $X=\phi(X)$ be an hermitian fixed point of $\phi$. Denote by \mbox{$X = X_+ - X_-$} the decomposition of $X$ into its positive and negative spectral parts \mbox{$X_\pm\geq 0$}. Then $X_+,X_-$ as well as $|X| = X_+ + X_-$ are (positive definite) fixed points of $\phi$.

\item There exists at least a density matrix $\rho_0\geq 0$ which is fixed by $\phi$ (that is, \mbox{$\phi(\rho_0)=\rho_0$}).

\item All the eigenvalues lie in the complex unit circle (i.e. \mbox{$\lambda\in\sigma(\phi)\Rightarrow|\lambda|\leq 1$}). In particular, the determinant of the channel satisfies $|\det\phi|\leq 1$. Moreover, the eigenvalues with modulus equal to $1$ can only have trivial Jordan blocks (this is the same as to say that the algebraic and geometric multiplicities are always the same for eigenvalues with unit modulus, or that $d_k=1$ in~\eqref{Jordan}, and so no nilpotent operator is indeed present on these generalized eigenspaces).

\item Let
\begingroup
\renewcommand*{\arraystretch}{1.8}
\begin{equation} \begin{array}{c} \phi=\sum_k (\lambda_k P_k + N_k)\, , \\
P_h P_k = \delta_{hk} P_k, \quad \text{\emph{Tr}} P_k = d_k, \quad \sum_k P_k = \mathds{1}, \\
N_k^{d_k}=0, \qquad P_k N_k = N_k P_k = N_k \end{array} \label{Jordan} \end{equation}
\endgroup
be the Jordan decomposition for $\phi$. In~\eqref{Jordan} the $\lambda_k$ are the eigenvalues, the $P_k$ the (not necessarily orthogonal!) projectors onto the generalized subspaces, and the $N_k$ are nilpotent (super)operators. Then the following combinations of spectral projectors are all $\mathbf{Pt}$ maps. Moreover, if $\phi\in\mathbf{CPt}$, then so are these maps.

\begin{gather}
E_\phi\ \equiv\, \sum_{k:\, |\lambda_k|=1} P_k\ , \label{E phi} \\
I_\phi\ \equiv\, \sum_{k:\, |\lambda_k|=1} \lambda_k^*\, P_k\ , \label{I phi} \\
\phi_\infty\ \equiv\, \sum_{k:\, \lambda_k=1} P_k\ . \label{phi infty}
\end{gather}

\end{enumerate}
\end{thm}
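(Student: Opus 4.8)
The plan is to treat the five items in the stated order, since each builds on the previous ones, using throughout the single structural fact that a positive map preserves hermiticity, i.e. $\phi(X^\dag)=\phi(X)^\dag$. First, for item~1 I would view $\phi$ as a genuine $\mathds{R}$-linear operator on the real space $\mathcal{H}(d;\mathds{C})$ and extend it to all of $\mathcal{M}(d;\mathds{C})\cong\mathcal{H}(d;\mathds{C})\oplus i\,\mathcal{H}(d;\mathds{C})$, on which the antilinear involution $\iota:Z\mapsto Z^\dag$ is the natural real structure. Hermiticity preservation is exactly the commutation $\phi\iota=\iota\phi$, and from it everything in item~1 follows mechanically: applying $\iota$ to $\phi(Z)=zZ$ gives $\phi(Z^\dag)=z^*Z^\dag$; since $\iota$ is an antilinear isomorphism carrying the generalized eigenspace of $z$ onto that of $z^*$, the two eigenvalues share algebraic and geometric multiplicities and Jordan block sizes; real eigenvalues have $\iota$-invariant eigenspaces, hence a hermitian basis; and trace preservation $\Tr\,\phi(X)=\Tr\,X$ forces $(1-\lambda)\Tr\,X=0$, so any eigenvector with $\lambda\neq1$ is traceless.

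The engine for item~2 is the trace-norm contractivity of $\phi$ on hermitian operators: writing $X=X_+-X_-$ one has $\|\phi(X)\|_1\leq\|\phi(X_+)\|_1+\|\phi(X_-)\|_1=\Tr\,\phi(X_+)+\Tr\,\phi(X_-)=\Tr\,X_++\Tr\,X_-=\|X\|_1$, using positivity and trace preservation. For a fixed point equality holds throughout, and I would invoke the equality case of the triangle inequality for the trace norm (for positive $A,B$, the equality $\|A-B\|_1=\|A\|_1+\|B\|_1$ forces $AB=0$) to conclude that $\phi(X_+)$ and $\phi(X_-)$ have orthogonal supports; combined with $\phi(X_+)-\phi(X_-)=X=X_+-X_-$ and the uniqueness of the decomposition into positive and negative parts, this yields $\phi(X_\pm)=X_\pm$ and hence $\phi(|X|)=|X|$. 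Item~3 I would obtain separately by applying Brouwer's fixed-point theorem to the continuous self-map $\phi$ of the compact convex set of density matrices, into which $\phi$ maps by positivity and trace preservation.

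For item~4 the same contractivity gives $\|\phi^n\|_{1\to1}\leq1$ for all $n$, so by Gelfand's formula the spectral radius is $\leq1$, i.e. $|\lambda|\leq1$ for every eigenvalue and therefore $|\det\phi|=\prod_k|\lambda_k|\leq1$; equivalently, restricting $\phi$ to the two-dimensional real span of $\{Z,Z^\dag\}$ exhibits it as a scaling-rotation by $|\lambda|$, whose iterates blow up if $|\lambda|>1$. The triviality of the Jordan blocks on the unit circle follows from the same uniform bound: a block of size $\geq2$ attached to a unimodular $\lambda$ would make $\|\phi^n\|$ grow polynomially in $n$, contradicting boundedness, whence $d_k=1$ in \eqref{Jordan} for peripheral $\lambda_k$.

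Finally, for item~5 the idea is to realize each of the three maps as a limit of powers (or averages of powers) of $\phi$, which lie in $\mathbf{Pt}$ (and in $\mathbf{CPt}$ when $\phi$ does) because these sets are closed under composition and topologically closed. The Cesàro means $\frac{1}{N}\sum_{n=1}^N\phi^n$ converge to $\phi_\infty$ of \eqref{phi infty}, since only $\lambda=1$ survives the averaging and no nilpotents sit on the peripheral spectrum by item~4. The harder part, which I expect to be the main obstacle, is extracting $E_\phi$ and $I_\phi$ of \eqref{E phi}--\eqref{I phi} when the peripheral eigenvalues are not roots of unity: here I would consider the closure of the cyclic set $\{(\lambda_k^n)_k:\ n=1,2,\dots\}$ inside the torus indexed by the peripheral eigenvalues. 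This closure is a compact abelian group, hence contains both the identity $(1,\dots,1)$ and the inverse $(\lambda_k^{-1})_k=(\lambda_k^*)_k$; choosing subsequences $n_j$ along which $\lambda_k^{n_j}\to1$ (respectively $\to\lambda_k^*$) kills the $|\lambda|<1$ part of $\phi^{n_j}$ and yields $\phi^{n_j}\to E_\phi$ (respectively $\to I_\phi$). Closedness of $\mathbf{Pt}$ and of $\mathbf{CPt}$ then completes the argument.
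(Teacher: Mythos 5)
Your proposal is correct, but it departs from the paper's proof in three of the five items, and the comparison is instructive. For item~2 the paper argues directly with the support projector $P_+$ of $X_+$: positivity gives $P_+\phi(X_+)P_+ - X_+ \geq 0$, while trace preservation forces this positive operator to have nonpositive trace, hence to vanish, which identifies $\phi(X_+)$ with $X_+$. Your route instead runs through trace-norm contractivity plus the equality case of the triangle inequality (for $A,B\geq 0$, $\|A-B\|_1=\|A\|_1+\|B\|_1$ forces $AB=0$); this is equally valid, makes the orthogonality of the supports of $\phi(X_\pm)$ explicit, and lets you reuse the same contractivity bound for item~4. For item~3 the paper stays purely spectral: $\phi^\dag$ is unital, so $1\in\sigma(\phi^\dag)=\sigma(\phi)^*$, and items~1--2 upgrade the corresponding hermitian eigenvector to a positive fixed point; your appeal to Brouwer's fixed-point theorem is shorter but imports a topological black box where the paper needs none (the paper's route also explains why item~2 precedes item~3 in the enumeration). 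For item~5 the paper invokes Dirichlet's theorem on simultaneous Diophantine approximation to exhibit $E_\phi$ and $I_\phi$ as limit points of $(\phi^n)_{n\in\mathds{N}}$, whereas you extract the same subsequences from the soft fact that the closure of $\{(\lambda_k^n)_k:\ n\geq 1\}$ in the torus of peripheral phases is a compact group, hence contains the identity and $(\lambda_k^*)_k$; the two devices are interchangeable here, the number-theoretic one being quantitative, yours avoiding number theory altogether. One small point you should make explicit in item~5: the subsequences $n_j$ must be taken to tend to infinity (if the identity is attained at some finite power $N$, pass to the multiples $jN$), since otherwise the non-peripheral Jordan blocks need not die in the limit. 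Items~1 and~4 match the paper in substance, the only difference being that the paper gets boundedness of the powers from compactness of $\mathbf{Pt}$ under composition rather than from your explicit bound $\|\phi^n\|_{1\to 1}\leq 1$ and Gelfand's formula.
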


\begin{proof}
\begin{enumerate}
\item[]
\item These are all well--known consequences of the hermiticity--preserving condition, which implies that $\phi$ is indeed a real endomorphism of the real vector space $\mathcal{H}(d;\mathds{C})$, that is, a real $d^2\times d^2$ matrix. The trace--preserving condition, moreover, implies that 
\begin{equation*} \phi(Z)=\lambda Z,\ \lambda\neq 1\ \Rightarrow\ \text{Tr}[Z] = \text{Tr}[\phi(Z)] = \lambda \text{Tr} [Z] = 0 \end{equation*}

\item Call $P_+$ the projector onto the positive part of $X$, that is $P_+ X P_+ = X_+$. Since we know that \mbox{$X=X_+ - X_- = \phi(X)$}, we can write
\begin{multline*} X_+\, =\, P_+ X P_+\, =\, P_+ \phi(X_+ - X_-) P_+\, =\\
=\, P_+ \phi(X_+) P_+ - P_+ \phi(X_-) P_+\, \leq\\
\leq\, P_+ \phi(X_+) P_+\quad \Rightarrow\quad P_+ \phi(X_+) P_+ - X_+\, \geq\, 0\, . \end{multline*}
We would like to prove that indeed \mbox{$P_+ \phi(X_+) P_+ - X_+ = 0$}. Thanks to the fact that \mbox{$P_+ \phi(X_+) P_+ - X_+ \geq 0$}, it will suffice to check that the trace of this operator is not greater than zero.
\begin{multline*} \text{Tr}\, [P_+ \phi(X_+) P_+ - X_+]\, =\\
=\, \text{Tr}\, [P_+ \phi(X_+)] - \text{Tr}[X_+]\, \leq\\
\leq\, \text{Tr}\, [\phi(X_+)] - \text{Tr}[X_+]\, =\, 0\, . \end{multline*}

\item Since $\phi^\dag\in\mathbf{Pu}$, we know that $\phi^\dag(\mathds{1})=\mathds{1}$, and so $1\in\sigma(\phi^\dag)$. It is well--known that the spectrum of the hermitian conjugate matrix is nothing but the complex conjugate of the original spectrum (with the same multiplicities). As a consequence, $1\in\sigma(\phi)$, and moreover the previous point ensures that the corresponding eigenvector can be chosen positive (and of unit trace, of course).  

\item These nontrivial facts descend from the observation that $\mathbf{Pt}$ (or $\mathbf{CPt}$) are \emph{compact} sets closed for composition, and an eigenvalue with modulus grater than $1$ produces unbounded powers. Moreover, even a nontrivial Jordan block pertaining to an eigenvalue with modulus $1$ has unbounded powers, as can be easily verified by direct calculations.

\item The proof of this statement relies on exploiting a result of number theory known as Dirichlet's theorem on simultaneous Diophantine approximations to show that both $E_\phi$ and $I_\phi$ are limit points of the sequence of powers $(\phi^n)_{n\in\mathds{N}}$, and so that they must be $\mathbf{Pt}$ (or $\mathbf{CPt}$) if so is $\phi$. Moreover, one can easily see that $\phi_\infty$ is the limit of the \emph{means of the powers}, that is
\begin{equation*} \phi_\infty\, =\, \lim_{n\rightarrow\infty} \frac{1}{n}\, \sum_{i=1}^n \phi^i\ , \end{equation*}
and therefore must be again positive (or completely positive) and trace--preserving, since these properties are preserved under compositions, convex combinations and limits.

\end{enumerate}
\end{proof}

Let us fix some nomenclature and notation. In view also of Theorem~\ref{spect prop Pt}, an eigenvalue of unit modulus of a positive, trace--preserving map $\phi$ is called a \emph{peripheral eigenvalue}, belonging to the \emph{peripheral spectrum} $\sigma_P(\phi)$. An eigenvector pertaining to a peripheral eigenvalue is called a \emph{phase point} of the map. It is nothing but a square matrix $Z$ such that $\phi(Z)=e^{i\theta} Z$ for some real number $\theta$. The linear span of the phase points, denoted by $\chi_\phi$, is called \emph{phase subspace}. Also the \emph{fixed subspace}, that is the eigenspace of $\phi$ pertaining to the eigenvalue $1$ (or the set of \emph{fixed points}), deserves a special notation, being indicated by $\eta_\phi$.

Now, the following central questions naturally arise: \emph{what is the most general structure of the phase subspace of a quantum channel? And what is the most general action of $\phi$ on this phase subspace?} The rest of this section is devoted to present the answer to this question. The proofs of the central claims are highly nontrivial and quite technical, even if the claims themselves can be written in a quite simple way. Consequently, we shall break the general argument into several smaller constructions, identified by the various subsections.

The first step can be immediately understood, thanks to Theorem~\ref{spect prop Pt}. Suppose that we want identify the structure of the phase subspace $\chi_\phi$ of a completely positive map $\phi\in\mathbf{CPt}$. We can then construct the spectral projection $E_\phi$ as defined in~\eqref{E phi}, and observe that:
\begin{itemize}
\item $E_\phi\in\mathbf{CPt}$;
\item $\chi_\phi = \eta_{E_\phi}$;
\item $E_\phi$ is \emph{idempotent}, that is $E_\phi^2 = E_\phi$.
\end{itemize}
Thanks to this simplification, from now on we can restrict ourselves to study only the sets of \emph{fixed points} of \emph{idempotent} completely positive maps. This will be enough to understand the structure of the phase subspace of every completely positive channel.

\subsection{Restricting to maps with a strictly positive definite fixed point} \label{psibar}

We know from Theorem~\ref{spect prop Pt} that every positive map has a positive fixed point $\rho_0$. Of course, there are maps for which $\rho_0 > 0$ is \emph{strictly} positive definite, and other maps for which $\rho_0$ is still positive but has some zero eigenvalue.
Even if it is not a priori obvious, the entire theory of the phase points (or of the fixed points) becomes much simpler is we would allowed to make the assumption that $\rho_0 > 0$. In this subsection we will describe a theoretical construction that allows us to associate with a generic map $\psi\in\mathbf{Pt}$ another map $\tilde{\psi}\in\mathbf{Pt}$ such that $\eta_\psi=\eta_{\tilde{\psi}}$. Let us begin with a little Lemma. Recall that the \emph{support} of an hermitian operator is by definition the subspace spanned by its eigenvectors pertaining to nonzero eigenvalues (i.e. the orthogonal complement of the kernel).

\begin{lemma} \label{supp fix point} $\\$
Let $\psi\in\mathbf{Pt}$, and define the subspace $\mathcal{K}$ as the sum over the positive fixed points of $\psi$
\begin{equation} \mathcal{K}\ = \sum_{\psi(A)\, =\, A\, \geq\, 0} \text{\emph{supp}}\, A\ . \label{eq K} \end{equation}
Then the following statements hold.
\begin{enumerate}

\item The support of every fixed point of $\psi$ is contained in $\mathcal{K}$. Moreover, $\mathcal{K}$ is the smallest subspace enjoying this property.

\item There exists a positive fixed point $\rho_0\geq 0$ of $\psi$ such that $\text{\emph{supp}}\,\rho_0=\mathcal{K}$. A possible choice is for instance $\rho_0=\psi_\infty(\mathds{1})$ (see~\eqref{phi infty}).

\end{enumerate}
\end{lemma}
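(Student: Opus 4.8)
The plan is to handle the two statements separately, reducing the first to the spectral structure provided by Theorem~\ref{spect prop Pt} and the second to an operator-domination argument built around $\psi_\infty$. For statement~1, I would first note that it is enough to treat hermitian fixed points, since the notion of support is defined only for hermitian operators and any fixed point $Z$ splits into its hermitian and anti-hermitian parts, both of which are again fixed (because $\psi$ preserves hermiticity, so $\psi(Z)=Z\Rightarrow\psi(Z^\dag)=Z^\dag$). Given a hermitian fixed point $X=\psi(X)$, I invoke item~2 of Theorem~\ref{spect prop Pt} to write $X=X_+-X_-$ with $X_\pm\geq 0$ themselves fixed points of $\psi$. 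Since $X_+$ and $X_-$ have mutually orthogonal supports, $\text{supp}\,X=\text{supp}\,X_+\oplus\text{supp}\,X_-$, and each summand lies in $\mathcal{K}$ directly by the definition~\eqref{eq K}; hence $\text{supp}\,X\subseteq\mathcal{K}$. The minimality assertion is then immediate: if $\mathcal{W}$ is any subspace containing $\text{supp}\,A$ for every positive fixed point $A$, then $\mathcal{W}$ contains the sum of all such supports, which is exactly $\mathcal{K}$, so $\mathcal{K}\subseteq\mathcal{W}$.

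For statement~2, I set $\rho_0=\psi_\infty(\mathds{1})$ and verify the easy properties first. Positivity is clear since $\mathds{1}\geq 0$ and $\psi_\infty\in\mathbf{Pt}$ by item~5 of Theorem~\ref{spect prop Pt}. That $\rho_0$ is a fixed point follows from the identity $\psi\,\psi_\infty=\psi_\infty$: by item~4 every unit-modulus eigenvalue has trivial Jordan block, so $N_k=0$ whenever $\lambda_k=1$, giving $\psi P_k=\lambda_k P_k+N_k=P_k$ and therefore $\psi\,\psi_\infty=\sum_{k:\,\lambda_k=1}\psi P_k=\psi_\infty$; applying this to $\mathds{1}$ yields $\psi(\rho_0)=\rho_0$. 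Being a positive fixed point, $\rho_0$ satisfies $\text{supp}\,\rho_0\subseteq\mathcal{K}$ by definition of $\mathcal{K}$.

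The core of the argument, and the step I expect to be the main obstacle, is the reverse inclusion $\mathcal{K}\subseteq\text{supp}\,\rho_0$. The key observation is that $\psi_\infty$ is idempotent with range equal to the fixed subspace $\eta_\psi$, so it acts as the identity on every fixed point. Thus for an arbitrary positive fixed point $A\geq 0$ I would start from the operator inequality $A\leq\|A\|_\infty\,\mathds{1}$, apply the positive (hence order-preserving) map $\psi_\infty$, and use $\psi_\infty(A)=A$ to obtain $A\leq\|A\|_\infty\,\rho_0$. It then remains to invoke the elementary fact that $0\leq A\leq c\,\rho_0$ forces $\text{supp}\,A\subseteq\text{supp}\,\rho_0$: if $\rho_0 v=0$ then $\braket{v|A|v}\leq c\,\braket{v|\rho_0|v}=0$, and positivity of $A$ gives $Av=0$, so $\ker\rho_0\subseteq\ker A$. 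Summing these inclusions over all positive fixed points $A$ yields $\mathcal{K}\subseteq\text{supp}\,\rho_0$, and combined with the previous paragraph this gives $\text{supp}\,\rho_0=\mathcal{K}$. The two delicate points to get right are the order-preservation under the limit map $\psi_\infty$ (justified by its positivity, itself a nontrivial consequence of item~5) and its identity action on fixed points (justified by the triviality of the peripheral Jordan blocks).
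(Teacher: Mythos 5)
Your proposal is correct and follows essentially the same route as the paper: part~1 via the spectral decomposition of hermitian fixed points into positive fixed parts (item~2 of Theorem~\ref{spect prop Pt}), and part~2 by taking $\rho_0=\psi_\infty(\mathds{1})$, using $\psi_\infty(A)=A$ for positive fixed points $A$ together with the domination $A\leq\|A\|_\infty\,\psi_\infty(\mathds{1})$ to force $\text{supp}\,A\subseteq\text{supp}\,\rho_0$. The only cosmetic differences are that you argue minimality directly from the definition of $\mathcal{K}$ rather than deferring it to part~2, and you spell out the justifications (hermitian/anti-hermitian splitting, $\psi\,\psi_\infty=\psi_\infty$, order preservation) that the paper leaves implicit.
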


\begin{proof}
\begin{enumerate}
\item[]
\item By Theorem~\ref{spect prop Pt}, if $X=\psi(X)$ is a fixed point, so is $|X|$. As a consequence, \mbox{$\text{supp}\, X = \text{supp}\, |X| \subseteq \mathcal{K}$} (because $\text{supp}\, |X|$ is an addend of the sum~\eqref{eq K}). That $\mathcal{K}$ is the smallest subspace containing all the supports of the fixed points follows easily if we prove that there exists a (positive) fixed point $\rho_0$ such that $\text{supp}\, \rho_0 = \mathcal{K}$, which is the next claim.

\item If $\mathcal{K}=\text{supp}\, X_1 + \ldots + \text{supp}\, X_n$, we can choose $\rho_0 \equiv |X_1|+\ldots +|X_n|$. Another legitimate choice is $\rho_0=\psi_\infty(\mathds{1})$ because on one hand $\psi\psi_\infty=\psi_\infty$ by the very definition~\eqref{phi infty} (so that $\psi_\infty(\mathds{1})$ is indeed a fixed point), and on the other hand for all $\psi(A)=A\geq 0$ we have also $\psi_\infty(A)=A$, and so \mbox{$A=\psi_\infty(A)\leq\psi_\infty(\|A\|_\infty \mathds{1})=\|A\|_\infty \psi_\infty (\mathds{1})$}, which in turn implies $\text{supp}\, A\subseteq\text{supp}\, \psi_\infty(\mathds{1})$.

\end{enumerate}
\end{proof}

Now that we have constructed this subspace $\mathcal{K}$, we show why it is indeed interesting.

\begin{prop} \label{prop psi bar} $\\$
Let $\psi\in\mathbf{Pt}$ be a positive, trace--preserving map, and define as above the subspace \mbox{$\mathcal{K}= \sum_{\psi(A)\, =\, A\, \geq\, 0} \text{\emph{supp}}\, A$}. Then for all hermitian \mbox{$X=X^\dag$} such that $\text{\emph{supp}}\, X\subseteq \mathcal{K}$, we have also $\text{\emph{supp}}\, \psi(X)\subseteq \mathcal{K}$.
\end{prop}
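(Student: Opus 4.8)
The plan is to exploit the existence, guaranteed by Lemma~\ref{supp fix point}, of a positive fixed point $\rho_0\geq 0$ whose support is exactly $\mathcal{K}$, and to combine this with the elementary monotonicity of positive maps. First I would reduce to the case of a \emph{positive} input. If $X=X^\dag$ satisfies $\text{supp}\,X\subseteq\mathcal{K}$, then writing $X=X_+-X_-$ for its decomposition into positive and negative spectral parts, one has $\text{supp}\,X_\pm\subseteq\text{supp}\,X\subseteq\mathcal{K}$, since the eigenvectors of $X_\pm$ relative to nonzero eigenvalues are among those of $X$. By linearity of $\psi$ it therefore suffices to prove the thesis for an arbitrary $A\geq 0$ with $\text{supp}\,A\subseteq\mathcal{K}$.

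The central step is a domination inequality. Since $\rho_0\geq 0$ has support exactly $\mathcal{K}$, its smallest nonzero eigenvalue $\lambda_{\min}$ is strictly positive, whence $\rho_0\geq\lambda_{\min}\,P$, where $P$ denotes the orthogonal projector onto $\mathcal{K}$. On the other hand, $A\geq 0$ with $\text{supp}\,A\subseteq\mathcal{K}$ obeys $A\leq\|A\|_\infty\,P$. Chaining these bounds gives $A\leq c\,\rho_0$ with $c\equiv\|A\|_\infty/\lambda_{\min}$. Now I would apply $\psi$, recalling that a positive map is automatically monotone: from $c\rho_0-A\geq 0$ and $\psi\in\mathbf{Pt}$ we get $\psi(A)\leq c\,\psi(\rho_0)=c\,\rho_0$, the last equality holding because $\rho_0$ is a fixed point. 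Since also $\psi(A)\geq 0$, this sandwiches $0\leq\psi(A)\leq c\,\rho_0$.

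To conclude I would invoke the general fact that for positive operators $0\leq B\leq C$ one has $\text{supp}\,B\subseteq\text{supp}\,C$ (equivalently $\ker C\subseteq\ker B$): if $Cv=0$ then $0\leq\Braket{v|B|v}\leq\Braket{v|C|v}=0$ forces $Bv=0$. Applied to $0\leq\psi(A)\leq c\rho_0$ this yields $\text{supp}\,\psi(A)\subseteq\text{supp}\,\rho_0=\mathcal{K}$, which is exactly the claim for positive inputs, and hence for all hermitian $X$ by the reduction above.

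The only genuinely delicate point is the domination $A\leq c\rho_0$: it relies crucially on $\text{supp}\,A\subseteq\mathcal{K}=\text{supp}\,\rho_0$, i.e.\ on the fact that $\rho_0$ is \emph{strictly} positive on the whole of $\mathcal{K}$, which is precisely what Lemma~\ref{supp fix point} secures. Everything else---the reduction to positive inputs, the monotonicity of $\psi$ inherited from positivity, and the support containment for dominated positive operators---is routine linear algebra, so I expect no serious obstacle beyond assembling these pieces in the right order.
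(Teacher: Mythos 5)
Your proof is correct and follows essentially the same route as the paper's: reduction to positive inputs via the spectral decomposition $X=X_+-X_-$, domination $0\leq A\leq c\,\rho_0$ by the maximal fixed point from Lemma~\ref{supp fix point}, monotonicity of $\psi$, and the support-containment property of dominated positive operators. You have merely made explicit the two steps the paper leaves implicit (the constant $c=\|A\|_\infty/\lambda_{\min}$ and the kernel argument for $0\leq B\leq C$), which is fine.
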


\begin{proof}
Up to decomposing $X$ into its positive and negative spectral parts, it will suffice to prove the thesis for positive $X$. Use Lemma~\ref{supp fix point} to construct $\rho_0\geq 0$ such that $\text{supp}\, \rho_0 = \mathcal{K}$. Then the hypothesis implies there must be a number $m\in\mathds{R}$ such that $0\leq X\leq m \rho_0$, so that \mbox{$0\leq\psi(X)\leq m \psi(\rho_0) = m \rho_0$}, and this is possible only if \mbox{$\text{supp}\, \psi(X)\subseteq \mathcal{K}$}.
\end{proof}

Then, consider a generic $\psi\in\mathbf{Pt}_d$, and construct the associated subspace $\mathcal{K}$ such that $\dim \mathcal{K}=r\leq d$ and \mbox{$\mathds{C}^d=\mathcal{K}\oplus \mathcal{K}^\perp$}. Thanks to Proposition~\ref{prop psi bar}, it makes sense to define the restriction $\tilde{\psi}:\mathcal{M}(r;\mathds{C})\rightarrow\mathcal{M}(r;\mathds{C})$ by
\begin{equation} X=x\oplus 0\quad \Rightarrow\quad \psi(X)= \tilde{\psi}(x) \oplus 0\, , \label{psi bar}\end{equation}
where all the block decompositions are understood to be in accordance with the space decomposition \mbox{$\mathds{C}^d=\mathcal{K}\oplus\mathcal{K}^\perp$}. This new map $\tilde{\psi}$ enjoys the following properties.
\begin{itemize}

\item $\psi\in\mathbf{Pt}_r$, and moreover if $\psi\in\mathbf{CPt}_d$ then also $\tilde{\psi}\in\mathbf{CPt}_r$.

\item $\eta_\psi = \eta_{\tilde{\psi}}\oplus 0$, once again accordingly with the decomposition $\mathds{C}^d=\mathcal{K}\oplus \mathcal{K}^\perp$.


\item $\tilde{\psi}$ admits a \emph{strictly} positive definite fixed point $\rho_0\in\mathcal{H}(r;\mathds{C})$, which is of course nothing but the restriction of the maximal fixed point provided by Lemma~\ref{supp fix point} to $\mathcal{K}$ (and will be again indicated with $\rho_0$). Let us clarify this tricky point. The positive matrix $\rho_0$ is not invertible on the whole space $\mathds{C}^d$. However, the channel $\tilde{\psi}$ has been deliberately constructed as a restriction to the precise subspace on which $\rho_0$ is indeed invertible.

\end{itemize}

Thanks to the above construction and to the fact that $\eta_\psi=\eta_{\tilde{\psi}}\oplus 0$, we can now assume in our study of the fixed subspace that $\psi$ has a strictly positive definite fixed point. We shall see that this will guarantee the existence of a simpler structure.

\subsection{Theory for unital (idempotent) maps whose hermitian adjoint has a strictly positive definite fixed point} \label{part theory}

We already know that the trace--preserving condition is the hermitian adjoint dual of the unital condition, that is \mbox{$\psi\in(\mathbf{C})\mathbf{Pt}\Leftrightarrow\psi^\dag\in(\mathbf{C})\mathbf{Pu}$} (remind that the hermitian adjoint is taken with respect to the Hilbert--Schmidt hermitian product between matrices). Through this section we will develop the theory of the fixed subspace for \emph{unital} maps, exploiting also the assumptions of idempotence and of invertibility of maximal fixed point of the adjoint. 

A good reason for studying the unital maps instead of their trace--preserving counterpart (hermitian adjoint) comes from the fact that the Kadison's inequalities contained in Theorem~\ref{Kadison} are formulated for unital rather than for trace--preserving maps. Observe that we can not say a priori that the fixed subspaces $\eta_\psi$ and $\eta_{\psi^\dag}$ are related in a particular way. Indeed, they will be in general \emph{different} subspaces. However, they must satisfy some relations. For instance, since we know that the spectrum $\psi^\dag$ is simply the complex conjugated of that of $\psi$, we can a priori say that $1$ must belong to $\sigma(\psi^\dag)$ with the same multiplicity of $1\in\sigma(\psi)$, i.e. that
\begin{equation} \dim\, \eta_\psi\ =\ \dim\, \eta_{\psi^\dag}\ . \label{adjoint fix} \end{equation}

The crucial observation, that naturally leads to the final classification theorem, has been done by Lindblad in~\cite{Lindblad}. We will present it through the following theorem, whose central claim is somewhat a priori unexpected and quite surprising. Here is the point in which the assumption of complete positivity rather than of simple positivity becomes fundamental. Indeed, the inequality~\eqref{Kadison CP}, which is stronger than~\eqref{Kadison P}, will play a decisive role.

\begin{thm}[Lindblad's Theorem] \label{Lind} $\\$
Let $\zeta\in\mathbf{CPu}$ be a completely positive, unital map such that its hermitian adjoint $\zeta^\dag\in\mathbf{CPt}$ has a strictly positive definite fixed point. Then the fixed subspace $\eta_\zeta$ is closed under matrix multiplication.
\end{thm}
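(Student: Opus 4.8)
The plan is to show that every fixed point of $\zeta$ lies in the \emph{multiplicative domain} of $\zeta$, i.e. that it saturates the completely positive Schwarz inequality~\eqref{Kadison CP}, and then to invoke the standard fact that operators saturating that inequality are transported multiplicatively by $\zeta$. Granting this, for any $A,B\in\eta_\zeta$ one obtains $\zeta(AB)=\zeta(A)\zeta(B)=AB$, so that $AB\in\eta_\zeta$, which is exactly the assertion. The hypothesis that $\zeta^\dagger$ possesses a \emph{strictly} positive definite fixed point enters in an essential way: its sole purpose is to promote the Schwarz \emph{inequality} to an \emph{equality} on the fixed subspace. This is precisely why the preparatory reduction of Subsection~\ref{psibar}, which secures the existence of such a fixed point, is the natural setting for the theorem.

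First I would fix $X\in\eta_\zeta$, so $\zeta(X)=X$, and apply~\eqref{Kadison CP} to get $D\equiv\zeta(X^\dagger X)-\zeta(X)^\dagger\zeta(X)=\zeta(X^\dagger X)-X^\dagger X\geq 0$. Let $\rho_0>0$ be the strictly positive definite fixed point of $\zeta^\dagger$ furnished by hypothesis. The defining adjoint relation gives $\Tr[\rho_0\,\zeta(Y)]=\Tr[\zeta^\dagger(\rho_0)\,Y]=\Tr[\rho_0\,Y]$ (valid because $\rho_0$ is Hermitian and $\zeta^\dagger(\rho_0)=\rho_0$); choosing $Y=X^\dagger X$ yields $\Tr[\rho_0\,D]=\Tr[\rho_0\,\zeta(X^\dagger X)]-\Tr[\rho_0\,X^\dagger X]=0$. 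Since $\Tr[\rho_0 D]=\Tr[\rho_0^{1/2}D\rho_0^{1/2}]$ with $\rho_0^{1/2}D\rho_0^{1/2}\geq 0$, a positive semidefinite operator of vanishing trace must be zero, and the invertibility of $\rho_0^{1/2}$ forces $D=0$. Hence every fixed point satisfies the saturation identity $\zeta(X^\dagger X)=\zeta(X)^\dagger\zeta(X)$.

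The remaining step converts this equality into genuine multiplicativity. Here I would pass to a Stinespring dilation $\zeta(\cdot)=V^\dagger\pi(\cdot)V$, with $\pi$ a representation of the matrix algebra and $V$ an isometry ($V^\dagger V=\mathds{1}$, by unitality). A short computation shows $D=W^\dagger W$ with $W=(\mathds{1}-VV^\dagger)\pi(X)V$, so that the saturation $D=0$ is equivalent to the intertwining relation $\pi(X)V=VV^\dagger\pi(X)V=V\zeta(X)$. Consequently, for an arbitrary $Y$ one computes $\zeta(XY)=V^\dagger\pi(X)\pi(Y)V=V^\dagger\pi(X)V\,\zeta(Y)=\zeta(X)\zeta(Y)$, where the intertwining was used on the right-hand factor. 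Specializing to $X=A\in\eta_\zeta$ and $Y=B\in\eta_\zeta$ gives $\zeta(AB)=\zeta(A)\zeta(B)=AB$, proving closure under multiplication. (Together with the fact, recorded in item~1 of Theorem~\ref{spect prop Pt}, that $\eta_\zeta$ is stable under Hermitian conjugation since the eigenvalue $1$ is real, this even shows $\eta_\zeta$ to be a $\ast$-subalgebra.)

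The main obstacle is this last passage, from saturation of the Schwarz inequality to true multiplicativity: it is the only genuinely non-elementary ingredient, and the point where complete positivity is indispensable, the dilation $\zeta(\cdot)=V^\dagger\pi(\cdot)V$ (equivalently, the equality condition in the operator Cauchy--Schwarz inequality) having no analogue for merely positive maps obeying only the weaker Kadison inequality~\eqref{Kadison P}. By contrast, the trace argument producing the equality from the inequality is elementary once $\rho_0>0$ is available, which is exactly the reason the hypothesis on $\zeta^\dagger$ is imposed.
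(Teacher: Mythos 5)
Your first half coincides exactly with the paper's own argument: both apply the completely positive Schwarz inequality~\eqref{Kadison CP} to a fixed point and then kill the positive semidefinite discrepancy $D=\zeta(X^\dag X)-X^\dag X$ by tracing it against the strictly positive fixed point $\rho_0$ of $\zeta^\dag$, exactly as in~\eqref{Lind eq2}. Where you genuinely diverge is in converting this into closure under products. The paper stays elementary: having shown $Z^\dag Z\in\eta_\zeta$ for every fixed point $Z$, it obtains $XY\in\eta_\zeta$ by pure polarization, expanding $(X^\dag+Y)^\dag(X^\dag+Y)$ and $(X^\dag+iY)^\dag(X^\dag+iY)$ and subtracting the diagonal terms; nothing beyond linearity of $\eta_\zeta$ is used. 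You instead invoke the multiplicative-domain mechanism: a Stinespring dilation $\zeta(\cdot)=V^\dag\pi(\cdot)V$, the factorization $D=W^\dag W$ with $W=(\mathds{1}-VV^\dag)\pi(X)V$, and the resulting intertwining relation $\pi(X)V=V\zeta(X)$. This is heavier machinery (Choi's multiplicative-domain theorem in disguise), but it buys a strictly stronger conclusion --- $\zeta$ acts \emph{multiplicatively} on products with a fixed point, not merely that such products are again fixed --- and it isolates cleanly why complete positivity, rather than positivity alone, is indispensable in the second step as well as in the first.

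One step is misstated, although your conclusion survives. From the saturation $\zeta(X^\dag X)=\zeta(X)^\dag\zeta(X)$ alone, equivalently $\pi(X)V=V\zeta(X)$, you may conclude $\zeta(YX)=\zeta(Y)\zeta(X)$ for \emph{arbitrary} $Y$ ($X$ multiplying on the right), but not $\zeta(XY)=\zeta(X)\zeta(Y)$: your chain $V^\dag\pi(X)\pi(Y)V=V^\dag\pi(X)V\,\zeta(Y)$ applies the intertwining to the factor $\pi(Y)$, which is legitimate only if $Y$ itself saturates the Schwarz inequality. For arbitrary $Y$, the left-sided identity would instead require the \emph{other} saturation $\zeta(XX^\dag)=\zeta(X)\zeta(X)^\dag$, whose adjointed intertwining reads $V^\dag\pi(X)=\zeta(X)V^\dag$. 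No real harm is done: in your final specialization $Y=B$ is itself a fixed point, so its own intertwining relation is available; alternatively, since $\zeta$ preserves hermitian adjoints one has $B\in\eta_\zeta\Rightarrow B^\dag\in\eta_\zeta$, so both saturations hold on all of $\eta_\zeta$ and either repair yields $\zeta(AB)=\zeta(A)\zeta(B)=AB$. You should make this sidedness explicit, because the multiplicative domain is in general a one-sided notion for each saturated inequality, and the claim as written ("for an arbitrary $Y$") is false for a general $X$ saturating only~\eqref{Kadison CP}.
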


\begin{proof}
It suffices to demonstrate that if $Z\in\eta_\zeta$ then $Z^\dag Z\in\eta_\zeta$. Then indeed for all $X,Y\in\eta_\zeta$ one has $(X^\dag+Y)^\dag (X^\dag+Y)\in\eta_\zeta$, and so subtracting $X X^\dag + Y^\dag Y\in\eta_\zeta$ also \mbox{$X Y + Y^\dag X^\dag\in\eta_\zeta$}. Moreover, \mbox{$(X^\dag+iY)^\dag (X^\dag+iY)\in\eta_\zeta$}, and subtracting again $X X^\dag + Y^\dag Y\in\eta_\zeta$ we obtain $X Y - Y^\dag X^\dag\in\eta_\zeta$. Finally, summing with the previous identity yields $X Y\in\eta_\zeta$.

Therefore, let us prove that \mbox{$Z\in\eta_\zeta\Rightarrow Z^\dag Z\in\eta_\zeta$}. Applying~\eqref{Kadison CP} gives
\begin{equation} \zeta(Z^\dag Z) - Z^\dag Z\geq 0\ . \label{Lind eq1} \end{equation}
We would like to prove that indeed the operator on the left--hand side of~\eqref{Lind eq1} is zero. Take the strictly positive definite fixed point of $\zeta^\dag$, namely $\zeta^\dag(\rho_0)=\rho_0>0$, whose existence is guaranteed by hypothesis, and note that
\begin{multline} \text{Tr}\,[\,\rho_0\ \left(\zeta(Z^\dag Z) - Z^\dag Z\right)]\, =\\
=\, \text{Tr}\,[\,\rho_0\ \zeta(Z^\dag Z)]\, -\, \text{Tr}\,[\,\rho_0\, Z^\dag Z]\, =\\
=\, \text{Tr}\,[\zeta^\dag(\rho_0)\, Z^\dag Z]\, -\, \text{Tr}\,[\,\rho_0\, Z^\dag Z]\, =\\
=\, \text{Tr}\,[\,\rho_0\, Z^\dag Z]\, -\, \text{Tr}[\,\rho_0\, Z^\dag Z]\, =\, 0\, . \label{Lind eq2} \end{multline}
Thanks to~\eqref{Lind eq1} and to the fact that $\rho_0>0$, this ensures that $\zeta(Z^\dag Z) - Z^\dag Z = 0$, that is $Z^\dag Z\in\eta_\zeta$.
\end{proof}

Theorem~\ref{Lind} shows that under our assumptions $\eta_\zeta$ is a linear complex subspace of matrices which is in addition closed for hermitian adjunction and matrix product. Such a set is a particular instance of what is called in mathematics a \emph{von Neumann algebra}. Now that we have proved that $\eta_\zeta$ is equipped with such a peculiar structure, we can exploit the powerful characterization theorems holding for these algebras. Let us recall the main result, that is the classification of all the finite--dimensional von Neumann algebras up to unitary isomorphisms. In what follows we shall adopt the shorthand $\mathcal{M}_n \equiv \mathcal{M}(n;\mathds{C})$ in order to develop a more compact notation.

\begin{thm}[Classification of Finite--Dimensional von Neumann Algebras] \label{vNa} $\\$
Let $\mathcal{A}$ be a finite--dimensional von Neumann algebra composed of bounded operators over a Hilbert space $\mathcal{K}$. Then there exist integers $n_1,\ldots,n_m\geq 1$, Hilbert spaces $\mathcal{K}_1,\ldots, \mathcal{K}_m$ and a unitary isomorphism
\begin{equation*} U\,:\ \mathcal{K}\ \longrightarrow\ \bigoplus_{i=1}^m\, \mathds{C}^{n_i}\otimes \mathcal{K}_i \end{equation*}
such that 
\begin{equation} U\,\mathcal{A}\ U^\dag\ =\ \bigoplus_{i=1}^m\ \mathcal{M}_{n_i} \otimes\, \mathds{1}_{\mathcal{K}_i}\ . \label{vNa iso} \end{equation}
\end{thm}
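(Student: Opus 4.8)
The plan is to prove the theorem in two stages: first reduce the general case to that of a \emph{factor} (a von Neumann algebra with trivial center) by splitting along the minimal projections of the center, and then classify finite--dimensional factors explicitly by constructing a system of matrix units. The target decomposition~\eqref{vNa iso} is manifestly an orthogonal direct sum of pieces $\mathcal{M}_{n_i}\otimes\mathds{1}_{\mathcal{K}_i}$, each having center $\mathds{C}$, so this two--stage strategy mirrors the structure of the thesis.

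First I would analyze the center $Z(\mathcal{A})=\mathcal{A}\cap\mathcal{A}'$. This is again a finite--dimensional $*$--closed algebra of operators, but now commutative; since commuting normal operators are simultaneously diagonalizable, $Z(\mathcal{A})$ is spanned by a complete family of mutually orthogonal projections, whose minimal elements $z_1,\ldots,z_m\in\mathcal{A}$ satisfy $z_i z_j=\delta_{ij}z_i$ and $\sum_i z_i=\mathds{1}$. These central projections induce the orthogonal decompositions $\mathcal{K}=\bigoplus_{i=1}^m z_i\mathcal{K}$ and, because the $z_i$ are central, $\mathcal{A}=\bigoplus_{i=1}^m \mathcal{A}_i$ with $\mathcal{A}_i\equiv z_i\mathcal{A}z_i$ acting on $z_i\mathcal{K}$. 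Minimality of $z_i$ forces the center of $\mathcal{A}_i$ to be $\mathds{C}\,z_i$, i.e. each $\mathcal{A}_i$ is a factor; it therefore suffices to prove the claim for a single factor and then take the orthogonal direct sum over $i$.

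Next I would treat the factor case. Fix a factor $\mathcal{F}$ acting on a Hilbert space $\mathcal{H}$ and choose a \emph{minimal} projection $p\in\mathcal{F}$; minimality forces the compression $p\mathcal{F}p=\mathds{C}\,p$. Using Murray--von Neumann comparison of projections, which in a factor renders any two nonzero projections comparable, I would decompose the unit of $\mathcal{F}$ into $n$ mutually orthogonal projections $e_{11},\ldots,e_{nn}$ each equivalent to $p$, and collect the intertwining partial isometries into a system of matrix units $\{e_{jk}\}_{j,k=1}^n$ obeying $e_{jk}^\dag=e_{kj}$, $e_{jk}e_{lm}=\delta_{kl}e_{jm}$ and $\sum_j e_{jj}=\mathds{1}_{\mathcal{H}}$. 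A direct computation then shows $e_{kk}\,a\,e_{ll}=c_{kl}\,e_{kl}$ with scalar $c_{kl}$ for every $a\in\mathcal{F}$, since $e_{1k}\,a\,e_{l1}\in p\mathcal{F}p=\mathds{C}\,p$, so that $\mathcal{F}=\text{span}\{e_{kl}\}$. Setting $\mathcal{K}'\equiv e_{11}\mathcal{H}$, the assignment sending $v\in\mathcal{H}$ to the tuple $\big(e_{1j}v\big)_{j=1}^n$ (viewed inside $\mathds{C}^n\otimes\mathcal{K}'$) is a unitary $\mathcal{H}\cong\mathds{C}^n\otimes\mathcal{K}'$ under which $e_{kl}$ becomes $\ket{k}\!\!\bra{l}\otimes\mathds{1}_{\mathcal{K}'}$, hence $\mathcal{F}$ goes precisely onto $\mathcal{M}_n\otimes\mathds{1}_{\mathcal{K}'}$. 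Applying this to each $\mathcal{A}_i$ with its own $n_i$ and $\mathcal{K}_i$ and reassembling via $\bigoplus_i$ completes the argument.

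The main obstacle is the comparison theory underlying the factor step: one must show that in a finite--dimensional factor all minimal projections are mutually equivalent and that the unit is a finite orthogonal sum of such minimal projections, which is exactly the Murray--von Neumann halving/comparison argument specialized to finite dimensions. A secondary point requiring care is verifying that the constructed intertwiner is genuinely an isometry \emph{onto} $\mathds{C}^n\otimes\mathcal{K}'$, so that the resulting isomorphism~\eqref{vNa iso} is \emph{spatial}, i.e. implemented by the unitary $U$, rather than merely an abstract algebra isomorphism. Since the statement explicitly demands such a $U$, this spatial character — and not just the algebraic Artin--Wedderburn decomposition — is the substantive content one has to secure.
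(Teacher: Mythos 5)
The paper never proves Theorem~\ref{vNa}: it is recalled as a classical structure theorem and immediately put to use (in Theorem~\ref{fix unital} and Theorem~\ref{phase sub}), so there is no internal proof to compare yours against; I can only judge your argument on its own merits. Judged so, your two--stage argument is the standard one and it is correct. The reduction to factors via the minimal projections of the center $Z(\mathcal{A})=\mathcal{A}\cap\mathcal{A}'$ is sound (centrality of the $z_i$ makes each $\mathcal{A}_i=z_i\mathcal{A}z_i$ an algebra acting on $z_i\mathcal{K}$ whose center is $\mathds{C}\,z_i$), and the factor step --- minimal projection $p$ with $p\mathcal{F}p=\mathds{C}\,p$, matrix units $e_{jk}$, the expansion $a=\sum_{k,l}e_{kk}\,a\,e_{ll}=\sum_{k,l}c_{kl}\,e_{kl}$, and the explicit unitary $v\mapsto(e_{1j}v)_{j=1}^n$ onto $\mathds{C}^n\otimes e_{11}\mathcal{H}$ --- is exactly what yields the \emph{spatial} form~\eqref{vNa iso}, with the multiplicity factor $\mathds{1}_{\mathcal{K}_i}$ that a purely abstract Artin--Wedderburn decomposition would not deliver; you are right that this is the substantive content demanded by the statement. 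Two sharpenings would make the sketch self--contained. First, the ``halving'' part of Murray--von Neumann theory is not needed (halving is a type~II phenomenon); in finite dimensions the equivalence of minimal projections in a factor follows elementarily: since $\mathcal{F}$ is a factor, the central support of any nonzero projection $q$ is $\mathds{1}$, hence $p\mathcal{F}q\neq 0$, and the partial isometry in the polar decomposition of any nonzero $p\,x\,q$ has, by minimality, initial projection exactly $q$ and final projection exactly $p$, giving $p\sim q$. Second, to write $\mathds{1}=\sum_j e_{jj}$ you should note that a maximal orthogonal family of minimal projections must sum to $\mathds{1}$, since otherwise the complementary projection, which lies in $\mathcal{F}$, would contain a further minimal subprojection by finite--dimensionality. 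With those two standard facts supplied, your proof is complete.
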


What does Theorem~\ref{vNa} mean in practice, when we deal with a von Neumann algebra formed by square matrices? In that case the thesis states that there exists an orthonormal basis ($U$) of the whole Hilbert space ($\mathcal{K}$) such that when written in \emph{that} basis ($U(\cdot)U^\dag$) \emph{all} the matrices of our von Neumann algebra ($\mathcal{A}$) are at the same time cast into a block--diagonal form ($\bigoplus$), where each block corresponds to a subspace to which a structure of tensor product ($\mathds{C}^{n_i}\otimes \mathcal{K}_i$) can be given in such a way as to ensure that the matrices of our algebra are exactly the operators ($\mathcal{M}_{n_i}$) acting nontrivially only on the first space ($\mathds{C}^{n_i}$).

Theorem~\ref{vNa} characterizes the structure of the fixed subspace of an appropriate class of completely positive maps. However, it is not yet clear what is the action of the map on a generic matrix (which is not a fixed point). The following theorem, which constitutes the final result of this section, answers this question.

\begin{thm} \label{fix unital} $\\$
Let $\zeta\in\mathbf{CPu}_d$ be a completely positive, unital map such that its hermitian adjoint $\zeta^\dag\in\mathbf{CPt}_d$ has a strictly positive definite fixed point. Then in an appropriate orthonormal basis the fixed subspace $\eta_\zeta$ takes the block form
\begin{equation} \eta_\zeta\ =\ \bigoplus_i\ \mathcal{M}_{d_i^{(1)}} \otimes\, \mathds{1}_{d_i^{(2)}}\ , \label{eta zeta str} \end{equation}
where the $d_i^{(1)},d_i^{(2)}$ are positive integers. Moreover, if $\zeta$ is in addition idempotent (that is, $\zeta^2=\zeta$), its action on a generic $X\in\mathcal{M}(d;\mathds{C})$ can be written as
\begin{equation} \zeta(X)\ =\ \bigoplus_i\ \text{\emph{Tr}}_{i,2}\ [\ P_i X P_i\, (\mathds{1}_{d_i^{(1)}}\otimes \rho_{i,2}) \ ]\ \otimes\ \mathds{1}_{d_i^{(2)}}\ , \label{zeta act} \end{equation}
where
\begin{itemize}
\item $P_i$ is the orthogonal projector onto the $i$th subspace, in accordance with the decomposition~\eqref{eta zeta str} (onto $U^\dag\ \mathds{C}^{n_i}\otimes \mathcal{K}_i$ in the language of Theorem~\ref{vNa});

\item $\rho_{i,2}$ is a $d_i^{(2)} \times d_i^{(2)}$ density matrix (in the language of Theorem~\ref{vNa}, it acts on $\mathcal{K}_i$);

\item the symbol $\text{\emph{Tr}}_{i,2}$ stands for the partial trace over the second factor of the restricted $i$th subspace as indicated in~\eqref{eta zeta str} (i.e. $\mathcal{K}_i$ in the language of Theorem~\ref{vNa}).
\end{itemize}

\end{thm}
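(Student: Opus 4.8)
\emph{Approach.} The plan is to read off the block structure \eqref{eta zeta str} from the algebraic facts already in hand, and then to convert the idempotence hypothesis into the explicit formula \eqref{zeta act} by showing that $\zeta$ is a conditional--expectation--type module map onto $\eta_\zeta$. For the first part I would observe that $\eta_\zeta$, being the eigenspace of $\zeta$ for the eigenvalue $1$, is a linear subspace; it contains $\mathds{1}$ because $\zeta$ is unital, and it is closed under Hermitian adjunction because $\zeta$ is Hermiticity--preserving (if $\zeta(Z)=Z$ then $\zeta(Z^\dag)=\zeta(Z)^\dag=Z^\dag$). Theorem~\ref{Lind} provides the only missing ingredient, namely closure under matrix multiplication. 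Hence $\eta_\zeta$ is a unital $*$--subalgebra of $\mathcal{M}_d$, i.e.\ a finite--dimensional von Neumann algebra, and \eqref{eta zeta str} is exactly the conclusion of the classification Theorem~\ref{vNa} written in the adapted orthonormal basis, with $d_i^{(1)}=n_i$ and $d_i^{(2)}=\dim\mathcal{K}_i$.

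\emph{Reduction to blocks.} For the second part I would first note that idempotence forces $\mathrm{Im}\,\zeta=\eta_\zeta$: since $\zeta^2=\zeta$, every output $\zeta(X)$ is a fixed point, while every fixed point is trivially in the image. The key technical step is then to prove that $\eta_\zeta$ lies in the \emph{multiplicative domain} of $\zeta$, so that $\zeta$ is a bimodule map over it, $\zeta(aXb)=a\,\zeta(X)\,b$ for all $a,b\in\eta_\zeta$ and all $X$. This rests on the equality case of the Schwarz inequality \eqref{Kadison CP}: for $a\in\eta_\zeta$ one has $a^\dag a\in\eta_\zeta$ by the first part, whence $\zeta(a^\dag a)=a^\dag a=\zeta(a)^\dag\zeta(a)$, saturating \eqref{Kadison CP}, and Choi's standard argument promotes this saturation to multiplicativity. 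Choosing $a=P_i$ and $b=P_j$, the orthogonal projectors of \eqref{eta zeta str} (which belong to $\eta_\zeta$), gives $\zeta(P_iXP_j)=P_i\,\zeta(X)\,P_j$; since $\zeta(X)\in\eta_\zeta$ is block diagonal this vanishes for $i\neq j$, so $\zeta(X)=\sum_i\zeta(P_iXP_i)$ acts separately on each block. It therefore suffices to determine the restriction $\zeta_i$, a completely positive, unital, idempotent map of $\mathcal{M}(\mathds{C}^{d_i^{(1)}}\otimes\mathds{C}^{d_i^{(2)}})$ onto $\mathcal{M}_{d_i^{(1)}}\otimes\mathds{1}_{d_i^{(2)}}$.

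\emph{Action on a single block.} Dropping $i$ and writing $n=d_i^{(1)}$, $k=d_i^{(2)}$, $\mathcal{A}=\mathcal{M}_n\otimes\mathds{1}_k$, I would use that $\mathcal{A}$ together with its commutant $\mathcal{A}'=\mathds{1}_n\otimes\mathcal{M}_k$ generates the whole block algebra, so every operator is a sum of products $ab$ with $a\in\mathcal{A}$, $b\in\mathcal{A}'$; by the bimodule property $\zeta(ab)=a\,\zeta(b)$, and thus $\zeta$ is fixed once we know it on $\mathcal{A}'$. For $b=\mathds{1}_n\otimes B\in\mathcal{A}'$ the image $\zeta(b)$ lies in $\mathcal{A}$ and commutes with every element of $\mathcal{A}$ (again by the bimodule identity, since $b$ does), hence in the center $\mathcal{A}\cap\mathcal{A}'=\mathds{C}\mathds{1}$; therefore $\zeta(\mathds{1}_n\otimes B)=\tau(B)\,\mathds{1}$, where $B\mapsto\tau(B)$ is positive (as $\zeta$ is positive) and normalized (as $\zeta$ is unital), i.e.\ a state, so $\tau(B)=\mathrm{Tr}[\rho_{i,2}B]$ for a density matrix $\rho_{i,2}$. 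Substituting back yields $\zeta(A\otimes B)=\mathrm{Tr}[\rho_{i,2}B]\,A\otimes\mathds{1}_k$, which by linearity is precisely $\mathrm{Tr}_{i,2}[\,Y(\mathds{1}_{d_i^{(1)}}\otimes\rho_{i,2})\,]\otimes\mathds{1}_{d_i^{(2)}}$; reassembling the blocks gives \eqref{zeta act}.

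\emph{Main obstacle.} The genuine difficulty is the multiplicative--domain step in the second paragraph: everything downstream is bookkeeping, but the passage from ``$\zeta$ fixes the algebra $\eta_\zeta$'' to ``$\zeta$ is a module map over $\eta_\zeta$'' is where complete positivity is used in an essential way, through the Schwarz inequality \eqref{Kadison CP} and its equality case. A secondary point requiring care is verifying that the functional $\tau$ extracted from the center argument is genuinely a state, so that $\rho_{i,2}$ is a bona fide density matrix; this is exactly where positivity and unitality of $\zeta$ re--enter.
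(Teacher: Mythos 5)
Your proposal is correct, and while the first half necessarily coincides with the paper's argument (Lindblad's Theorem~\ref{Lind} plus the von Neumann algebra classification of Theorem~\ref{vNa}), your derivation of the explicit formula~\eqref{zeta act} takes a genuinely different route. The paper works with a Kraus representation $\zeta(\cdot)=\sum_k M_k(\cdot)M_k^\dag$: from $\zeta(P_i)=P_i$ and positivity of each $M_k P_i M_k^\dag$ it deduces that every Kraus operator preserves the blocks, and it then pins down the bilinear maps $F_i$ by an order--theoretic argument --- the chain $0\leq F_i(\Ket{\alpha}\!\!\Bra{\alpha},A)\otimes\mathds{1} = \zeta(\Ket{\alpha}\!\!\Bra{\alpha}\otimes A)\leq \zeta(\Ket{\alpha}\!\!\Bra{\alpha}\otimes\mathds{1})=\Ket{\alpha}\!\!\Bra{\alpha}\otimes\mathds{1}$ forces $F_i(\Ket{\alpha}\!\!\Bra{\alpha},A)$ to be proportional to $\Ket{\alpha}\!\!\Bra{\alpha}$ --- extracting $\rho_{i,2}$ from the resulting positive, linear, unital functional. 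You instead observe that $\eta_\zeta$ lies in the multiplicative domain of $\zeta$ (since $a$, $a^\dag a$ and $a a^\dag$ all belong to $\eta_\zeta$, the Schwarz inequality~\eqref{Kadison CP} is saturated) and invoke Choi's multiplicative--domain theorem to upgrade saturation to the bimodule identity $\zeta(aXb)=a\,\zeta(X)\,b$; block preservation (take $a=P_i$, $b=P_j$) and the action on each block then follow by pure algebra, with the commutant--center computation $\zeta(\mathcal{A}')\subseteq\mathcal{A}\cap\mathcal{A}'=\mathds{C}\mathds{1}$ replacing the paper's order argument. Your route is more conceptual --- it exhibits $\zeta$ as a conditional expectation onto $\eta_\zeta$ --- and it generalizes more readily (e.g.\ to normal conditional expectations in the von Neumann algebra setting); its cost is the reliance on Choi's theorem (equality case of the Schwarz inequality implies module multiplicativity), which is standard in the operator--algebra literature but is neither stated nor proved in the paper, so a fully self--contained version of your proof would have to include its short Stinespring--based proof. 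The paper's argument, by contrast, is elementary and self--contained, using only the Kraus form and operator ordering. Both arguments correctly identify $\rho_{i,2}$; your verification that the central functional $\tau$ is a state is the exact counterpart of the paper's identification of $f_i$ as a positive unital functional.
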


\begin{proof}
The first claim~\eqref{eta zeta str} is a restatement of Theorem~\ref{vNa}, and so there is nothing new to prove. Since the second claim is by far less obvious, let us proceed step--by--step.

\begin{itemize}

\item \emph{Step 1: $\zeta$ preserves the blocks.} We are claiming that if the input matrix $X$ has nonzero elements only in the $i$th block, then the same happens to the output matrix $\zeta(X)$. Let us prove this statement as follows. If $P_i$ is the orthogonal projector onto the $i$th block, equation~\eqref{eta zeta str} claims that it must be a fixed point of $\zeta$, i.e. $\zeta(P_i)=P_i$. Denote by $\{M_k\}_k$ a collection of Kraus operators for $\zeta$, that is
\begin{equation*} \zeta(\cdot)\, =\, \sum_k\, M_k(\cdot)M_k^\dag\ . \end{equation*}
Since $M_k P_i M_k^\dag\geq 0$ for each $k$, if $\zeta(P_i)=P_i$ then it must be true that each $M_k$ maps the $i$th block into itself. This in turn implies that $\zeta$ preserves the block structure.

\item \emph{Step 2: every output of $\zeta$ belongs to $\eta_\zeta$}. This descends directly from the further assumption that $\zeta^2=\zeta$, which implies that for all $X\in\mathcal{M}_d$ we have
\begin{equation} \zeta\left(\zeta(X)\right)=\zeta^2(X)=\zeta(X)\ . \label{fix unital eq0} \end{equation}
Adopting the shorthand notation
\begin{equation} X_{i,1}\otimes Y_{i,2}\ =\ 0 \oplus\ldots \oplus 0 \oplus \underbrace{(X_{i,1}\otimes Y_{i,2})}_{i\text{th block}} \oplus\, 0 \oplus\ldots \oplus 0\, , \label{shorthand} \end{equation}
what we have proved since now (i.e. step 1 and~\eqref{fix unital eq0}) ensures that
\begin{equation} \zeta\,(X_{i,1}\otimes Y_{i,2})\ =\ \left(\,F_i (X_{i,1}, Y_{i,2})\,\right)_{i,1} \otimes \mathds{1}_{i,2}\ , \label{fix unital eq1} \end{equation}
where each
\begin{equation*} F_i:\ \mathcal{M}_{d_i^{(1)}} \times \mathcal{M}_{d_i^{(2)}} \longrightarrow \mathcal{M}_{d_i^{(1)}} \end{equation*}
is a bilinear function.

\item \emph{Step 3: the $F_i$s defined through~\eqref{fix unital eq1} act as}
\begin{equation} F_i(X,Y)\ =\ X\ \text{Tr}[\rho_{i,2}\, Y]\ , \label{fix unital eq2} \end{equation}
\emph{for some $d_i^{(2)} \times d_i^{(2)}$ density matrices $\rho_{i,2}$s}. Let $A\in\mathcal{M}_{d_i^{(2)}}$ be such that $0\leq A\leq \mathds{1}_{d_i^{(2)}}$. Then for all pure states $\ket{\alpha}\in\mathds{C}^{d_i^{(1)}}$ one has
\begin{multline} 0\ \leq\ F_i\,(\Ket{\alpha}\!\!\Bra{\alpha},\, A) \otimes \mathds{1}_{d_i^{(2)}}\ =\ \zeta\, \left( \Ket{\alpha}\!\!\Bra{\alpha} \otimes A \right)\ \leq\\
\leq\ \zeta\, \left( \Ket{\alpha}\!\!\Bra{\alpha} \otimes \mathds{1}_{d_i^{(2)}} \right)\ =\ \Ket{\alpha}\!\!\Bra{\alpha} \otimes \mathds{1}_{d_i^{(2)}}\ , \label{fix unital eq3} \end{multline}
where the last passage is a consequence of the structure of the set of fixed points. From~\eqref{fix unital eq3} we can see that it must be
\begin{equation*} 0\ \leq\ F_i\,(\Ket{\alpha}\!\!\Bra{\alpha},\, A)\ \leq\ \Ket{\alpha}\!\!\Bra{\alpha}\ , \end{equation*}
which in turn implies that \mbox{$F_i\,(\Ket{\alpha}\!\!\Bra{\alpha},\, A)$} is proportional to \mbox{$\Ket{\alpha}\!\!\Bra{\alpha}$}, that is
\begin{equation} F_i\,(\Ket{\alpha}\!\!\Bra{\alpha},\, A)\ =\ \Ket{\alpha}\!\!\Bra{\alpha}\, f_i (A) \label{fix unital eq4} \end{equation}
for some \emph{positive, linear, unital} functional \mbox{$f_i:\, \mathcal{M}_{d_i^{(2)}}\rightarrow \mathds{R}$}. It is well--known that such a functional must have the form $f_i(Y)=\text{Tr}[\rho_{i,2}\, Y]$ for some $d_i^{(2)} \times d_i^{(2)}$ density matrix $\rho_{i,2}$. Using this fact and taking linear combinations of~\eqref{fix unital eq4} yields~\eqref{fix unital eq2}.

\item \emph{Step 4: conclusion}. Putting all together,~\eqref{fix unital eq1} and~\eqref{fix unital eq2} give
\begin{multline} \zeta\,(X_{i,1}\otimes Y_{i,2})\ =\ \left(\,X\ \text{Tr}[\rho_{i,2}\, Y]\, \right)_{i,1} \otimes \mathds{1}_{i,2}\ =\\
=\ \left( \text{Tr}_2 \, [\, (X \otimes Y)\, (\mathds{1}_{d_i^{(1)}}\otimes \rho_{i,2}) \, ] \right)_{i,1} \otimes \mathds{1}_{i,2} \label{fix unital eq5}\ . \end{multline} 
Taking linear combinations, we can see that for all $Z\in\mathcal{M}_{d_i^{(1)}}\otimes \mathcal{M}_{d_i^{(2)}}$ we must have
\begin{equation} \zeta\,(Z_i)\ =\ \left( \text{Tr}_2 \, [\, Z\, (\mathds{1}_{d_i^{(1)}}\otimes \rho_{i,2}) \, ] \right)_{i,1} \otimes \mathds{1}_{i,2} \label{fix unital eq6}\ , \end{equation} 
where of course
\begin{equation*} Z_i\ \equiv\ 0 \oplus\ldots \oplus 0 \oplus \underbrace{Z}_{i\text{th block}} \oplus\, 0 \oplus\ldots \oplus 0\, . \end{equation*}
Note that~\eqref{fix unital eq6} specify the action of $\zeta$ on each block. Taking into account also step 1, we easily see that the global action of $\zeta$ is given exactly by~\eqref{zeta act}.

\end{itemize}

\end{proof}

\subsection{General theory for all quantum channels} \label{gen theory}

At the end of Subsection~\ref{spectrum} and in Subsection~\ref{psibar}, we showed that particular simplifying assumptions can be made, without loss of generality, in order to study the structure of the phase subspace of a completely positive, trace--preserving map. Instead, through Subsection~\ref{part theory} we developed the theory for the case in which these assumptions are added as hypotheses. The task we must accomplish now is to follow this path \emph{backward}, generalizing Theorem~\ref{fix unital} to general, completely positive, trace--preserving channels. We summarize the conclusive theory in the following theorem, which is substantially Theorem 6.16 of~\cite{WolfQC} or Theorem 8 of~\cite{Wolfinverse}.

\begin{thm}[Structure Theorem for the Phase Subspace of Quantum Channels] \label{phase sub} $\\$
Let $\phi\in\mathbf{CPt}_d$ a quantum channel. Then there exist a subspace $\mathcal{K}\subseteq\mathds{C}^d$ for which an orthonormal basis can be found, such that with respect to the decomposition \mbox{$\mathds{C}^d=\mathcal{K}\oplus\mathcal{K}^\perp$} the phase subspace $\chi_\phi$ has the block structure form
\begin{equation} \chi_\phi\ =\ \underbrace{\bigoplus_i\, \mathcal{M}_{d_i^{(1)}}\otimes \rho_{i,2}}_{\text{acting on } \mathcal{K}}\ \oplus\ \underbrace{0}_{\text{acting on } \mathcal{K}^\perp}\ , \label{chi} \end{equation}
where the $\rho_{i,2}$ are density matrices.

Moreover, for all the operators $X\oplus 0$ (accordingly to the decomposition $\mathcal{K}\oplus\mathcal{K}^\perp$), the action of the spectral projector $E_\phi$ (as defined in~\eqref{E phi}) is
\begin{equation} E_\phi (X\oplus 0)\ =\ \bigoplus_i\ \text{\emph{Tr}}_{i,2}\, [\, P_i X P_i \, ] \otimes \rho_{i,2}\ \oplus\ 0\ , \label{E phi act} \end{equation}
where $P_i$ is the orthogonal projector onto the $i$th subspace, in accordance with the decomposition~\eqref{chi}, and $\text{\emph{Tr}}_{i,2}$ stands for the partial trace over the second factor of the $i$th subspace.

Finally, let us specify the action of $\phi$ on its phase subspace $\chi_\phi$. Denote by
\begin{equation} X\ =\ \bigoplus_i\, \left(X_{i,1}\otimes \rho_{i,2}\right)\ \oplus\ 0 \label{chi eq1} \end{equation}
the generic operator $X\in\chi_\phi$, decomposed accordingly to~\eqref{chi}. There are $d_i^{(1)}\times d_i^{(1)}$ unitary matrices and a permutation $\pi$ over the set of indices $i$, exchanging only indices sharing the same dimension $d_i^{(1)}$, such that

\begin{equation} \phi(X)\ =\ \bigoplus_i\, \left(U_i\,X_{\pi(i),1}\, U_i^\dag \otimes \rho_{i,2}\right)\ \oplus\ 0 \label{chi eq2} \end{equation}

for all $X\in\chi_\phi$ written in the form~\eqref{chi eq1}.

\end{thm}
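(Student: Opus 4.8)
The plan is to reduce the general channel $\phi$ to the idempotent, unital, strictly-positive-fixed-point situation already resolved by Theorem~\ref{fix unital}, following the ``backward'' path announced at the start of the subsection. The key device is the spectral projector $E_\phi$ from~\eqref{E phi}, which is a $\mathbf{CPt}$ idempotent with $\eta_{E_\phi}=\chi_\phi$. First I would apply the reduction of Subsection~\ref{psibar}: replacing $E_\phi$ by its restriction $\widetilde{E_\phi}$ to the subspace $\mathcal{K}=\sum_{E_\phi(A)=A\geq 0}\mathrm{supp}\,A$, I obtain a $\mathbf{CPt}_r$ idempotent whose maximal fixed point is \emph{strictly} positive on $\mathcal{K}$, and whose fixed subspace is $\chi_\phi$ restricted to $\mathcal{K}$ (with a trivial $0$ block on $\mathcal{K}^\perp$). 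This accounts for the $\mathcal{K}\oplus\mathcal{K}^\perp$ splitting in the statement and for the ``$\oplus\,0$'' appearing throughout~\eqref{chi},~\eqref{E phi act},~\eqref{chi eq2}.

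Next I would pass to the hermitian adjoint. Since $\widetilde{E_\phi}\in\mathbf{CPt}_r$, its adjoint $\zeta\equiv\widetilde{E_\phi}^{\,\dag}\in\mathbf{CPu}_r$ is completely positive and unital, is still idempotent, and its own adjoint $\zeta^\dag=\widetilde{E_\phi}$ has by construction a strictly positive definite fixed point. Thus $\zeta$ satisfies exactly the hypotheses of Theorem~\ref{fix unital}, which delivers the von~Neumann algebra structure $\eta_\zeta=\bigoplus_i\mathcal{M}_{d_i^{(1)}}\otimes\mathds{1}_{d_i^{(2)}}$ together with the explicit action formula~\eqref{zeta act}. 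I would then translate back from $\zeta$ to $\widetilde{E_\phi}$: taking adjoints sends $\mathds{1}_{d_i^{(2)}}$ to a density matrix $\rho_{i,2}$ in each block (the adjoint of the conditional-expectation-type map produces the stationary state of the traced-out factor), yielding the fixed subspace $\bigoplus_i\mathcal{M}_{d_i^{(1)}}\otimes\rho_{i,2}$ of~\eqref{chi} and the action~\eqref{E phi act} of $E_\phi$ by direct dualization of~\eqref{zeta act}. This establishes the first two displayed formulas.

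Finally, for the action of $\phi$ itself on $\chi_\phi$, the idea is that $\phi$ restricted to its phase subspace is invertible with peripheral (unit-modulus) eigenvalues only, so it acts as a bijection of $\chi_\phi$ preserving the algebra structure carried over from $\eta_{E_\phi}$. Because $\phi$ commutes with $E_\phi$ and preserves the positive cone, it must permute the blocks $\mathcal{M}_{d_i^{(1)}}\otimes\rho_{i,2}$, and it can only match blocks of equal dimension $d_i^{(1)}$; within matched blocks a Wigner-type rigidity argument (as in Theorem~\ref{unitary}, using that $\phi$ sends the extreme rays consistently) forces the action to be a unitary conjugation $X_{\pi(i),1}\mapsto U_iX_{\pi(i),1}U_i^\dag$, giving~\eqref{chi eq2}. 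The main obstacle I anticipate is precisely this last step: showing that the map on each block is an honest $*$-automorphism (hence unitary conjugation by the Skolem--Noether theorem or Wigner's theorem) and that the second tensor factors $\rho_{i,2}$ are genuinely fixed rather than merely permuted. Controlling how $\phi$ interacts with the $\rho_{i,2}$ factors, and verifying that the permutation $\pi$ respects dimensions, is where the complete-positivity and trace-preservation constraints must be combined most carefully.
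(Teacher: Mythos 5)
Your reduction scheme for the first two claims is exactly the paper's: pass to the idempotent spectral projector $E_\phi$, restrict to $\mathcal{K}$ via the construction of Subsection~\ref{psibar} to get $\tilde{E}_\phi$ with a strictly positive fixed point, apply Theorem~\ref{fix unital} to the adjoint $\tilde{E}_\phi^\dag$, and dualize to obtain~\eqref{chi} and~\eqref{E phi act}. That part is sound and matches the paper step for step.

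The gap is in the final step, and it is precisely the one you flag. You argue that because $\phi$ ``commutes with $E_\phi$ and preserves the positive cone'' it must permute the blocks and act rigidly on extreme rays. But cone preservation alone does not give extreme-ray preservation: a linear bijection can map a cone strictly inside itself and send extreme rays to interior rays (e.g.\ $e_1\mapsto e_1+e_2$, $e_2\mapsto e_1+2e_2$ on the positive quadrant of $\mathds{R}^2$). What you need, and what the paper supplies, is that the \emph{inverse} of $\phi$ on $\chi_\phi$ is also realized by a legitimate quantum channel: the map $I_\phi$ of~\eqref{I phi}, which by Theorem~\ref{spect prop Pt} is $\mathbf{CPt}$ whenever $\phi$ is, and which satisfies $\phi\, I_\phi = I_\phi\, \phi = E_\phi$ (the identity on $\chi_\phi$). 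With $I_\phi$ in hand, the extremality argument closes: if $\phi(A)$ with $A=\Ket{\alpha}\!\!\Bra{\alpha}_{i,1}\otimes\rho_{i,2}$ admitted a nontrivial convex decomposition inside $\chi_\phi$, applying the positive map $I_\phi$ would yield one for $A=I_\phi\phi(A)$, a contradiction; hence $\phi(A)=\Ket{\beta}\!\!\Bra{\beta}_{j,1}\otimes\rho_{j,2}$, continuity pins $j$ to $i$, linear bijectivity forces $d_i^{(1)}=d_j^{(1)}$, and the block maps send pure states to pure states bijectively (equivalently, have completely positive inverses), so Theorem~\ref{unitary} makes them unitary conjugations. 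Note also that your Skolem--Noether route would first require proving the block maps are $*$-homomorphisms, which is not available here: $\phi$ restricted to $\chi_\phi$ is a priori only linear and positive, and the multiplicativity arguments (Lindblad's theorem) apply to fixed subspaces of adjoints of idempotents, not to the action of $\phi$ itself. Your second worry, that the $\rho_{i,2}$ might be permuted rather than fixed, dissolves on its own: since $\phi(Z)=e^{i\theta}Z$ for any phase point $Z$, $\phi$ maps $\chi_\phi$ into itself, and every element of $\chi_\phi$ has its $i$th block of the form $X_{i,1}\otimes\rho_{i,2}$ with $\rho_{i,2}$ anchored to the block by the structure~\eqref{chi}; this is exactly how~\eqref{chi eq2} is written.
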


\begin{proof} $\\$
First of all, define the spectral projection $E_\phi$ as in~\eqref{E phi}, and remind (end of Subsection~\ref{spectrum}) that the phase subspace of $\phi$ coincides with the fixed subspace of $E_\phi$, that is $\chi_\phi=\eta_{E_\phi}$. Moreover, $E_\phi$ is idempotent (i.e. $E_\phi^2=E_\phi$).

Next, define the subspace $\mathcal{K}$ for $E_\phi$ exactly as in~\eqref{eq K}, and construct the quantum channel $\tilde{E}_\phi$ associated to $E_\phi$ as in~\eqref{psi bar}, that is
\begin{equation} X=x\oplus 0\quad \Rightarrow\ E_\phi(X)= \tilde{E}_\phi(x) \oplus 0\, , \label{E phi bar}\end{equation}
where all the block decompositions are understood to be in accordance to the space decomposition \mbox{$\mathds{C}^d=\mathcal{K}\oplus \mathcal{K}^\perp$}.
The discussion in Subsection~\ref{psibar} shows that $\tilde{E}_\phi$ has a strictly positive fixed point. Moreover, an easy consequence of~\eqref{E phi bar} is that also $\tilde{E}_\phi$, just like $E_\phi$, is idempotent. Another consequence, as already observed, is that \mbox{$\eta_{E\phi}=\eta_{\tilde{E}_\phi}\oplus 0$}.

Now we are in condition to apply the whole Theorem~\ref{fix unital} to $\tilde{E}_\phi^\dag$. Equation~\eqref{eta zeta str} gives us
\begin{equation} \eta_{\tilde{E}_\phi^\dag}\ =\ \bigoplus_i\ \mathcal{M}_{d_i^{(1)}} \otimes\, \mathds{1}_{d_i^{(2)}}\ . \label{eta E phi tilde} \end{equation}
Moreover, from~\eqref{zeta act} we immediately obtain that for all the operators $X$ acting on $\mathcal{K}$
\begin{equation} \tilde{E}_\phi^\dag(X)\ =\ \bigoplus_i\ \text{Tr}_{i,2}\ [\ P_i X P_i\, (\mathds{1}_{d_i^{(1)}}\otimes \rho_{i,2}) \ ]\ \otimes\ \mathds{1}_{d_i^{(2)}}\ . \label{E phi bar dag act} \end{equation}
By the very definition of the hermitian adjunction through \mbox{$\text{Tr}\,[A^\dag \psi(B)]\equiv\text{Tr}\,[\psi^\dag(A)^\dag B]$}, it is not difficult to see that
\begin{equation} \tilde{E}_\phi (X)\ =\ \bigoplus_i\ \text{Tr}_{i,2}\ [\, P_i X P_i\, ]\ \otimes\ \rho_{i,2}\ . \label{E phi bar act} \end{equation}
Equation~\eqref{E phi bar act} shows immediately that
\begin{equation*} \eta_{\tilde{E}_\phi}\ =\ \bigoplus_i \, \mathcal{M}_{d_i^{(1)}}\otimes \rho_{i,2}\, , \end{equation*}
yielding~\eqref{chi}. Observe that in general \mbox{$\eta_{\tilde{E}_\phi}\neq \eta_{\tilde{E}_\phi^\dag}$}, as anticipated. However, these two matrix subspaces have the same dimension (see~\eqref{adjoint fix}). Putting together~\eqref{E phi bar} and~\eqref{E phi bar act} we obtain also~\eqref{E phi act}.

Naturally, $\phi$ maps linearly $\chi_\phi$ into itself. In order to explicitly write the action of $\phi$ on its phase subspace, a crucial observation is that there exists a legitimate quantum channel which \emph{inverts} this action. This channel is nothing but the $I_\phi$ of~\eqref{I phi}, which indeed verifies \mbox{$\phi\, I_\phi = I_\phi\, \phi = E_\phi$}. Obviously, also $I_\phi$ maps linearly $\chi_\phi$ into itself. Let us examine the consequences of this observation. Take $\Ket{\alpha}\in\mathds{C}^{d_i^{(1)}}$ for some $i$, and consider the operator $A=\Ket{\alpha}\!\!\Bra{\alpha}_{i,1}\otimes \rho_{i,2}\in\chi_\phi$, which acts nontrivially only on the $i$th block (see~\eqref{shorthand}). Since it admits no convex decomposition in $\chi_\phi$, its image under $\phi$ must share this same property. Otherwise,
\begin{multline*} \phi(A) = \sum_j p_j B_j\, ,\quad B_j\in\chi_\phi\quad \Rightarrow\\
\Rightarrow\quad A=E_\phi(A)= I_\phi\,\phi\,(A)\, =\, \sum_j p_j\, I_\phi(B_j) \end{multline*}
is a nontrivial convex combination of $X$ in $\chi_\phi$, absurd. This shows that $\phi(A)$ must be contained inside a single block and must be of the form \mbox{$\phi(A)=\Ket{\beta}\!\!\Bra{\beta}_{j,1}\otimes \rho_{j,2}$} for some $j$. Note that the continuity of $\phi$ requires that $j$ depends only on $i$ and not on $\Ket{\alpha}$ (no ``jumps'' between different block are allowed). Moreover, since $\phi$ has to be linear and bijective when mapping $\chi_\phi$ into itself, $d_i^{(1)}=d_j^{(1)}$. This ensures that there exist a permutation $\pi$ exchanging only blocks with the same $d_i^{(1)}$ and quantum channels $\phi_{i,1}\in\mathbf{CPt}_{d_i^{(1)}}$ mapping pure states into pure states in a bijective way such that
\begin{equation*} \phi(X)=\bigoplus_i\, \left(\phi_{i,1}(X_{\pi(i),1}) \otimes \rho_{i,2}\right)\ \oplus\ 0 \end{equation*}
for all $X$ written in the form~\eqref{chi eq1}. But it is well--known (Theorem~\ref{unitary}) that the only channels $\phi_{i,1}\in\mathbf{CPt}_{d_i^{(1)}}$ mapping pure states into pure states in a bijective way are the unitary evolutions. The same conclusion can be drawn if we note that the invertibility of $\phi$ in $\chi_\phi$ by means of a quantum channel ($I_\phi$) implies that the $\phi_{i,1}$s must have completely positive inverse., and we apply again Theorem~\ref{unitary}. Anyway, the unitarity of the $\phi_{i,1}$s gives the thesis~\eqref{chi eq2}.

\end{proof}

From this very general theorem several consequences can be deduced. One of them, for instance, is the solution of the so--called \emph{inverse eigenvalue problem} for the peripheral spectrum, as given by Wolf et al. in~\cite{Wolfinverse}. In that paper, the analogous of Theorem~\ref{phase sub} is used to obtain a complete classification of all the peripheral spectra of completely positive, trace--preserving maps, as expressed as follows.

\begin{thm}[Peripheral Spectra of Quantum Channels] \label{PS} $\\$
Let $\phi\in\mathbf{CPt}_d$ be a quantum channel. Then there are integers $n_c,d_c\in\mathds{N}$ (labeled by an index $c\in C$) satisfying $\sum_c n_c d_c\leq d $, and vectors $\omega_c\in\mathds{C}^{d_c}$ whose component are phases (i.e. $|\omega_{c \alpha}|\equiv 1\ \forall\ c,\alpha$), such that the peripheral spectrum of $\phi$ is
\begin{multline} \sigma_P(\phi)\, =\, \{\, \omega_{c \alpha}\, \omega_{c \beta}^*\ e^{\frac{2\pi i m_c}{n_c}}:\ c\in C,\\
0\leq m_c\leq n_c -1,\ 1\leq \alpha,\beta\leq d_c\, \}\, . \label{ps Wolf} \end{multline}
In this way the total number $|\sigma_P(\phi)|$ of peripheral eigenvalues of $\phi$ (counting multiplicities) is
\begin{equation} |\sigma_P(\phi)|\, =\, \sum_c n_c d_c^2\, . \end{equation}
Conversely, every set of numbers as in~\eqref{ps Wolf} is the peripheral spectrum of some \mbox{$\phi\in\mathbf{CPt}_{\sum_c n_c d_c}$}, which in addition can be chosen unital and with no other nonzero eigenvalue.
\end{thm}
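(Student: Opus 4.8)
The plan is to read off the peripheral spectrum directly from the structure theorem for the phase subspace, Theorem~\ref{phase sub}, and then to realize any admissible spectrum by an explicit construction. For the forward direction, I would start from the description~\eqref{chi}--\eqref{chi eq2}: on its phase subspace $\chi_\phi$ the channel acts as $X_{i,1}\mapsto U_i\,X_{\pi(i),1}\,U_i^\dag$ on each matrix factor $\mathcal{M}_{d_i^{(1)}}$, while the fixed ancillary factors $\rho_{i,2}$ are inert and play no role in the spectrum. Since the permutation $\pi$ only exchanges blocks of equal dimension, I would decompose $\pi$ into disjoint cycles, letting the index $c\in C$ label the cycles, $n_c$ be the length of cycle $c$, and $d_c\equiv d_i^{(1)}$ the common block dimension along that cycle. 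The whole spectral problem then splits over the cycles.

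Fix one cycle $c$ of length $n_c$ with blocks of size $d_c$. Writing the eigenvalue equation $\phi(X)=\lambda X$ block by block gives $U_i Y_{\pi(i)} U_i^\dag=\lambda Y_i$, i.e. a recursion transporting $Y$ once around the cycle. Composing the $n_c$ steps yields the single closed condition $W Y W^\dag=\lambda^{n_c} Y$, where $W$ is the ordered product (holonomy) of the unitaries around the cycle. The conjugation $Y\mapsto W Y W^\dag$ on $\mathcal{M}_{d_c}$ is diagonalizable with eigenvalues $v_\alpha v_\beta^*$, where $v_1,\dots,v_{d_c}$ are the unimodular eigenvalues of $W$ and the eigenmatrices are the rank-one operators built from its eigenvectors. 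Hence $\lambda$ is a peripheral eigenvalue of $\phi$ coming from cycle $c$ iff $\lambda^{n_c}=v_\alpha v_\beta^*$ for some $\alpha,\beta$. Defining $\omega_{c\alpha}$ to be a fixed $n_c$-th root of $v_\alpha$ (unimodular since $|v_\alpha|=1$), the $n_c$ solutions of $\lambda^{n_c}=v_\alpha v_\beta^*$ are exactly $\omega_{c\alpha}\omega_{c\beta}^*\,e^{2\pi i m_c/n_c}$ with $0\le m_c\le n_c-1$, which is~\eqref{ps Wolf}. Counting multiplicities, each cycle contributes $d_c^2$ pairs $(\alpha,\beta)$ times $n_c$ roots, i.e. $n_c d_c^2$ eigenvalues, matching the restriction of $\dim\chi_\phi$ to the cycle; and since the peripheral spectrum carries only trivial Jordan blocks (Theorem~\ref{spect prop Pt}), the reconstructed eigenmatrices genuinely span, giving $|\sigma_P(\phi)|=\sum_c n_c d_c^2$. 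The bound $\sum_c n_c d_c\le d$ follows because the blocks act on $\mathcal{K}\subseteq\mathds{C}^d$ and each carries a factor $\rho_{i,2}$ of dimension $d_i^{(2)}\ge 1$, so $\sum_c n_c d_c=\sum_i d_i^{(1)}\le\sum_i d_i^{(1)} d_i^{(2)}=\dim\mathcal{K}\le d$.

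For the converse, given the data $\{n_c,d_c,\omega_c\}$ I would build a channel on $\mathds{C}^{\sum_c n_c d_c}$ as a direct sum over $c$ of elementary cyclic-shift channels. On component $c$, model the space as $\mathds{C}^{n_c}\otimes\mathds{C}^{d_c}$ and set $W_c=\mathrm{diag}(\omega_{c1}^{n_c},\dots,\omega_{c d_c}^{n_c})$, so that $W_c$ has eigenvalues $\omega_{c\alpha}^{n_c}$; then use the Kraus operators $M_{c,k}=\big(|k{+}1\rangle\langle k|\otimes W_c^{\,\delta_{k,n_c-1}}\big)$ for $k=0,\dots,n_c-1$ (indices mod $n_c$), embedded in the $c$-th summand. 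One checks directly that $\sum_{c,k}M_{c,k}^\dag M_{c,k}=\mathds{1}$ and $\sum_{c,k}M_{c,k}M_{c,k}^\dag=\mathds{1}$, so the resulting $\phi$ is completely positive, trace preserving and unital. By construction $\phi$ sends the diagonal block $X_{kk}$ of component $c$ to block $k{+}1$ (conjugating by $W_c$ at the wrap-around) and annihilates every off-diagonal block as well as every inter-component coherence; hence all eigenvalues outside the block-diagonal phase subspace are $0$. Repeating the holonomy computation above with holonomy $W_c$ (whose eigenvalues $\omega_{c\alpha}^{n_c}$ have $n_c$-th roots $\omega_{c\alpha}\,e^{2\pi i m/n_c}$) shows that the nonzero, and indeed peripheral, spectrum is precisely~\eqref{ps Wolf}, with no other nonzero eigenvalue.

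I expect the main obstacle to be bookkeeping rather than conceptual: the whole analytic difficulty is already absorbed into Theorem~\ref{phase sub}. The delicate points are, first, translating the permutation-with-unitaries action into the single holonomy condition $W Y W^\dag=\lambda^{n_c}Y$ and correctly introducing the $n_c$-th roots so that the cyclic factor $e^{2\pi i m_c/n_c}$ emerges with the right multiplicity; and second, in the converse, guaranteeing that the explicit channel has \emph{no} spurious nonzero eigenvalues, which is exactly why the Kraus operators are engineered to dephase away all coherences between blocks.
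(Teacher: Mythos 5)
Your proposal is correct and follows essentially the same route as the paper: both read the peripheral spectrum off Theorem~\ref{phase sub} by decomposing the permutation $\pi$ into cycles and extracting the eigenvalues of the resulting cyclic block structure as the $n_c$-th roots of the spectrum of the ordered product (your ``holonomy'' $W$) of the unitaries around each cycle, which is $\{\nu_\alpha \nu_\beta^*\}$ for a unitary conjugation. Your explicit Kraus-operator construction for the converse is a concrete and valid instantiation of what the paper only sketches (``run this reasoning backward, using projective measurements to avoid any other nonzero eigenvalue''), since your channel indeed dephases all off-diagonal and inter-component blocks while cyclically shifting the diagonal ones.
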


\begin{proof} $\\$
Once we have the explicit form of the action of $\phi$ on $\chi_\phi$, as expressed in~\eqref{chi eq2}, it is not too difficult to see how~\eqref{ps Wolf} descends. 
The main ingredients to be used are the following.

\begin{itemize}

\item The spectrum of a unitary channel \mbox{$\mathcal{U}(\cdot)=U(\cdot)U^\dag$} acting on a $d$--dimensional system is
\begin{multline} \sigma(\mathcal{U})\, =\, \{ \omega_\alpha\, \omega_\beta^*:\ 1\leq \alpha,\beta\leq d\, \}\, =\\
=\, \sigma(U\otimes U^*)\, =\, \sigma(U) \times \sigma(U)^*\, , \label{ps Wolf eq1} \end{multline}
where $\sigma(U)=\{\omega_\alpha :\ 1\leq\alpha\leq d \}$ is the spectrum of $U$ (composed by phases). This can be explicitly seen by letting $\phi$ act on the operators $\Ket{\omega_\alpha}\!\!\Bra{\omega_\beta}$, where $U\Ket{\omega_\alpha}=\omega_\alpha \Ket{\omega_\alpha}$.

\item Every permutation can be decomposed into a product of cycles acting on \emph{disjoint} input subsets. For instance, $(1,2,3,4,5)\rightarrow (4,5,1,3,2)$ is the simultaneous action of the $3$--cycle $(1,3,4)\rightarrow (4,1,3)$ and of the $2$--cycle $(2,5)\rightarrow (5,2)$. The matrix associated with the permutation can be unitarily brought in block diagonal form, with each block corresponding to a cycle. Therefore, the global spectrum is just the union of the spectra of the different cycles.


\item The spectrum a $n$--cyclic block matrix of the form

\begin{equation} A\ =\ \begin{pmatrix} 0 & 0 & \ldots & 0 & A_n \\ A_1 & 0 & & & 0 \\ 0 & A_2 & \ddots & & \vdots \\ \vdots & \vdots & & \ddots & \vdots \\ 0 & 0 & \ldots & \ldots & 0 \end{pmatrix}\ , \label{ps Wolf eq2} \end{equation}

where $A_i\in\mathcal{M}(d;\mathds{C})$, is composed of all the $n$th complex roots of all the eigenvalues of $A_n\ldots A_2 A_1$. This can be verified (supposing by continuity $A_n\ldots A_1$ diagonalizable) either by looking for eigenvectors of the form \mbox{$(\lambda^{n-1}\,x^T,\,	\lambda^{n-2}\, x^T A_1,\,\ldots,\,x^T A_1^T \ldots A_{n-1}^T)^T$}, where $x$ is an eigenvector of $A_n\ldots A_1$ with eigenvalue $\lambda^n$, or by proving by induction the determinant formula 
\begin{equation*} \det (A-x\mathds{1}) = (-1)^{(n-1)d}\, \det(A_n\ldots A_1-x^n\mathds{1})\, . \end{equation*}
\end{itemize}

Now, $\phi$ acts on $\chi_\phi$ as a direct sum of $n_c$--cyclic $d_c^2 n_c\times d_c^2 n_c$ block matrices as~\eqref{ps Wolf eq2}, where each $A_i$ is a $d_c^2 \times d_c^2$ matrix representing a unitary evolution (these $d_c$s are the $d_i^{(1)}$s of Theorem~\ref{phase sub}, possibly repeated). As a consequence, also $A_n\ldots A_1$ is a unitary evolution, and we know that its spectrum is given by \mbox{$\{ \nu_\alpha\, \nu_\beta^*:\ 1\leq \alpha,\beta\leq d_c\, \}$}. Choosing phases $\omega_\alpha$ such that $\omega_\alpha^{n_c}=\nu_\alpha$ gives us exactly equation~\eqref{ps Wolf}. In order to construct examples of channels having a required peripheral spectrum of the form~\eqref{ps Wolf}, one has only to run this reasoning backward, using projective measurements to avoid any other nonzero eigenvalue.

\end{proof}

In what follows, we do not need the whole power of Theorem~\ref{PS}. Instead, we will find very useful a simple, nice consequence of it. \\

\begin{cor} \label{cor Wolf} $\\$
Let $\phi\in\mathbf{CPt}_d$ be a quantum channel satisfying \mbox{$|\sigma_P(\phi)|\geq 2$} (where $|\sigma_P(\phi)|$ is the number of peripheral eigenvalues of $\phi$, counting multiplicities). Then there exists an integer $1\leq n\leq d$ such that $1$ belongs to $\sigma_P(\phi^n)$ with multiplicity strictly greater than $1$.
\end{cor}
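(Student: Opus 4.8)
The plan is to read off everything from the explicit description of $\sigma_P(\phi)$ furnished by Theorem~\ref{PS}, combined with the fact (Theorem~\ref{spect prop Pt}, point~4) that peripheral eigenvalues carry only trivial Jordan blocks. The first step is to note that passing to the power $\phi^n$ simply raises each peripheral eigenvalue to its $n$th power while preserving multiplicities: since the peripheral part of $\phi$ is diagonalizable, one has $\sigma_P(\phi^n)=\{\lambda^n:\lambda\in\sigma_P(\phi)\}$ as multisets, and eigenvalues of modulus $<1$ stay non-peripheral. Consequently the multiplicity of $1$ in $\sigma_P(\phi^n)$ equals the number, counted with multiplicity, of peripheral eigenvalues $\lambda$ of $\phi$ satisfying $\lambda^n=1$. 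The goal thus reduces to exhibiting some $1\le n\le d$ for which at least two peripheral eigenvalues (with multiplicity) are $n$th roots of unity.

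The second step is a case analysis driven by~\eqref{ps Wolf}. The crucial observation is that taking $\alpha=\beta$ together with $m_c=0$ yields the value $\omega_{c\alpha}\omega_{c\alpha}^{*}=1$ for every block $c$ and each of the $d_c$ admissible choices of $\alpha$; hence the multiplicity of $1$ in $\sigma_P(\phi)$ is at least $\sum_c d_c$. Therefore, whenever $\sum_c d_c\ge 2$ --- in particular as soon as there are two or more blocks, or some block has $d_c\ge 2$ --- the eigenvalue $1$ already appears with multiplicity $\ge 2$ in $\sigma_P(\phi)=\sigma_P(\phi^1)$, and the choice $n=1$ settles the claim.

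The only remaining configuration is a single block with $d_c=1$. Then $\omega_{c\alpha}\omega_{c\beta}^{*}$ collapses to $1$, so by~\eqref{ps Wolf} the peripheral spectrum consists exactly of the $n_c$th roots of unity $e^{2\pi i m/n_c}$, $0\le m\le n_c-1$, each with multiplicity one; moreover $|\sigma_P(\phi)|=\sum_c n_c d_c^2=n_c\ge 2$ by hypothesis, while the constraint $\sum_c n_c d_c\le d$ now reads $n_c\le d$. Choosing $n=n_c$ sends every one of these $n_c\ge 2$ roots of unity to $1$, so that $1\in\sigma_P(\phi^{n_c})$ with multiplicity $n_c\ge 2$ and $1\le n_c\le d$, which completes the argument.

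The point that requires care --- and which dictates the shape of the case split --- is that arranging $\lambda^n=1$ is only possible when $\lambda$ is a root of unity, whereas the phases $\omega_{c\alpha}$ appearing in~\eqref{ps Wolf} need not be roots of unity at all (an irrational qubit rotation being the canonical example). The resolution is to never demand that an irrational phase become $1$: such phases can occur only inside blocks with $d_c\ge 2$, and any such block already contributes $d_c\ge 2$ copies of the eigenvalue $1$ through its diagonal $\alpha=\beta$, $m_c=0$ terms. Everything else is bookkeeping with the structure~\eqref{ps Wolf}, the multiplicity identity $|\sigma_P(\phi)|=\sum_c n_c d_c^2$, and the dimension bound $\sum_c n_c d_c\le d$.
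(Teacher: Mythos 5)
Your proof is correct and takes essentially the same route as the paper: both arguments read the peripheral spectrum off the structure formula~\eqref{ps Wolf}, note that the diagonal choices $\alpha=\beta$, $m_c=0$ force the multiplicity of $1$ in $\sigma_P(\phi)$ to be at least $\sum_c d_c$ (so $n=1$ works unless there is a single block with $d_c=1$), and in that remaining case identify $\sigma_P(\phi)$ with the set of $n_c$th roots of unity, $2\leq n_c\leq d$, and choose $n=n_c$. The only addition on your side is the explicit justification that $\sigma_P(\phi^n)=\{\lambda^n:\lambda\in\sigma_P(\phi)\}$ as multisets, via the triviality of the peripheral Jordan blocks, which the paper leaves implicit.
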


\begin{proof}
Let us assume that the multiplicity of $1\in\sigma_P(\phi)$ is exactly $1$ (otherwise it will be sufficient to choose $n=1$). Since $1$ is reached in \eqref{ps Wolf} for each $\alpha=\beta,\, m_c=0$, there must be only one possible $c$ (call it $0$), and moreover $d_0=1$. But then there exists $2\leq n_0 \leq d$ such that
\begin{equation*} \sigma_P(\phi)\, =\, \{\, e^{\frac{2\pi i m_0}{n_0}}\, :\ 0\leq m_0 \leq n_0 - 1\, \}\, . \end{equation*}
As a consequence, $1$ belongs to $\sigma_P (\phi^{n_0})$ with multiplicity $n_0\geq 2$.
\end{proof}

Another useful fact to be taken in mind is the statement of Theorem~\ref{PS} for qubit channels:
\begin{multline} \phi\in\mathbf{CPt}_2\ \ \Rightarrow\ \ \sigma_P(\phi)\ =\ \{1\},\ \{1,1\},\\
\{1,-1\},\ \{1,1,e^{i\theta},e^{-i\theta}\}\, . \label{Wolf q} \end{multline}
The commas in the preceding equation identify the possible alternative spectra. Recalling also~\eqref{unitary}, we can see that the last spectrum is the signature of a unitary evolution:
\begin{equation} \sigma_P(\phi)\, =\, \{1,1,e^{i\theta},e^{-i\theta}\} \ \ \Leftrightarrow\ \ \phi\in\boldsymbol{\mathcal{U}}_2\, .  \label{unitary from spectrum} \end{equation}
This makes sense, because $\{1,e^{i\theta},e^{-i\theta}\}$ is exactly the spectrum of a rotation in $\text{SO}(3)$, and~\eqref{spectrum q} holds.

\section{Universal Entanglement--Preserving Channels} \label{sec UEP}

Entanglement--breaking channels represent the most detrimental form of noise a quantum system can undergo. Not surprisingly they have been extensively studied in the literature and a complete
characterization of their properties have been obtained -- see e.g. Sec.~\ref{sec eb}. 
The natural counter-part of these maps is constitute by those transformations which are always innocuous, in the sense that \emph{they never break the entanglement} between Alice and Bob when acting locally on Alice's subsystem, no matter how weak it could be (provided that it existed in the initial state). In this section we focus on these special transformations proving that they coincide with the set of unitary maps, a result
which one could have guessed on physical ground but, to be best of our knowledge was never formalized before.

\begin{Def} \label{UEP} $\\$
Let $\phi\in\mathbf{CPt}$ be a quantum channel acting on system $A$. We say that $\phi$ is \emph{universal entanglement--preserving} (UEP) if for each quantum system $B$ and for each global entangled state $\rho_{AB}$, $(\phi\otimes I)(\rho_{AB})$ is again entangled:
\begin{equation} \rho_{AB}\notin\mathcal{S}_{AB}\ \ \Rightarrow\ \ (\phi\otimes I)(\rho_{AB})\notin\mathcal{S}_{AB}\ . \label{UEP eq}\end{equation}
\end{Def}

The requirement that the entanglement preservation must hold \emph{for all the states of the system (even if mixed)} is crucial. As noted in~\cite{Vnonmaxent}, we can not restrict this property to the pure states alone. Indeed, this would modify Definition~\ref{UEP} in such a way as to include other channels. For example, in the case of qubit, the channels that preserve the entanglement of every pure state are all but the entanglement--breaking ones (as can be immediately seen by observing that every entangled pure state is obtained from a maximally entangled one by allowing Bob to use a local, invertible filter). Instead, we shall see that Definition~\ref{UEP} is by far more strict.

We remark how the concept of universal entanglement--preserving channel is in some sense \emph{complementary to that of entanglement--breaking channel}. As the latter \emph{always destroys} the entanglement, the former \emph{always preserves} it, no matter how much entangled the input state is. A class of trivial examples of UEP channels is composed by the unitary evolutions. The rest of this section is devoted to the proof that \emph{the unitary evolutions are the only universal entanglement--preserving channels}.

The first tool we need is the following technical lemma, concerning the boundary of the convex set $\mathcal{S}_{AB}$ of separable states on a bipartite quantum system $AB$. Recall that a point $p$ belonging to a set $S$ (in a normed vector space, for instance) is said to be \emph{internal} to $S$ if there exists a ball of arbitrary, nonzero radius centered in $p$ and entirely contained in $S$. The non--internal point of $S$ form the \emph{boundary} of $S$, indicated by $\partial S$. A simple, common way of proving that a point $q$ indeed belongs to the boundary of $S$ is to find a curve $q_\varepsilon$ laying outside $S$ for $\varepsilon>0$ but having $q$ as limit when $\varepsilon\rightarrow 0$.

\begin{prop} \label{internal separable det 0} $\\$
Let $\rho_A, \rho_B$ be densities matrices on systems $A,B$. Denote by $\partial\mathcal{S}_{AB}$ the boundary of the set of separable states on the bipartite system $AB$. Then
\begin{equation} \rho_A\otimes\rho_B \in \partial\mathcal{S}_{AB}\ \ \Leftrightarrow\ \ \det\rho_A\, \det\rho_B\, =\, 0\, . \label{internal} \end{equation}
\end{prop}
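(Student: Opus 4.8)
The plan is to work throughout inside the affine hyperplane $\mathcal{T}$ of trace--one Hermitian operators on $AB$ (with $d_A=\dim A$, $d_B=\dim B$), within which $\mathcal{S}_{AB}$ is a full--dimensional convex body; "internal'' then means relative interior and $\partial\mathcal{S}_{AB}$ the relative boundary. The first move is to record that $\det(\rho_A\otimes\rho_B)=(\det\rho_A)^{d_B}(\det\rho_B)^{d_A}$ vanishes exactly when $\det\rho_A\det\rho_B=0$, so the assertion is simply that a product state lies on $\partial\mathcal{S}_{AB}$ precisely when it fails to be strictly positive. I would prove the two implications separately. In both directions the membership $\rho_A\otimes\rho_B\in\mathcal{S}_{AB}$ is automatic, product states being manifestly separable.

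For $\det\rho_A\det\rho_B=0\Rightarrow\rho_A\otimes\rho_B\in\partial\mathcal{S}_{AB}$ I would use exactly the curve criterion recalled just before the statement. Assuming without loss of generality $\det\rho_A=0$, pick a unit vector $\ket{\psi}$ with $\rho_A\ket{\psi}=0$ and, for an arbitrary unit $\ket{\chi}$, set $\ket{\Phi}=\ket{\psi}\otimes\ket{\chi}$, so that $(\rho_A\otimes\rho_B)\ket{\Phi}=0$. The curve
\[ \omega_\varepsilon\,=\,\rho_A\otimes\rho_B+\varepsilon\left(\tfrac{1}{d_Ad_B}\mathds{1}-\ket{\Phi}\!\bra{\Phi}\right) \]
stays in $\mathcal{T}$, tends to $\rho_A\otimes\rho_B$ as $\varepsilon\to 0$, and satisfies $\bra{\Phi}\omega_\varepsilon\ket{\Phi}=\varepsilon(\tfrac{1}{d_Ad_B}-1)<0$ for $\varepsilon>0$. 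Hence $\omega_\varepsilon$ is not even positive, a fortiori not separable, exhibiting $\rho_A\otimes\rho_B$ as a limit of points outside $\mathcal{S}_{AB}$ and so on its boundary.

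The converse is the genuinely harder part: a strictly positive product state (both factors full rank) must be shown internal. Here the tool is the convex--geometry fact that a point is in the relative interior of a convex set whenever it is a strict convex combination (all weights positive) of points of the set whose affine hull is the entire affine hull of the set. I would first note that the product pure states $\ket{a}\!\bra{a}\otimes\ket{b}\!\bra{b}$ span $\mathcal{H}(d_Ad_B;\mathds{C})$, since rank--one projectors already span $\mathcal{H}(d_A;\mathds{C})$ and $\mathcal{H}(d_B;\mathds{C})$ and tensor products of spanning sets span the tensor product; hence $\operatorname{aff}\mathcal{S}_{AB}=\mathcal{T}$. Then, using $\rho_A>0$, I would write $\rho_A=\sum_\mu r_\mu\ket{\alpha_\mu}\!\bra{\alpha_\mu}$ with all $r_\mu>0$ and the $\ket{\alpha_\mu}\!\bra{\alpha_\mu}$ spanning $\mathcal{H}(d_A;\mathds{C})$: fixing an informationally complete frame with average $\tau_A=\tfrac1M\sum_j\ket{a_j}\!\bra{a_j}$, strict positivity gives $\rho_A-t\,\tau_A>0$ for small $t>0$, so $\rho_A$ is a positively weighted mixture of that frame and the spectral decomposition of the remainder. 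Doing the same for $\rho_B$ and tensoring, $\rho_A\otimes\rho_B=\sum_{\mu,\nu}r_\mu s_\nu\,\ket{\alpha_\mu}\!\bra{\alpha_\mu}\otimes\ket{\beta_\nu}\!\bra{\beta_\nu}$ is a strict convex combination of separable states whose affine hull is all of $\mathcal{T}$, hence internal to $\mathcal{S}_{AB}$.

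The main obstacle is exactly this converse, and specifically the need to decompose a full--rank $\rho_A$ into positively weighted pure states that truly span $\mathcal{H}(d_A;\mathds{C})$: the naive spectral decomposition spans only a $d_A$--dimensional subspace, which is why the frame--subtraction trick is required, together with the correct relative--interior criterion. An alternative and perhaps cleaner route to the same conclusion is to transport the known separable ball around the maximally mixed state $\mathds{1}/(d_Ad_B)$ onto a neighbourhood of $\rho_A\otimes\rho_B$ via the invertible local filter $\sigma\mapsto(\sqrt{\rho_A}\otimes\sqrt{\rho_B})\,\sigma\,(\sqrt{\rho_A}\otimes\sqrt{\rho_B})$ followed by renormalization, which is a homeomorphism preserving separability precisely because $\rho_A,\rho_B>0$; I would fall back on this only to avoid the explicit frame construction.
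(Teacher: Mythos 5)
Your proof is correct, but it takes a genuinely different route from the paper's in both directions. For $\det\rho_A\det\rho_B=0\Rightarrow\rho_A\otimes\rho_B\in\partial\mathcal{S}_{AB}$, the paper builds a curve of \emph{entangled states} $\rho_\varepsilon=\varepsilon\Ket{\Psi}\!\!\Bra{\Psi}+(1-\varepsilon)\,\rho_A\otimes\rho_B$ and certifies entanglement via a negative $2\times 2$ minor of the partial transpose; your curve $\omega_\varepsilon$ consists of trace--one Hermitian matrices that are not even positive, which is simpler, avoids the PPT criterion altogether, and fully suffices for the statement as written (non--positive matrices lie outside $\mathcal{S}_{AB}$). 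Note, however, that the paper's heavier construction yields a by--product that is silently reused later: in Step 3 of the proof of Theorem~\ref{UEP U} it is claimed that boundary membership of $\rho\otimes\mathds{1}/d_B$ ``entails'' a curve of entangled \emph{states} converging to it; for a singular product state this is not a formal consequence of boundary membership (nearby non--separable points could a priori all be non--positive), but it does follow from the paper's explicit curve, whereas yours would not supply it. For the converse direction, the paper composes the Gurvits--Barnum theorem (the maximally mixed state is interior to $\mathcal{S}_{AB}$) with the cone argument about a nontrivial convex combination of an interior point and another point of a convex set; you avoid Gurvits--Barnum entirely, writing $\rho_A\otimes\rho_B$ as a strict convex combination of product pure states whose affine hull is the whole trace--one hyperplane (the frame--subtraction trick correctly fixes the fact that a bare spectral decomposition does not affinely span), and then invoking the standard and correct fact that such a combination lies in the relative interior. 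This buys self--containedness and elementarity where the paper leans on a nontrivial cited result; your fallback via local--filter transport of the separable ball is, by contrast, essentially the paper's route in disguise. Your explicit framing in terms of the relative interior and relative boundary within the trace--one hyperplane also makes precise a point the paper leaves implicit.
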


\begin{proof}
Suppose that $\det\rho_A \det\rho_B > 0$ (that is, both $\rho_A$ and $\rho_B$ are strictly positive definite). Then $\rho_A\otimes\rho_B$ can be written as a nontrivial convex combination of the maximally mixed state $\frac{\mathds{1}}{d_A d_B}$ and a separable (actually, factorized) state. On one hand it is known (see for example~\cite{GurvitsBarnum}) that the maximally mixed state is always internal to the separable set. On the other hand, a simple reasoning, valid for all the convex sets $S$, shows that a nontrivial convex combination of an internal point $p\in S-\partial S$ and another point $q\in S$ is again internal to $S$. The following geometric picture can help to visualize that reasoning. Since $p$ is internal to the convex set $S$, we can inscribe a circular cone inside $S$, whose axis contains $p$ and $q$ as endpoints, and our nontrivial convex combination as a non--extremal point. Now, every axial point different from the vertex and the base point must be internal to the cone, and so to $S$. This proves that $\rho_A\otimes\rho_B\in\partial\mathcal{S}_{AB}$ implies $\det\rho_A \det \rho_B = 0$.

Let us turn our attention to the converse statement. Suppose for instance that $\det \rho_A=0$; we must prove that $\rho_A\otimes\rho_B\in\partial\mathcal{S}_{AB}$. Take a vector $\Ket{1}$ such that $\rho_A\Ket{1}=0$, and $\Ket{2}\perp\Ket{1}$. Consider
\begin{equation*} \Ket{\Psi}\,\equiv\,\frac{\Ket{11}+\Ket{22}}{\sqrt{2}}\ ,\quad \rho_\varepsilon\,\equiv\, \varepsilon \Ket{\Psi}\!\!\Bra{\Psi}+(1-\varepsilon)\, \rho_A\otimes\rho_B\ , \end{equation*}
where $0<\varepsilon\leq 1$.
Then
\begin{itemize}
\item $\rho_\varepsilon$ is a density matrix for each $\varepsilon\geq 0$.
\item $\lim_{\varepsilon\rightarrow 0} \rho_\varepsilon = \rho_A\otimes\rho_B$ .
\item Acting with partial transposition $T_B$ on $\rho_\varepsilon$ produces a non--positive operator, for each $\varepsilon>0$; as a consequence, the PPT criterion of separability~\cite{PeresPPT,HorodeckiPPT} states that $\rho_\varepsilon$ can not be separable for $\varepsilon>0$. To see that $\rho_\varepsilon^{T_B}$ is not positive, we will prove that its restriction to the subspace $W\equiv \text{Span}\, \{ \Ket{12},\Ket{21} \}$ has negative determinant. In fact, simple calculations show that
\begin{equation*} \rho_\varepsilon^{T_B}|_W\ =\ \begin{pmatrix} 0 & \varepsilon/2 \\ \varepsilon/2 & (1-\varepsilon)\,a \end{pmatrix} \ ,\end{equation*}
where $a$ is a real number given by
\begin{equation*} a\,\equiv\, \Bra{2}\rho_A\Ket{2}\, \Bra{1}\rho_B\Ket{1}\, . \end{equation*}
As claimed, $\det (\rho_\varepsilon^{T_B}|_W) < 0$ for all $\varepsilon>0$, and so $\rho_\varepsilon^{T_B}$ can not be positive definite in that range.
\end{itemize}

Since we can construct entangled matrices arbitrary close to $\rho_A\otimes\rho_B$, it must be $\rho_A\otimes\rho_B\in\partial\mathcal{S}_{AB}$ .
\end{proof}

Before we arrive at the final result of this section, let us formalize a little, nice lemma which will turn out to be useful many times.

\begin{lemma} \label{ZZ} $\\$
Let $Z$ be a non--normal complex matrix (that is, \mbox{$[Z,Z^\dag]\neq 0$}) of size $d\times d$. Define 
\begin{equation} Q(Z)\, \equiv\, \begin{pmatrix} \mathds{1} & Z \\ Z^\dag & Z^\dag Z \end{pmatrix}\ . \label{ZZ eq} \end{equation}
Then $Q(Z)$ is an (unnormalized) entangled state over the Hilbert space $\mathds{C}^d\otimes \mathds{C}^2$.
\end{lemma}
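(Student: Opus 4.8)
The plan is to exhibit $Q(Z)$ as a genuine (unnormalized) state and then to certify its entanglement through the PPT criterion already used in Proposition~\ref{internal separable det 0}. Positivity comes for free from the factorization
\[ Q(Z)\ =\ \begin{pmatrix} \mathds{1} \\ Z^\dag \end{pmatrix} \begin{pmatrix} \mathds{1} & Z \end{pmatrix}\ =\ W W^\dag\ , \qquad W\,\equiv\,\begin{pmatrix} \mathds{1} \\ Z^\dag \end{pmatrix}\ , \]
which shows at once that $Q(Z)\geq 0$; it is moreover nonzero, since $\text{Tr}\, Q(Z)=d+\|Z\|_2^2>0$. Hence $Q(Z)$ is a legitimate unnormalized state on $\mathds{C}^d\otimes\mathds{C}^2$, and the only substantive task left is to show it is entangled.

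Next I would detect entanglement via the partial transpose. Reading the outer $2\times 2$ block structure as pertaining to the qubit factor, the partial transposition on the qubit interchanges the two off--diagonal blocks, giving
\[ Q(Z)^{T_B}\ =\ \begin{pmatrix} \mathds{1} & Z^\dag \\ Z & Z^\dag Z \end{pmatrix}\ . \]
Since the upper--left block equals $\mathds{1}>0$, the positivity of this Hermitian block matrix is governed entirely by its Schur complement, which is
\[ Z^\dag Z\, -\, Z\, \mathds{1}^{-1}\, Z^\dag\ =\ Z^\dag Z - Z Z^\dag\ =\ [Z^\dag, Z]\ . \]
Thus $Q(Z)^{T_B}\geq 0$ if and only if $[Z^\dag, Z]\geq 0$. (The argument is insensitive to which subsystem one transposes; transposing the $d$--dimensional factor instead yields the conjugate commutator, with the same spectrum and hence the same conclusion.)

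The decisive observation --- which is also the only genuinely nontrivial point --- is that $[Z^\dag, Z]$ is a \emph{traceless} Hermitian operator, because $\text{Tr}[Z^\dag Z]=\text{Tr}[Z Z^\dag]$ identically. A traceless Hermitian matrix is positive semidefinite precisely when it vanishes, and $[Z^\dag, Z]=0$ is exactly the normality condition. Since $Z$ is assumed non--normal, $[Z^\dag, Z]\neq 0$ necessarily possesses a strictly negative eigenvalue, so its Schur complement fails to be positive and $Q(Z)^{T_B}\not\geq 0$. The PPT criterion~\cite{PeresPPT,HorodeckiPPT} then forces $Q(Z)$ to be entangled, completing the argument. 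Everything apart from this last step is routine linear algebra; the tracelessness remark is what upgrades the hypothesis $[Z,Z^\dag]\neq 0$ into a concrete failure of positivity, and it is where I expect the only real subtlety to lie.
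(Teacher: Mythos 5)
Your proof is correct and takes essentially the same route as the paper's: the identical factorization of $Q(Z)$ as a positive semidefinite product establishes that it is a legitimate unnormalized state, and the same Schur--complement criterion applied to the partially transposed block matrix, combined with the observation that $[Z,Z^\dag]$ is traceless Hermitian and hence positive semidefinite only if zero, contradicts non--normality via the PPT criterion. The only cosmetic difference is that the paper works with $\left(Q(Z)^{T_B}\right)^*$ rather than $Q(Z)^{T_B}$ itself, which, as you note, is immaterial since conjugation (equivalently, transposing the other factor) preserves the spectrum of a Hermitian matrix.
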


\begin{proof}
Since
\begin{equation*} Q(Z)\, =\, \begin{pmatrix} \mathds{1} & Z \end{pmatrix}^\dag\, \begin{pmatrix} \mathds{1} & Z \end{pmatrix}\ , \end{equation*}
it is obvious that $Q(Z)\geq 0$ is a legitimate unnormalized state. To show that it is entangled, it suffices to apply the PPT criterion and argue that $Q(Z)^{T_B}$ is not positive. A straightforward calculation gives
\begin{equation*} \left(Q(Z)^{T_B}\right)^*\, =\, \begin{pmatrix} \mathds{1} & Z^\dag \\ Z & Z^\dag Z \end{pmatrix}\ . \end{equation*}
Now, it is well--known (see for instance~\cite{HJ1}, p. 472) that a block matrix $\left(\begin{smallmatrix} A & B \\ B^\dag & C \end{smallmatrix}\right)$ with $A>0$ is positive if and only if $C\geq B^\dag A^{-1} B$. In our case this would give the condition $Z^\dag Z \geq Z Z^\dag$, that is $[Z,Z^\dag]\leq 0$. Since $\text{Tr}\,[Z,Z^\dag]=0$, this would imply that $[Z,Z^\dag]=0$, which is forbidden by hypothesis.

\end{proof}

Now we are in position to state and prove the main result about the universal entanglement--preserving channels. This is the content of the following theorem. 

\begin{thm}[Classification of UEP Channels] \label{UEP U} $\\$
The only universal entanglement--preserving channels are the unitary evolutions.
\end{thm}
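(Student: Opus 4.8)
The plan is to prove both inclusions. The easy direction is that every unitary evolution is UEP: since $\mathcal{U}(\cdot)=U(\cdot)U^\dag$ is invertible with inverse $\mathcal{U}^{-1}(\cdot)=U^\dag(\cdot)U$ again a channel, the map $\mathcal{U}\otimes I$ carries separable states to separable states bijectively, so it can neither create nor destroy entanglement. For the converse I would aim to show that a UEP channel $\phi$ maps pure states to pure states and is not of the replacement form $\phi(X)=\Ket{\alpha}\!\!\Bra{\alpha}\,\mathrm{Tr}\,X$, and then invoke Theorem~\ref{unitary}, point~4, to conclude that $\phi$ is unitary. Excluding the replacement form is immediate: such a map sends every $\rho_{AB}$ to the product state $\Ket{\alpha}\!\!\Bra{\alpha}\otimes\rho_B$, hence is entanglement--breaking (Theorem~\ref{str thm EB}) and in particular not UEP. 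So the whole content is the implication ``UEP $\Rightarrow$ pure states go to pure states''.

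The key quantitative step uses Proposition~\ref{internal separable det 0} together with the continuity of $\phi\otimes I$. I would first show that a UEP channel cannot send a rank--deficient state to a strictly positive one. Indeed, suppose $\det\rho=0$ but $\phi(\rho)>0$, and pick any strictly positive $\tau$ on an ancilla $B\cong\mathds{C}^d$. By Proposition~\ref{internal separable det 0} the product $\rho\otimes\tau$ lies on $\partial\mathcal{S}$, so there are entangled states $\rho_\varepsilon\notin\mathcal{S}$ with $\rho_\varepsilon\to\rho\otimes\tau$; on the other hand $\phi(\rho)\otimes\tau$ is a product of two strictly positive matrices, hence by the same Proposition an \emph{internal} point of $\mathcal{S}$. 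By continuity $(\phi\otimes I)(\rho_\varepsilon)\to\phi(\rho)\otimes\tau$ enters the interior of $\mathcal{S}$ for small $\varepsilon$, so an entangled input is mapped to a separable output --- contradicting UEP. Applying this to $\rho=\Ket{\alpha}\!\!\Bra{\alpha}$ shows that the image of every pure state is rank--deficient; in the qubit case this already means rank $1$, i.e. pure, so Theorem~\ref{UEP U} follows for $d=2$.

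The remaining, and genuinely hard, step is to upgrade ``rank $\le d-1$'' to ``rank $1$'' in dimension $d\ge3$, since the boundary argument only excludes the full--rank case and cannot by itself distinguish rank $1$ from an intermediate rank $2\le r\le d-1$ (a unitary image is itself rank--deficient). I would attack this by induction on $d$: choosing a pure state $\Ket{\alpha}$ whose image $\sigma=\phi(\Ket{\alpha}\!\!\Bra{\alpha})$ has maximal rank $r$, restricting attention to the support $V=\mathrm{supp}\,\sigma$, and trying to extract from $\phi$ a UEP channel on a space of dimension $<d$ to which the inductive hypothesis applies. Here an output state that lives on a boundary stratum can no longer be certified separable by an interior argument, so I would certify the \emph{entanglement} of carefully built inputs directly through the PPT criterion, using the non--normal--matrix states $Q(Z)$ of Lemma~\ref{ZZ} (or the coherent superpositions $\Ket{\alpha}\Ket{0}+\Ket{\beta}\Ket{1}$ on a qubit ancilla), and conversely try to exhibit an entangled input whose image is PPT, hence separable, whenever $r\ge2$. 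Controlling this intermediate--rank case is the main obstacle of the proof.

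Once pure states are known to go to pure states and the replacement form is excluded, the conclusion is purely structural: by Theorem~\ref{unitary} (whose proof rests on Wigner's Theorem~\ref{Wigner}, with the anti--unitary alternative ruled out by complete positivity) the channel $\phi$ must be a unitary evolution, completing the characterization.
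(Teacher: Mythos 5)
Your easy direction, the exclusion of the replacement map, and the boundary argument are all sound: the claim that a UEP channel cannot map a rank--deficient positive input to a strictly positive output is exactly the paper's own use of Proposition~\ref{internal separable det 0} (the paper states it as ``$\psi$ preserves non--invertibility of nonnegative matrices''), and your $d=2$ conclusion is complete and correct. However, for $d\geq 3$ your proposal stops precisely where the content of the theorem lies. You correctly identify that the boundary argument only yields $\mathrm{rk}\,\phi(\Ket{\alpha}\!\!\Bra{\alpha})\leq d-1$, and the inductive scheme you sketch to upgrade this to rank $1$ is never carried out: there is no construction of the claimed lower--dimensional UEP channel, no verification that a restriction of $\phi$ to the support of a maximal--rank image is trace--preserving or UEP on that subspace, and no argument producing an entangled input with separable (PPT) output whenever the intermediate rank $2\leq r\leq d-1$ occurs. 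As written, this is an acknowledged missing step, not a proof.

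The paper closes this gap by changing what is shown to be preserved: not the rank of positive matrices, but the \emph{entire spectrum} of arbitrary hermitian matrices. The route is: (i) prove $\phi(\mathds{1})>0$ (itself nontrivial; it uses Lemma~\ref{ZZ} and a separability argument); (ii) pass to the \emph{unital} completely positive map $\psi(\cdot)=\phi(\mathds{1})^{-1/2}\,\phi(\cdot)\,\phi(\mathds{1})^{-1/2}$, which is still UEP; (iii) run your boundary argument on $\psi$ to get $\rho\geq 0$, $\det\rho=0\Rightarrow\det\psi(\rho)=0$; (iv) --- and this is the idea your proposal is missing --- apply the Kadison inequality~\eqref{Kadison P} to extend this to \emph{all} hermitian $X$: if $\det X=0$ then $\det\psi(X^2)=0$, so there is $\Ket{\eta}$ with $0=\Braket{\eta|\psi(X^2)|\eta}\geq \left(\psi(X)\Ket{\eta}\right)^\dag\left(\psi(X)\Ket{\eta}\right)$, whence $\det\psi(X)=0$; (v) unitality then turns singularity preservation into spectrum preservation, since $\psi(X-\lambda\mathds{1})=\psi(X)-\lambda\mathds{1}$. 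Once the spectrum is preserved as a multiset, a pure state (spectrum $\{1,0,\ldots,0\}$) must go to a pure state in every dimension --- the intermediate--rank dichotomy never arises --- and moreover the spectrum of $\Ket{\alpha}\!\!\Bra{\alpha}+\Ket{\beta}\!\!\Bra{\beta}$ shows that moduli of inner products are preserved, which is exactly the hypothesis of Wigner's Theorem~\ref{Wigner}; complete positivity kills the anti--unitary branch, and trace preservation finally forces $\phi(\mathds{1})=\mathds{1}$. (The paper cannot use Theorem~\ref{unitary}, point 4, as you propose, because $\psi$ is no longer trace--preserving; your shortcut would be legitimate only after establishing purity preservation for $\phi$ itself, which is the very step you lack.) If you want to salvage your plan, the unitalization--plus--Kadison step is the missing ingredient that converts your local rank information into the global spectral information needed in dimension $d\geq 3$.
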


\begin{proof}
We already know that a unitary channel is definitely UEP. The problem is to prove the converse statement. So, let $\phi$ be a universal entanglement--preserving channel.

\begin{enumerate}

\item \emph{Step 1: $\phi(\mathds{1})>0$}. Suppose by contradiction that there exists a pure state $\Ket{\alpha}$ such that $\Bra{\alpha}\phi(\mathds{1})\Ket{\alpha}=0$. Then we can rewrite this condition as $\text{Tr}[\,\phi^\dag(\Ket{\alpha}\!\!\Bra{\alpha})\,]=0$, that is (using the fact that $\phi^\dag$ is a positive map, too) $\phi^\dag(\Ket{\alpha}\!\!\Bra{\alpha})=0$. This shows that $\phi^\dag$ admits a zero eigenvalue (see Subsection~\ref{spectrum}), which in turn implies that also $\phi$ admits a zero eigenvalue (because $\sigma(\phi)=\sigma(\phi^\dag)^*$). Then, consider an input matrix $X$ such that $\phi(X)=0$. Theorem~\ref{spect prop Pt} ensures that $X$ can be chosen hermitian and traceless, and this in particular shows that it can not be a multiple of an orthogonal projector. This last property allows us to choose another hermitian matrix $Y$ whose support is entirely contained in that of $X$, and such that $[X,Y]\neq 0$. Defining for $Z_\varepsilon\equiv X+i\varepsilon Y$ (so that $[Z_\varepsilon,Z_\varepsilon^\dag]\neq 0$ if $\varepsilon>0$) allows us to exploit Lemma~\ref{ZZ} to say that $Q(Z_\varepsilon)$ (as defined in~\eqref{ZZ eq}) is an (unnormalized) entangled state for all $\varepsilon>0$. Then
\begin{multline*} (I\otimes\phi)\left(Q(Z_\varepsilon)\right)\ =\ \begin{pmatrix} \phi(\mathds{1}) & i \varepsilon\, \phi(Y) \\ -i \varepsilon\, \phi(Y) & \phi\,(Z_\varepsilon^\dag Z_\varepsilon) \end{pmatrix}\ =\\
=\ (I\otimes\phi)\, \begin{pmatrix} \mathds{1} & i \varepsilon Y \\ -i \varepsilon Y & Z_\varepsilon^\dag Z_\varepsilon \end{pmatrix} \end{multline*}
must be again entangled for all $\varepsilon>0$, because of the very definition of UEP channel. However, we will show in a moment that it is instead \emph{separable} for sufficiently small $\varepsilon>0$. This goal will be reached by arguing that indeed already $\left(\begin{smallmatrix} \mathds{1} & i \varepsilon Y \\ -i \varepsilon Y & Z_\varepsilon^\dag Z_\varepsilon \end{smallmatrix}\right)$ is a positive, separable (unnormalized) state for sufficiently small $\varepsilon>0$. Since everything happens inside the support of $X$, we can simply project down here and prove the separability of the resulting matrix. This is the same as to suppose $\det X\neq 0$, i.e. $X^2>0$. With this hypothesis, we see that 
\begin{equation*} \lim_{\varepsilon\rightarrow 0}\ \begin{pmatrix} \mathds{1} & i \varepsilon Y \\ -i \varepsilon Y & Z_\varepsilon^\dag Z_\varepsilon \end{pmatrix}\ =\ \begin{pmatrix} \mathds{1} & 0 \\ 0 & X^2 \end{pmatrix} \end{equation*}
is indeed a nontrivial convex combination of the completely mixed state (which as already mentioned is \emph{internal} to the set of separable state, as proved in~\cite{GurvitsBarnum}) and another separable matrix. By virtue of the same reasoning we used in proving Proposition~\ref{internal separable det 0}, we can conclude that $\left(\begin{smallmatrix} \mathds{1} & 0 \\ 0 & X^2 \end{smallmatrix}\right)$ is again \emph{internal} to the separable set, and so that its approximation $\left(\begin{smallmatrix} \mathds{1} & i \varepsilon Y \\ -i \varepsilon Y & Z_\varepsilon^\dag Z_\varepsilon \end{smallmatrix}\right)$ is separable for sufficiently small $\varepsilon>0$.

\item \emph{Step 2: making $\phi$ unital}. Thanks to step 1, let us define the map $\psi$ by
\begin{equation} \psi(X)\,\equiv\ \phi(\mathds{1})^{-1/2}\,\phi(X)\,\phi(\mathds{1})^{-1/2}\, . \label{UEP eq psi} \end{equation}
Observe that:
\begin{itemize}
\item $\psi$ is again completely positive;
\item $\psi$ is \emph{unital} (even if no longer trace--preserving).
\item $\psi$ is again UEP (the definition of UEP makes sense also for non--trace--preserving maps), because conjugation by means of an invertible matrix does not affect separability.
\end{itemize}

\item \emph{Step 3: $\psi$ preserves non--invertibility of nonnegative matrices}. We claim that
\begin{equation} \forall\, \rho\geq 0\, ,\ \det \rho=0\ \Rightarrow\ \det\psi(\rho)=0\, . \label{UEP eq1} \end{equation}
We can suppose that $\rho$ is a normalized density matrix. Consider a second system $B$ of dimension $d_B\geq 2$. If $\rho\geq 0$ but $\det \rho =0$, we know from Proposition~\ref{internal separable det 0} that \mbox{$\rho\otimes \frac{\mathds{1}}{d_B}\in\partial\mathcal{S}_{AB}$}. This entails that one can construct a curve $R_\varepsilon$ ($0< \varepsilon\leq 1$) composed of \emph{entangled states} of $AB$ such that  
\begin{equation*} \lim_{\varepsilon\rightarrow 0^+}\, R_\varepsilon\ =\ \rho\otimes\frac{\mathds{1}}{d_B}\ . \end{equation*}
Since $\psi$ is UEP, it must be $(\psi\otimes I)(R_\varepsilon)\notin\mathcal{S}_{AB}$ for each $\varepsilon>0$. Moreover, observe that
\begin{multline*} \lim_{\varepsilon\rightarrow 0^+}\, (\psi\otimes I)(R_\varepsilon)\ =\ (\psi\otimes I)\,\left(\lim_{\varepsilon\rightarrow 0^+}\, R_\varepsilon\right)\ =\\
=\ (\psi\otimes I)\,\left(\rho\otimes\frac{\mathds{1}}{d_B}\right)\ =\ \psi(\rho)\otimes\frac{\mathds{1}}{d_B}\ \in\ \mathcal{S}_{AB}\ . \end{multline*}
Strictly speaking, $\psi(\rho)$ is no longer a density matrix, because it is not guaranteed to have unit trace. Anyway, it makes sense to say that its normalized form is indeed separable. We have proved that there exists a curve composed of entangled states whose limit is the separable state $\psi(\rho)\otimes\frac{\mathds{1}}{d_B}$. This is the same as to say that $\psi(\rho)\otimes\frac{\mathds{1}}{d_B}\in\partial\mathcal{S}_{AB}$, and so Proposition~\ref{internal separable det 0} implies that
\begin{equation*} \det\, \psi(\rho)\ =\ 0\, . \end{equation*}

\item \emph{Step 4: $\psi$ preserves non--invertibility of general hermitian matrices}. We have to prove that
\begin{equation} \forall\, X=X^\dag\ ,\  \det X=0\ \Rightarrow\ \det\psi(X)=0\, . \label{UEP eq2}\end{equation}
Let us exploit step 3 by applying~\eqref{UEP eq1} to the positive matrix $X^2$:
\begin{multline} \det X=0\ \Rightarrow\ \det(X^2)=0\ \Rightarrow\ \det\psi(X^2)=0 \ \Rightarrow \\
\Rightarrow\ \exists \Ket{\eta}\, :\ \Braket{\eta | \psi(X^2) | \eta}=0\, . \label{UEP eq2b}\end{multline}
Since $\psi$ is positive and unital, we can apply the Kadison's inequality~\eqref{Kadison P} and argue that
\begin{equation} \Braket{\eta | \psi(X^2) | \eta}\ \geq\ \left(\, \psi(X)\Ket{\eta}\, \right)^\dag\ \left(\, \psi(X)\Ket{\eta}\, \right)\, . \label{UEP eq2c} \end{equation}
Naturally,~\eqref{UEP eq2b} and~\eqref{UEP eq2c} together show exactly that there exists $\Ket{\eta}$ such that $\psi(X)\Ket{\eta}=0$, i.e. that $\det\psi(X)=0$.

\item \emph{Step 5: $\psi$ preserves the spectrum}. From now on, we can proceed on the guideline drawn by~\cite{WolfQC} (see Chap. 3, p.~66). A crucial fact is that $\psi$ must preserve the spectrum of an hermitian matrix \emph{as a set}, i.e. that
\begin{equation} \forall\, X=X^\dag ,\ \lambda\in\sigma(X) \ \Rightarrow\ \lambda\in\sigma\left( \psi(X) \right)\, . \label{UEP eq3}\end{equation}
Indeed, by~\eqref{UEP eq2} one has
\begin{multline*} \lambda\in\sigma(X) \ \Rightarrow\ \det(X-\lambda\mathds{1})=0\ \Rightarrow\ \det\psi(X-\lambda\mathds{1})=0\ \Rightarrow\\
\Rightarrow\ \det\left( \psi(X)-\lambda\mathds{1} \right) = 0\ \Rightarrow\ \lambda\in\sigma\left( \psi(X) \right)\, . \label{UEP eq3}\end{multline*} 
Observe that~\eqref{UEP eq3} implies that also the \emph{multiplicities} of the eigenvalues are the same for $X$ and $\psi(X)$ (i.e. $\psi$ preserves the spectra as \emph{multisets}). Indeed, the set of hermitian matrices with non--degenerate spectrum is dense in $\mathcal{H}(d;\mathds{C})$ (as can be easily seen by perturbing the eigenvalues). Take a sequence $X_\varepsilon$ (with $0<\varepsilon\leq 1$) of hermitian matrices enjoying this property, and such that $\lim_{\varepsilon\rightarrow 0^+}=X$. Denote as usual by $\sigma(\cdot)$ the spectrum as a multiset (i.e. counting multiplicities). Then, from~\eqref{UEP eq3} we deduce that
\begin{equation*} \sigma(\psi(X_\varepsilon))\, \equiv\, \sigma(X_\varepsilon)\quad \forall\, \varepsilon>0\, . \end{equation*}
On the other hand, the continuity of the eigenvalues requires that
\begin{multline*} \sigma(\psi(X))\, =\, \sigma\left( \lim_{\varepsilon\rightarrow 0} \psi(X_\varepsilon) \right)\, =\, \lim_{\varepsilon\rightarrow 0} \sigma(\psi(X_\varepsilon))\, =\\ 
=\, \lim_{\varepsilon\rightarrow 0} \sigma(X_\varepsilon)\, =\, \sigma(\lim_{\varepsilon\rightarrow 0} X_\varepsilon)\, =\, \sigma(X)\, . \end{multline*}
In taking the above limits we have implicitly understood an obvious metric over the spectra.

\item \emph{Step 6: preparing the ground for Wigner's theorem}. We now claim that $\psi$ sends pure states into pure states in such a way as to preserve the moduli of the scalar products:
\begin{gather} \psi\, (\Ket{\alpha}\!\!\Bra{\alpha})\,\equiv\,\Ket{\alpha'}\!\!\Bra{\alpha'}\quad \forall \Ket{\alpha}\, , \label{UEP eq5a}\\
|\Braket{\alpha|\beta}|\,\equiv\,|\Braket{\alpha'|\beta'}|\quad \forall\, \Ket{\alpha},\Ket{\beta}\, . \label{UEP eq5b} \end{gather}
The proof is as follows. On one hand, for each $\Ket{\alpha}$ we have
\begin{multline*} \sigma\,(\Ket{\alpha}\!\!\Bra{\alpha})\, =\, \{1,\underbrace{0,\ldots,0}_{d^2-1}\}\ \Rightarrow\\
\Rightarrow\ \sigma\,(\,\psi(\Ket{\alpha}\!\!\Bra{\alpha})\,)\, =\, \{1,\underbrace{0,\ldots,0}_{d^2-1}\}\ \Rightarrow\\
\Rightarrow\ \psi(\Ket{\alpha}\!\!\Bra{\alpha}) = \Ket{\alpha'}\!\!\Bra{\alpha'}\, . \end{multline*}
On the other hand, take two states $\Ket{\alpha},\Ket{\beta}$, and denote by $\Ket{\alpha'},\Ket{\beta'}$ their images under the action of $\psi$. Then
\begin{multline*} \{\, 1+|\Braket{\alpha|\beta}|\, ,\ 1-|\Braket{\alpha|\beta}|\, ,\ \underbrace{0,\ldots,0}_{d^2-2}\, \}\ =\\
=\ \sigma\, (\,\Ket{\alpha}\!\!\Bra{\alpha}+\Ket{\beta}\!\!\Bra{\beta}\,)\ =\\
=\ \sigma\, \left(\,\psi(\,\Ket{\alpha}\!\!\Bra{\alpha}+\Ket{\beta}\!\!\Bra{\beta}\,)\,\right)\ =\\
=\ \sigma\, \left(\,\Ket{\alpha'}\!\!\Bra{\alpha'}+\Ket{\beta'}\!\!\Bra{\beta'}\,\right)\ =\\
=\ \{\, 1+|\Braket{\alpha'|\beta'}|\, ,\ 1-|\Braket{\alpha'|\beta'}|\, ,\ \underbrace{0,\ldots,0}_{d^2-2}\, \}\ ,
\end{multline*}
from which we deduce exactly
\begin{equation*} |\Braket{\alpha|\beta}|\, =\, |\Braket{\alpha'|\beta'}|\, . \end{equation*}

\item \emph{Step 7: conclusion}. Thanks to~\eqref{UEP eq5b}, the hypothesis of Wigner's Theorem~\ref{Wigner} are satisfied. Since an anti--unitary transformation can be represented as the complex conjugation in some basis followed by a unitary operation, we must conclude that for all vectors $\Ket{\alpha}$
\begin{equation*} \Ket{\alpha'}\,=\, e^{i\varphi(\alpha)}\,U\Ket{\alpha}\quad \text{or}\quad \Ket{\alpha'}\,=\, e^{i\varphi(\alpha)}\,U\Ket{\alpha^*}\, , \end{equation*}
where $U$ is unitary. Therefore, one has
\begin{equation*} \psi\,(\Ket{\alpha}\!\!\Bra{\alpha})\, \equiv\, U\, \Ket{\alpha}\!\!\Bra{\alpha}\, U^\dag\end{equation*}
or
\begin{equation*} \psi\,(\Ket{\alpha}\!\!\Bra{\alpha})\,\equiv\, U\, \Ket{\alpha^*}\!\!\Bra{\alpha^*}\, U^\dag\, \equiv\, U \Ket{\alpha}\!\!\Bra{\alpha}^T\, U^\dag\, . \end{equation*}
Actually, this implies that for all the input matrices $X$ one has
\begin{equation*} \psi(X)\,\equiv\, UXU^\dag\quad \text{or}\quad \psi(X)\,\equiv\, UX^T U^\dag\, . \end{equation*}
The second option has to be discarded, because $\psi$ is completely positive, and the matrix transposition is only positive. Going back to $\phi$ by means of~\eqref{UEP eq psi}, we obtain
\begin{equation*} \phi(X)\, \equiv\ \phi(\mathds{1})^{1/2}\,UXU^\dag\,\phi(\mathds{1})^{1/2}\, . \end{equation*}
Since $\phi$ has to be trace--preserving, we can easily see that it must be $U^\dag\phi(\mathds{1})U = \mathds{1}$, that is $\phi(\mathds{1})=\mathds{1}$. Hence, we deduce exactly $\phi(X) \equiv UXU^\dag$ for all the input matrices $X$, as we claimed.

\end{enumerate}
\end{proof}

This proof of Theorem~\ref{UEP U}, which is only one of the several different proofs, is technically quite complex. However, this can not distract our attention from its physical meaning. Concerning the entanglement between Alice and Bob, Theorem~\ref{UEP U} says a simple, intuitive thing.
\begin{center}
\emph{A truly noisy interaction of Alice's subsystem with an external environment definitely breaks some form of entanglement between Alice and Bob.}
\end{center}

From a conceptual point of view, the context of our investigations is remarkably clarified. This characterization theorem can be seen as exactly specular to Theorem~\ref{str thm EB}. The latter specifies an operational meaning (the Holevo form~\eqref{Holevo form}) for those channels which always break the quantum correlations. The former, instead, claims that only the unitary evolutions can definitely make the entanglement survive.

All that strengthens our belief that the deterioration to which the entanglement is subjected can be used to quantify the amount of noise introduced by a quantum channel (as proposed in~\cite{V,LV}). In this respect, we have proved that this kind of measure is \emph{faithful}: if no entanglement is wasted, there is no true noise acting on the system. So, the main purpose of the following section is to further investigate the classifications induced on the set of quantum channels by the entanglement preservation properties.

\section{Entanglement--Saving Channels} \label{sec ES}

In our previous paper~\cite{LV}, we began to investigate the properties of quantum channels by means of the effect they have on the entanglement of a bipartite state. We proposed two (inverse) measures of noise for a generic channel $\phi$, namely the \emph{direct $n$--index} (already defined in~\cite{V}) and the \emph{filtered $\mathcal{N}$--index}. The former is simply the minimum number of consecutive iterations of $\phi$ necessary to obtain an entanglement--breaking channel. Instead, the latter index describes the optimized scenario, in which the application of intermediate quantum channels (called \emph{filters}) is allowed in order to save as long as possible the entanglement. In this paper we will focus only on the direct $n$--index.

\subsection{Statement of the problem}

We begin our study of this functional by posing the following question: which kind of noise is so weak that it never separates completely a maximally entangled state, even if applied an arbitrary number of times? Within the language developed through~\cite{V,LV}, these \emph{entanglement--saving} channels are characterized by an infinite value of the direct $n$--index. So we can give the following definition.

\begin{Def}[Entanglement--Saving Channels] \label{ES} $\\$
A quantum channel $\phi\in\mathbf{CPt}$ is called \emph{entanglement--saving} (ES) if $n(\phi)=\infty$, i.e. if
\begin{equation*} \phi^n\notin\mathbf{EBt}\quad\forall\ n\in\mathds{N}\ . \end{equation*}
\end{Def}

The main goal of the rest of this section is to find an adequate characterization of these channels. The objective will be completely achieved almost everywhere (i.e. apart from a set of zero measure), and in arbitrary dimension. As a corollary, the problem in the case of qubit will be completely solved.

The first, elementary property of the ES set is its closure under unitary conjugation, that is
\begin{equation} \phi\ \text{entanglement--saving}\ \ \Leftrightarrow\ \ \mathcal{U}\phi\,\mathcal{U}^\dag\ ,\quad \forall\, \mathcal{U}\in\boldsymbol{\mathcal{U}}\ . \label{nUC} \end{equation}
Moreover, observe that if a channel $\phi\in\mathbf{CPt}$ is \emph{not} entanglement--saving, then the elements of the sequence $(\phi^n)_{n\in\mathds{N}}$ become eventually entanglement--breaking for \emph{sufficiently large $n$}, i.e. for all the $n$ greater than or equal to a certain threshold (which is by definition the direct index $n(\phi)$).  

As we shall see, a remarkable simplification in the theory of entanglement--saving channels occurs if we restrict our analysis to the set of quantum channels $\phi$ such that $\det(\phi)\neq 0$. Here the determinant of a linear map is the product of its eigenvalues, as usual (see Subsection~\ref{spectrum}). Although this assumption could seem rather arbitrary, we will see that it is indeed quite natural at least for a single qubit. In fact, in that case it causes no loss of generality. In order to take advantage of this restriction, we need some preliminary results concerning the entanglement--breaking channels.

\subsection{Preliminaries}
It is well--known (Theorem~\ref{spect prop Pt}) that every quantum channel has a positive fixed point. However, this density matrix can be or not be \emph{strictly} positive. Recall that we distinguish between the two alternatives by calling \emph{strictly positive} a matrix $A>0$, and \emph{semipositive} a matrix $A\geq 0$ having at least one zero eigenvalue (i.e. satisfying $\det A=0$). Whenever this distinction is not necessary, we say simply \emph{positive}. 

To appreciate the importance of the question and its link with the separability problem, recall Proposition~\ref{internal separable det 0}. We will find this result quite useful also in this context. Indeed, we proved that matrices of the form $\rho_0\otimes\frac{\mathds{1}}{d}$, with $\rho_0$ only semipositive, belong to the boundary of the separable set. Therefore, for an entanglement--breaking channel (whose images are \emph{always} separable) the presence of a semipositive fixed point must be a rather delicate situation. Our immediate purpose is to discuss the consequences of this possibility. Actually, we will explore a more general circumstance through the following theorem.

\begin{thm}[Image of Semipositive Matrices Through EB Channels] \label{EB det 0} $\\$
Let $\phi\in\mathbf{EB}_d$ be an entanglement--breaking channel. Suppose that there exists a semipositive matrix $A\geq 0$, with $\text{\emph{rk}}\,A=r<d$, such that $\text{\emph{rk}}\, \phi(A)=s$ verifies $r^2+s^2<2dr$. Then
\begin{equation} \dim \ker \phi\, \geq\, 2dr-r^2-s^2\, >\, 0\, . \end{equation}
\end{thm}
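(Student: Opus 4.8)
The plan is to recast the desired inequality as an upper bound on the rank of $\phi$ viewed as a superoperator. Since $\phi$ acts on the $d^2$--dimensional space $\mathcal{M}(d;\mathds{C})$, the claim $\dim\ker\phi\geq 2dr-r^2-s^2$ is equivalent to $\mathrm{rk}\,\phi\leq d^2-(2dr-r^2-s^2)=(d-r)^2+s^2$, while the hypothesis $r^2+s^2<2dr$ is exactly what makes this quantity strictly smaller than $d^2$ (so that the kernel is nontrivial). Everything thus reduces to proving $\dim\,\mathrm{Im}\,\phi\leq(d-r)^2+s^2$. To use that $\phi$ is entanglement--breaking I would invoke the Holevo form of Theorem~\ref{str thm EB}, choosing (as licensed by the remark following it) pure re--preparation states and rank--one effects, $\phi(X)=\sum_i c_i\,\Ket{\psi_i}\!\!\Bra{\psi_i}\,\Braket{\mu_i|X|\mu_i}$ with $c_i>0$. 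The structural fact that does the work is that $\mathrm{Im}\,\phi$ is contained in the span of the fixed output states $\Ket{\psi_i}\!\!\Bra{\psi_i}$; this rigidity is available for EB channels but not for generic completely positive maps.

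I would then split the indices according to how each effect vector $\Ket{\mu_i}$ sees the support $V\equiv\mathrm{supp}\,A$ (of dimension $r$): put $i\in I_1$ if $\Braket{\mu_i|A|\mu_i}>0$ and $i\in I_2$ otherwise. The key observation concerns $I_1$. Writing $\phi(A)=\sum_i c_i\Braket{\mu_i|A|\mu_i}\,\Ket{\psi_i}\!\!\Bra{\psi_i}$ as a sum of positive operators with nonnegative coefficients, its support is the span of exactly those $\Ket{\psi_i}$ carrying a strictly positive weight. Since by hypothesis this support is the $s$--dimensional space $W\equiv\mathrm{supp}\,\phi(A)$, every $i\in I_1$ must obey $\Ket{\psi_i}\in W$, so the ``$I_1$ part'' of $\phi$ takes values in the $s^2$--dimensional space of matrices supported on $W$. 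For the family $I_2$, the equality $\Braket{\mu_i|A|\mu_i}=0$ together with $A\geq 0$ forces $\Ket{\mu_i}\in V^\perp$, whence each functional $X\mapsto\Braket{\mu_i|X|\mu_i}$ depends only on the compression $P_{V^\perp}X P_{V^\perp}$.

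Finally I would assemble the rank count. Splitting $\phi=\phi_{I_1}+\phi_{I_2}$ gives $\mathrm{Im}\,\phi\subseteq\mathrm{Im}\,\phi_{I_1}+\mathrm{Im}\,\phi_{I_2}$; the first summand sits in the matrices supported on $W$, so $\dim\,\mathrm{Im}\,\phi_{I_1}\leq s^2$, while the second factors through $X\mapsto P_{V^\perp}X P_{V^\perp}$, whose range has dimension $(d-r)^2$, so $\mathrm{rk}\,\phi_{I_2}\leq(d-r)^2$. Adding yields $\mathrm{rk}\,\phi\leq(d-r)^2+s^2$, the reformulated thesis. I expect the main obstacle to be conceptual rather than computational: the tempting route is a domination argument trying to show that $\phi$ annihilates the off--diagonal blocks relative to $V\oplus V^\perp$, but this is false in general, and mere complete positivity does not suffice. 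The decisive step is to abandon the positivity/domination route in favour of the re--preparation picture of Theorem~\ref{str thm EB}, splitting along the effects $\Ket{\mu_i}$ and tracking the supports of the outputs $\Ket{\psi_i}$ — this is precisely where the full entanglement--breaking hypothesis, rather than just complete positivity, is genuinely used.
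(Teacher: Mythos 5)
Your proof is correct and rests on the same engine as the paper's own argument: the Holevo form of Theorem~\ref{str thm EB} combined with the positivity dichotomy that each term either has its output state forced into $\mathrm{supp}\,\phi(A)$ (when $\Braket{\mu_i|A|\mu_i}>0$) or its effect vector forced into $\ker A$ (when $\Braket{\mu_i|A|\mu_i}=0$), which is precisely the paper's bipartition $I_0$ versus $I_1$. The only difference is bookkeeping: the paper restricts $\phi$ to the $(2dr-r^2)$--dimensional real space of hermitian matrices ``touching'' $\mathrm{supp}\,A$ and applies rank--nullity to that restriction $\phi|_V:V\rightarrow W$, whereas you split $\phi=\phi_{I_1}+\phi_{I_2}$ and bound $\mathrm{rk}\,\phi\leq s^2+(d-r)^2$ by rank subadditivity and factoring $\phi_{I_2}$ through the compression onto $\ker A$; since $2dr-r^2=d^2-(d-r)^2$, the two counts are arithmetically identical.
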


\begin{proof}
Let us write the action of the entanglement--breaking channel $\phi$ in Holevo form~\ref{Holevo form}:
\begin{equation*} \phi(X)\ =\ \sum_{i\in I} \rho_i\ \text{Tr}\,XE_i\ . \end{equation*}
Here the $\rho_i$ are density matrices, and the operators $E_i$ are positive definite. Calling $A'\equiv\phi(A)$, one has
\begin{equation*} A'\ =\ \sum_{i\in I} \rho_i\ \text{Tr}\,A E_i\ . \end{equation*}
Recall that $\text{supp}\, X$ is the support of a hermitian matrix $X$, i.e. the orthogonal complement of the kernel. Clearly, thanks to the positivity of the operators $\rho_i$ and $E_i$,
\begin{equation} \forall\, i\in I\, ,\quad \text{supp}\, \rho_i\subseteq \text{supp}\, A'\quad \text{or}\quad \text{supp}\, E_i\subseteq \ker A\, . \label{EB condition} \end{equation}
Let us define a bipartition of $I$ distinguishing between these two possibilities:
\begin{equation*} I_0\,\equiv\, \{\, i\in I\, :\ \text{supp}\, \rho_i\subseteq \text{supp}\, A'\, \}\, ,\quad I_1\,\equiv\, I \backslash I_0\, . \end{equation*}
Thanks to \eqref{EB condition}, we can write
\begin{equation} \forall\, i\in I_1\ ,\quad \text{supp}\, E_i\subseteq \ker A\, . \label{EB conseq} \end{equation}
Denote by $\Ket{1},\ldots,\Ket{r}$ an orthonormal basis of $\text{supp}\,A$, and by $\Ket{1},\ldots,\Ket{d}$ one of its completions to a global orthonormal basis. Consider the vector subspace of hermitian matrices $V$ spanned on $\mathds{R}$ by the operators
\begin{itemize}
\item $\Ket{\alpha}\!\!\Bra{\beta}+\Ket{\beta}\!\!\Bra{\alpha}$, with $1\leq\min\{\alpha,\beta\}\leq r$ and $\alpha\neq \beta$;
\item $i \Ket{\alpha}\!\!\Bra{\beta}- i \Ket{\beta}\!\!\Bra{\alpha}$, again with $1\leq\min\{\alpha,\beta\}\leq r$ and $\alpha\neq \beta$;
\item $\Ket{\alpha}\!\!\Bra{\alpha}$, with $1\leq \alpha\leq r$.
\end{itemize}
Moreover, define also
\begin{equation*} W\,\equiv\,\{\, X=X^\dag\, :\ \ \text{supp}\, X\subseteq \text{supp}\, A'\, \}\, .\end{equation*}
The dimensions of $V$ and $W$ are easy to calculate:
\begin{gather*} \dim V\, =\, 2\, \sum_{k=1}^d (d-\alpha)\, +\, r\, =\, 2dr-r^2\ , \\
\dim W\, =\, ( \text{rk}\, A' )^2\, =\, s^2\ . \end{gather*}
Observe that
\begin{equation} \dim V - \dim W\, =\, 2dr - r^2 -s^2\, >\, 0 \label{disug dim} \end{equation}
by hypothesis. Moreover, \eqref{EB conseq} implies that
\begin{equation*} \forall\ i\in I_1,\ \ \forall\ X\in V,\quad \text{Tr}\, X E_i=0\, . \end{equation*}
Therefore, for all $X\in V$ we have
\begin{multline*} \phi(X)\ =\ \sum_{i\in I}\, \rho_i\, \text{Tr}\, X E_i\ =\\
=\ \sum_{i\in I_0}\, \rho_i\, \text{Tr}\, X E_i\ +\ \sum_{i\in I_1}\, \rho_i\, \text{Tr}\, X E_i\ =\\
=\ \sum_{i\in I_0}\, \rho_i\, \text{Tr}\, X E_i\ \in\ W\, . \end{multline*}
As a consequence, it makes sense to consider the restriction $\phi|_V: V\rightarrow W$. Thanks to \eqref{disug dim}, one has
\begin{multline*} \dim \ker \phi\, \geq\, \dim \ker \phi|_V\, =\, \dim V -\, \text{rk}\, \phi|_V\, \geq\\
\geq\, \dim V -\, \dim W\, =\, 2dr - r^2 -s^2\, >\, 0\, . \end{multline*}
\end{proof}

With Theorem~\ref{EB det 0}, we have explored a nontrivial feature of the entanglement--breaking channels. However, our result does not look conceptually transparent at all. In order to clarify its meaning, let us examine the following simpler corollary (which is what we really need in the rest of the paper). 

\begin{cor} \label{EB det 0 cor} $\\$
Let $\phi\in\mathbf{EBt}$ be an entanglement--breaking channel. Suppose that $\phi$ has a semipositive fixed point. Then
\begin{equation*} \dim \ker \phi \geq 2(d-1)\, , \end{equation*}
and in particular $\det\phi=0$.
\end{cor}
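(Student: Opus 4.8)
The plan is to obtain this as a direct specialization of Theorem~\ref{EB det 0}, applied to the fixed point itself. First I would let $\rho_0$ denote the semipositive fixed point of $\phi$; being a nonzero state with at least one vanishing eigenvalue, it has rank $r\equiv\text{rk}\,\rho_0$ satisfying $1\le r\le d-1$. The decisive advantage of working with a \emph{fixed} point is that $\phi(\rho_0)=\rho_0$, so the output rank coincides with the input rank: $s\equiv\text{rk}\,\phi(\rho_0)=r$. This identification $s=r$ is what collapses the somewhat opaque inequality of Theorem~\ref{EB det 0} into a transparent bound.

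Next I would check that the hypothesis of Theorem~\ref{EB det 0} is met with the choice $A=\rho_0$. The required condition $r^2+s^2<2dr$ becomes $2r^2<2dr$, i.e. $r<d$, which holds automatically because $\rho_0$ is semipositive (hence rank--deficient). The theorem then delivers
\begin{equation*} \dim\ker\phi\ \ge\ 2dr-r^2-s^2\ =\ 2dr-2r^2\ =\ 2r(d-r)\,. \end{equation*}

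It then remains to minimize $2r(d-r)$ over the admissible integers $1\le r\le d-1$. Since $r\mapsto r(d-r)$ is a concave quadratic, its minimum on this interval is attained at an endpoint, where it equals $(d-1)\cdot 1=d-1$. Hence $r(d-r)\ge d-1$ for every admissible $r$, and therefore $\dim\ker\phi\ge 2(d-1)>0$. A nontrivial kernel forces $\det\phi=0$, which is the concluding assertion.

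I do not anticipate a genuine obstacle: the corollary is essentially the $s=r$ slice of the more delicate Theorem~\ref{EB det 0}, and the entire content beyond that theorem is elementary. The only two points to keep in mind are the harmless remark that a nonzero fixed \emph{state} has $r\ge 1$, and the observation that the concave function $r(d-r)$ bottoms out at the endpoints of the feasible range rather than in its interior (so that one gets the uniform lower bound $d-1$ independently of the precise rank of $\rho_0$).
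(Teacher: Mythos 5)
Your proposal is correct and follows exactly the paper's own route: apply Theorem~\ref{EB det 0} to the semipositive fixed point (so $s=r\le d-1$), obtain $\dim\ker\phi\ge 2r(d-r)$, and bound this below by $2(d-1)$ using the endpoint behaviour of the concave quadratic. You merely make explicit two points the paper leaves implicit — the verification of the hypothesis $r^2+s^2<2dr$ and the minimization over $r$ — which is harmless elaboration, not a different argument.
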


\begin{proof}
It suffices to apply Theorem~\ref{EB det 0} with $r=s\leq d-1$ to get
\begin{equation*} \dim \ker \phi\, \geq\, 2dr-2r^2\, =\, 2r(d-r)\, \geq 2(d-1)\, >\, 0\, , \end{equation*}
which implies $\det \phi = 0$.
\end{proof}

Before we can state and prove our main result, another technical lemma is necessary.

\begin{lemma} \label{pos semidef fix point} $\\$
Let $\phi\in\mathbf{Pt}$ be a positive, trace--preserving map whose spectrum contains $1$ with multiplicity strictly greater than $1$ (that is, whose fixed subspace verifies $\dim \eta_\phi>1$). Then $\phi$ admits a semipositive fixed point.
\end{lemma}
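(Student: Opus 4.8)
The plan is to exploit the two structural facts about fixed subspaces recorded in Theorem~\ref{spect prop Pt}: that $\phi$ always possesses a density--matrix fixed point $\rho_0\geq 0$ (point 3), and that the fixed subspace $\eta_\phi$ admits a basis composed of Hermitian operators (point 1). If $\rho_0$ is already semipositive, there is nothing to prove, so I would assume throughout that $\rho_0>0$ is strictly positive definite (recall $\text{Tr}\,\rho_0=1$, being a density matrix).

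Since by hypothesis $\dim\eta_\phi>1$, there is a Hermitian fixed point $Y$ that is $\mathds{R}$--linearly independent from $\rho_0$. Subtracting off its trace component, I would set $X\equiv Y-(\text{Tr}\,Y)\,\rho_0$. This $X$ is again a fixed point (linear combinations of fixed points are fixed points), it is nonzero by the linear independence of $Y$ and $\rho_0$, and it is traceless since $\text{Tr}\,X=\text{Tr}\,Y-(\text{Tr}\,Y)(\text{Tr}\,\rho_0)=0$. A nonzero traceless Hermitian matrix necessarily has a strictly negative eigenvalue, a point I will use below.

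The core of the argument is then a boundary--of--the--cone construction. Consider the one--parameter family $\rho_0+tX$ for $t\geq 0$; every member is a Hermitian fixed point of $\phi$, and $\text{Tr}\,(\rho_0+tX)=1$, so none of them can vanish. At $t=0$ the matrix is strictly positive, while the presence of a negative eigenvalue $-\mu<0$ of $X$, with eigenvector $\Ket{v}$, forces $\Bra{v}(\rho_0+tX)\Ket{v}=\Bra{v}\rho_0\Ket{v}-t\mu\Braket{v|v}$ to become negative for large $t$. Hence $t^*\equiv\sup\{\,t\geq 0:\ \rho_0+tX\geq 0\,\}$ is both finite and strictly positive. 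Because the positive--semidefinite cone is closed, $\rho_0+t^*X\geq 0$; and if this matrix were strictly positive definite, positivity would persist in a whole neighbourhood of $t^*$ by continuity of eigenvalues, contradicting the definition of $t^*$ as a supremum. Therefore $\det(\rho_0+t^*X)=0$, and $\rho_0+t^*X$ is the desired (nonzero) semipositive fixed point.

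The only genuinely delicate point I anticipate is the bookkeeping around multiplicities that justifies the very first step: one must invoke point 4 of Theorem~\ref{spect prop Pt}, which guarantees that the peripheral eigenvalue $1$ carries no nontrivial Jordan block, so that its algebraic multiplicity coincides with $\dim\eta_\phi$. This is what ensures that the hypothesis $\dim\eta_\phi>1$ genuinely furnishes a second Hermitian fixed point linearly independent from $\rho_0$, rather than merely a generalized eigenvector. Everything else reduces to the elementary continuity and convexity properties of the cone of positive matrices.
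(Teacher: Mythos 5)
Your proof is correct and follows essentially the same route as the paper: reduce to the case $\rho_0>0$, take a second Hermitian fixed point, and move along the pencil spanned by $\rho_0$ and that fixed point until you reach the boundary of the positive semidefinite cone, where the resulting boundary point is automatically a nonzero semipositive fixed point. The only difference is cosmetic: the paper exhibits the boundary point in closed form, namely $A = X - \lambda_{\min}\!\left(\rho_0^{-1/2}X\rho_0^{-1/2}\right)\rho_0 = \rho_0^{1/2}\left(\rho_0^{-1/2}X\rho_0^{-1/2}-\lambda_{\min}\!\left(\rho_0^{-1/2}X\rho_0^{-1/2}\right)\mathds{1}\right)\rho_0^{1/2}$, whereas you establish its existence by a supremum-and-continuity argument.
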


\begin{proof}
Let us call $\rho_0$ the positive fixed point of $\phi$ whose existence is guaranteed by Theorem~\ref{spect prop Pt}. If $\rho_0$ is semipositive we can immediately conclude. Otherwise, suppose $\rho_0>0$ and take another $X=X^\dag$ (independent from $\rho_0$) such that $\phi(X)=X$. Denoting by $\lambda_\mathrm{min}(Y)$ the minimum eigenvalue of the hermitian $Y$, define
\begin{multline*} A\ \equiv\ X\, -\, \lambda_\mathrm{min} (\rho_0^{-1/2} X \rho_0^{-1/2})\ \rho_0\ =\\
=\, \rho_0^{1/2}\, \left(\, \rho_0^{-1/2} X \rho_0^{-1/2} - \lambda_\mathrm{min} (\rho_0^{-1/2} X \rho_0^{-1/2})\ \mathds{1}\, \right)\, \rho_0^{1/2}\, . \end{multline*}
Then $\phi(A)=A$, and moreover $A$ must be semipositive, since
\begin{equation*} \lambda_\mathrm{min}\ \left(\ \rho_0^{-1/2} X \rho_0^{-1/2} - \lambda_\mathrm{min} (\rho_0^{-1/2} X \rho_0^{-1/2})\ \mathds{1}\ \right)\ =\ 0\, . \end{equation*}
\end{proof}

\subsection{A characterization theorem} \label{subs ES gen}
Now, we are in position to easily prove the following theorem, which is the main achievement of this section. We postpone our comments on the meaning of this result after its statement and proof. In what follows, $|\sigma_P(\phi)|$ will denote the number of peripheral eigenvalues of the channel $\phi$, counting multiplicities. Moreover, recall that we denote with $a_\phi(\lambda)$ the algebraic multiplicity of the eigenvalue $\lambda\in\sigma(\phi)$.

\begin{thm}[ES Channels with Nonzero Determinant] \label{ES det =/ 0} $\\$
Let $\phi\in\mathbf{CPt}_d$ be a quantum channel satisfying \mbox{$a_\phi(0) < 2(d-1)$} (in particular, $\det \phi \neq 0$ is a sufficient condition). Then the following are equivalent.

\begin{enumerate}

\item $\phi$ is entanglement--saving.

\item $\phi$ has a semipositive fixed point, or $| \sigma_P(\phi) | \geq 2$.

\item There exists $1\leq n\leq d$ such that $\phi^n$ has a semipositive fixed point.

\end{enumerate}

\end{thm}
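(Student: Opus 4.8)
The plan is to establish the cyclic chain of implications $(1)\Rightarrow(2)\Rightarrow(3)\Rightarrow(1)$, feeding the preliminary results of this subsection into each link. Only the last implication will actually invoke the hypothesis $a_\phi(0)<2(d-1)$; the first two hold for any $\phi\in\mathbf{CPt}_d$.

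For $(2)\Rightarrow(3)$ the argument is immediate. If $\phi$ already possesses a semipositive fixed point, then $n=1$ does the job. Otherwise $|\sigma_P(\phi)|\geq 2$, and Corollary~\ref{cor Wolf} supplies an integer $1\leq n\leq d$ with $1\in\sigma_P(\phi^n)$ of multiplicity strictly greater than $1$, i.e. $\dim\eta_{\phi^n}>1$. Since $\phi^n\in\mathbf{CPt}_d\subseteq\mathbf{Pt}_d$, Lemma~\ref{pos semidef fix point} then produces a semipositive fixed point of $\phi^n$, which is precisely statement~(3). For $(3)\Rightarrow(1)$ I would first record that the algebraic multiplicity of $0$ is stable under powers: the generalized null space $\bigcup_k\ker\phi^k$ is unchanged on replacing $\phi$ by any $\phi^m$, so $a_{\phi^m}(0)=a_\phi(0)$ and hence $\dim\ker\phi^m\leq a_\phi(0)<2(d-1)$ for every $m$. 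Now let $A$ be the semipositive fixed point of $\phi^n$ given by~(3); it is also a semipositive fixed point of $\phi^{nk}$ for all $k\geq 1$. Were some $\phi^{nk}$ entanglement--breaking, Corollary~\ref{EB det 0 cor} would force $\dim\ker\phi^{nk}\geq 2(d-1)$, contradicting the bound just obtained; thus $\phi^{nk}\notin\mathbf{EBt}$ for all $k$. Finally, if $\phi^m\in\mathbf{EBt}$ held for some $m$, then picking $k$ with $nk\geq m$ and writing $\phi^{nk}=\phi^{nk-m}\phi^m$ would make $\phi^{nk}$ entanglement--breaking by the propagation property~\eqref{EB propagates}, a contradiction. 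Hence $\phi^m\notin\mathbf{EBt}$ for all $m$, i.e. $\phi$ is entanglement--saving.

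The main obstacle is $(1)\Rightarrow(2)$, which I would prove contrapositively: assuming $\phi$ has no semipositive fixed point \emph{and} $|\sigma_P(\phi)|<2$, I will show $\phi$ is not entanglement--saving. Since $1\in\sigma_P(\phi)$ always, the hypothesis forces the peripheral spectrum to be exactly $\{1\}$ with multiplicity one, so by Theorem~\ref{spect prop Pt} the fixed subspace is one--dimensional and the positive fixed point $\rho_0$ (unique up to scale) must be \emph{strictly} positive, for otherwise it would itself be a semipositive fixed point. Because $1$ is the only peripheral eigenvalue and carries a trivial Jordan block, the powers $\phi^n$ converge to the spectral projector onto $\eta_\phi$, namely $\phi_\infty(X)=\rho_0\,\text{Tr}\,X$ (see~\eqref{phi infty}). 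The crux is that $\phi_\infty$ is not merely entanglement--breaking but lies in the \emph{interior} of the separable regime: its Choi state is $R_{\phi_\infty}=\rho_0\otimes\frac{\mathds{1}}{d}$ with both tensor factors strictly positive, so Proposition~\ref{internal separable det 0} keeps it off $\partial\mathcal{S}$, hence internal to the separable set. By continuity of $\phi\mapsto R_\phi$ the density matrices $R_{\phi^n}\to R_{\phi_\infty}$ are therefore separable for all sufficiently large $n$, and the Structure Theorem~\ref{str thm EB} yields $\phi^n\in\mathbf{EBt}$ for such $n$. Thus $\phi$ is not entanglement--saving, closing the contrapositive.

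The genuinely delicate point, which I would flag explicitly, is that the mere closedness of $\mathbf{EBt}$ is too weak to conclude from $\phi^n\to\phi_\infty\in\mathbf{EBt}$ that the tail of the sequence is entanglement--breaking; one really needs the strict interiority furnished by $\rho_0>0$ (via Proposition~\ref{internal separable det 0}) to upgrade ``the limit is EB'' into ``the tail is EB''. Everything else is bookkeeping: the stability $a_{\phi^m}(0)=a_\phi(0)$, the observation that $\det\phi\neq0$ gives $a_\phi(0)=0<2(d-1)$, and the repeated use of~\eqref{EB propagates} together with the Holevo--form consequences already in hand.
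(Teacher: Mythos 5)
Your proof is correct and follows essentially the same route as the paper's: the same cycle $(1)\Rightarrow(2)\Rightarrow(3)\Rightarrow(1)$, using the convergence $\phi^n\to\phi_\infty$ together with Proposition~\ref{internal separable det 0} for the first link, Corollary~\ref{cor Wolf} plus Lemma~\ref{pos semidef fix point} for the second, and Corollary~\ref{EB det 0 cor} with the bound $\dim\ker\phi^{N}\leq a_\phi(0)$ for the third. The only differences are cosmetic: you argue $(1)\Rightarrow(2)$ contrapositively instead of by contradiction, and you make explicit the power--stability of $a_\phi(0)$ and the use of~\eqref{EB propagates} that the paper leaves implicit.
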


\begin{proof}
\begin{itemize}

\item[]

\item[1 $\Rightarrow$ 2 :] This implication is true independently of the hypothesis $\det \phi \neq 0$. Suppose by contradiction that $\phi$ has a fixed point $\rho_0>0$ and \mbox{$\sigma_P(\phi)=\{ 1 \}$} (recall that $1$ always belongs to $\sigma(\phi)$, by Theorem~\ref{spect prop Pt}). Then it is not difficult to prove that \mbox{$\lim_{n\rightarrow\infty} \phi^n=D_{\rho_0}$}, where the depolarizing channel is defined by $D_{\rho_0}(X)\equiv\rho_0 \text{Tr} X$. This can be seen by noting that $\lim_{n\rightarrow\infty} \phi^n\equiv D$ is a (well--defined) channel whose output is always proportional to the positive eigenvector associated with the eigenvalue $1$ (because the others eigenvalues tend to zero when raised to arbitrary large powers). This can be written as $D(X)=\rho_0 f(X)$ for all $X$, where $f:\mathcal{M}(d;\mathds{C})\rightarrow \mathds{C}$ is a linear functional, and we can normalize $\text{Tr}\, \rho_0 = 1$. Applying the trace--preserving condition yields immediately $f(X)\equiv \text{Tr}\, X$.

Next, let us observe that the Choi--Jamiolkowski isomorphism is linear (in particular, continuous), and so 
\begin{equation*} \lim_{n\rightarrow\infty} R_{\phi^n}\ =\ R_{\lim_{n\rightarrow\infty}\phi^n} \ =\ R_{D_{\rho_0}}\ =\ \rho_0\otimes\frac{\mathds{1}}{d}\ . \end{equation*}
Since $\rho_0>0$, by Proposition~\ref{internal separable det 0} the limit of the sequence is internal to the set of separable states, and this implies $n(\phi)<\infty$, which is absurd.

\item[2 $\Rightarrow$ 3 :] Also this statement does not require the hypothesis $\det \phi \neq 0$. If $\phi$ admits a semipositive fixed point we can immediately conclude. Otherwise, thanks to Corollary~\ref{cor Wolf}, there exists \mbox{$1\leq n\leq d$} such that the spectrum of $\phi^n$ contains $1$ with multiplicity strictly greater than $1$. In that case, Lemma~\ref{pos semidef fix point} again guarantees the existence of a semipositive fixed point.

\item[3 $\Rightarrow$ 1 :] Here is where our restrictive hypothesis \mbox{$a_\phi(0)\leq 2(d-1)$} comes into play. Firstly, if for some $n=n_0$ the map $\phi^n$ has a semipositive fixed point, it is immediate to see that the same happens for each multiple of $n_0$, i.e. frequently in $n\in\mathds{N}$. Assume by contradiction that $n(\phi)<\infty$. Then there exists $N\in\mathds{N}$ such that $\phi^N$ is entanglement--breaking and has a semipositive fixed point. By Corollary~\ref{EB det 0 cor}, this would imply $\dim\ker(\phi^N)\geq 2(d-1)$. Since it is generally true that $\dim\ker(\phi^N)\leq a_\phi(0)$ (as can be easily seen using the Jordan decomposition~\eqref{Jordan}), we would deduce $a_\phi(0)\geq 2(d-1)$, which is absurd by hypothesis. If we assume the simplified hypothesis $\det \phi\neq 0$, then Corollary~\ref{EB det 0 cor} immediately gives the absurd equality $\det\phi^N=0$.

\end{itemize}
\end{proof}

Theorem~\ref{ES det =/ 0} completely solves the problem of finding an explicit characterization of the entanglement--saving property for a wide class of channels, i.e. those verifying $a_\phi(0) < 2(d-1)$ (or, simplifying, $\det\phi\neq 0$). Among its consequences, observe that we can immediately conclude that the set of ES channels has measure zero (since both the initial restricting condition and the other conditions given in the thesis define sets of measure zero).

From a geometrical point of view, we could say that Theorem~~\ref{ES det =/ 0} characterizes the ES set \emph{almost everywhere}, that is, apart from a set of measure zero. Actually, this does not ensures a priori that the statement of Theorem~\ref{ES det =/ 0} is useful (i.e. identifies a nonempty set of ES channels), because also the ES set has measure zero. As a matter of fact, \emph{there are many entanglement--saving channels whose determinant is equal to zero} (for instance). We will see that a large class of examples emerges in a natural way in the context of Section~\ref{sec AES}. 

However, we will see with the explicit example $d=2$ (see Subsection~\ref{subs ES q}) that what we have just proved indeed gives a useful characterization. Namely, the restriction $\det\phi\neq 0$ causes \emph{no loss of generality} for the simplest nontrivial case, i.e. those of qubit channels. In this sense, it is less severe than what we could imagine.

\subsection{Entanglement--saving qubit channels} \label{subs ES q}

Through this subsection, we explore some consequences of Theorem~\ref{ES det =/ 0}. In particular, we show that this result gives a complete characterization of the ES class in the case of channels acting on a single qubit. To proceed further, we need some simple lemmas. The first one discusses the consequences of the equation $\det \phi=0$ for a quantum qubit channel.

\begin{lemma} \label{det 0 q EB} $\\$
Let $\phi\in\mathbf{CPt}_2$ be a qubit channel such that $\,\det \phi =0$. Then $\phi$ is entanglement--breaking.
\end{lemma}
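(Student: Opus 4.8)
The plan is to pass to the canonical diagonal form of $\phi$ and then invoke the qubit entanglement--breaking criteria of Theorem~\ref{EB q}. First I would record that, by~\eqref{det q}, the hypothesis $\det\phi=0$ is equivalent to $\det M=0$ for the real $3\times 3$ matrix $M$ representing $\phi$ in the Bloch picture. Writing $\phi=\mathcal{U}\,\Lambda\,\mathcal{V}$ as in~\eqref{q canonical form}, the equivalences~\eqref{CP q canonical form} and~\eqref{EB q canonical form} let me replace $\phi$ by its canonical diagonal form $\Lambda=(L,t)$ without loss of generality: $\phi$ is completely positive (given) iff $\Lambda$ is, and $\phi$ is entanglement--breaking iff $\Lambda$ is. Since the $\mathrm{SO}(3)$ factors $P,Q$ have unit determinant, $\det M=\det L=l_1 l_2 l_3$, and by the definition~\eqref{L matrix} the special singular value $l_3=\mathrm{sgn}(\det M)\,s_3(M)$ vanishes as soon as $\det M=0$. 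Thus after the reduction I may assume $L=\mathrm{diag}(l_1,l_2,0)$ with $l_1,l_2\geq 0$.

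The core of the argument is then the sign--change characterization, item (4) of Theorem~\ref{EB q}: $\Lambda$ is entanglement--breaking precisely when reversing the sign of the special singular value (the entry carrying $\mathrm{sgn}\det M$, namely $l_3$) still yields a completely positive map. Because $l_3=0$, the map obtained by the replacement $l_3\mapsto -l_3$ is literally $\Lambda$ itself, which is completely positive by assumption. Hence the sign--change condition is satisfied trivially, and $\Lambda$, and therefore $\phi$, is entanglement--breaking. Equivalently one can phrase this through item (3): since the transposition channel $T$ acts on the Bloch data by a reflection that flips $\mathrm{sgn}\det M$, the composite $\phi T$ (or $T\phi$) differs from $\phi$ only by the sign of $l_3$, which is immaterial when $l_3=0$, so $T\phi\in\mathbf{CPt}_2$ and the EB property follows.

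The step I expect to require the most care is making the sign--change reduction airtight in the presence of the translation vector $t$ and pinning down exactly which sign flip the criterion refers to: the canonical form fixes $l_1,l_2\geq 0$, so the only sign freedom distinguishing a generic completely positive qubit channel from an entanglement--breaking one resides in $l_3$, and it is this flip that must be shown harmless. As an independent confirmation I would test the conclusion against item (5) of Theorem~\ref{EB q}, namely $\|R_\phi\|_\infty\leq\tfrac12$. In the unital case $t=0$ this is transparent, because with $l_3=0$ the eigenvalues of the Choi matrix $R_\phi$ of~\eqref{Choi state} organize into pairs $\tfrac14(1\pm x)$ for which nonnegativity (complete positivity) and the bound by $\tfrac12$ (entanglement--breaking) reduce to the same inequality $|x|\leq 1$; the remaining, genuinely nontrivial part of this route is to control the perturbation introduced by $t$ and verify that it cannot push the top eigenvalue of $R_\phi$ above $\tfrac12$ once $l_3=0$. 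Either way the moral is that $\det\phi=0$ is exactly the borderline collapsing $\mathbf{CPt}_2$ onto $\mathbf{EBt}_2$.
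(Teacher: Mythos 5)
Your proof is correct and takes essentially the same route as the paper's: both pass to the canonical diagonal form, use \eqref{det q} and \eqref{L matrix} to conclude that $\det\phi=0$ forces the special singular value $l_3$ to vanish, and then note that the sign--change criterion (item (4) of Theorem~\ref{EB q}) is trivially satisfied because flipping $l_3=0$ returns the very same completely positive map. Your additional remarks via items (3) and (5) are consistent extras, not needed for the argument.
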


\begin{proof}
Let $\phi=(M,c)$ denote the Bloch representation~\eqref{Bloch act} for the channel $\phi$. By~\eqref{det q} and~\eqref{L matrix}, we have 
\begin{equation*} \det\phi\, =\, \det M\, =\, l_1(M)\,l_2(M)\,l_3(M)\, =\, 0\, .\end{equation*}
Then, at least one special singular value $l_i(M)$ of $M$ must be zero. Consequently, $\phi$ must necessarily have the sign--change property expressed in the fourth condition of Theorem~\ref{EB q}, and so it must be entanglement--breaking.
\end{proof}

What is shown in \eqref{Bloch act} is that every quantum channel acting on a two--dimensional system can be seen as an affine transformation sending the Bloch sphere into itself. Therefore, the image of the set of density matrices is represented by an ellipsoid contained in the Bloch sphere (we called it \emph{image ellipsoid}). Its principal axes' lengths are nothing but the singular values of $M$. The following result states some geometrically intuitive facts.

\begin{lemma} \label{norm infty 1 unital} $\\$
Let $(M,c)\in\mathbf{Pt}_2$ be a positive, trace--preserving, qubit map. Then we must have \mbox{$\|M\|_\infty\leq 1$}. Moreover, if \mbox{$\|M\|_\infty=1$} then $c=0$, i.e. the map is unital, and the image ellipsoid contains a pure state.
\end{lemma}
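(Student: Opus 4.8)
The plan is to exploit the geometric content of positivity and trace preservation. Since $(M,\vec c)\in\mathbf{Pt}_2$ sends density matrices to density matrices while preserving the trace, the induced affine Bloch action $\vec r\mapsto M\vec r+\vec c$ of~\eqref{Bloch act} must carry the entire Bloch ball $\{\,|\vec r|\leq 1\,\}$ into itself: the image of a genuine state is again a genuine state, whose Bloch vector therefore has modulus at most one. This geometric reformulation is the only input I will need; everything else is elementary Euclidean geometry in $\mathds{R}^3$.

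First I would establish $\|M\|_\infty\leq 1$. Recall that $\|M\|_\infty=s_1^\downarrow(M)$ is the largest singular value, i.e. the length of the longest semi-axis of the image ellipsoid. Pick a unit right singular vector $\vec u$ realizing it, so that $|M\vec u|=\|M\|_\infty$. Since both $\vec u$ and $-\vec u$ lie in the Bloch ball, their images $M\vec u+\vec c$ and $-M\vec u+\vec c$ lie in it as well; subtracting these two points and applying the triangle inequality,
\[ 2\,|M\vec u|\,=\,\bigl|(M\vec u+\vec c)-(-M\vec u+\vec c)\bigr|\,\leq\,|M\vec u+\vec c|+|{-M\vec u}+\vec c|\,\leq\,2\, , \]
which gives $\|M\|_\infty=|M\vec u|\leq 1$.

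For the equality case I would run the same construction but squeeze the two constraints harder. Writing $\vec a\equiv M\vec u$ with $|\vec a|=\|M\|_\infty=1$, the membership conditions $|\vec a+\vec c|^2\leq 1$ and $|\vec a-\vec c|^2\leq 1$ expand to $\pm 2\,\vec a\cdot\vec c+|\vec c|^2\leq 0$; adding the two inequalities yields $2|\vec c|^2\leq 0$, hence $\vec c=0$ and the map is unital. Finally, the image ellipsoid now contains the point $M\vec u=\vec a$ of unit modulus, which by~\eqref{Pauli spectrum} (the eigenvalue $1-|\vec a|$ vanishes) represents a pure state; this settles the last assertion.

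The only genuinely delicate point is the very first step: correctly justifying that positivity together with trace preservation forces the affine Bloch action to map the unit ball into itself, rather than merely sending some operators to positive ones. Once that reformulation is secured, both the bound and its saturation follow from probing the map at the single antipodal pair $\pm\vec u$, and I expect no further obstacle.
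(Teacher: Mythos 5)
Your proof is correct and takes essentially the same route as the paper's: both arguments probe the map at an antipodal pair of pure states $\pm\vec u$ along a maximizing singular direction and combine the two positivity constraints $|M(\pm\vec u)+\vec c|\leq 1$ (the paper averages the squared inequalities to obtain $|M\vec n|^2+|\vec c|^2\leq 1$ in one stroke, which yields the bound and the equality case simultaneously, whereas you split this into a triangle-inequality step plus a sum of expanded squares — the same computation in two pieces). The reformulation you flag as delicate is exactly the paper's opening move: positivity sends the pure states $\frac{\mathds{1}\pm\vec n\cdot\vec\sigma}{2}$ to states, hence by~\eqref{Pauli spectrum} the Bloch ball into itself, so there is no gap.
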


\begin{proof}
Consider a generic unit vector $n\in\mathds{R}^3$. Then the matrices $\mathds{1}\pm\vec{n}\cdot\vec{\sigma}$ are positive, because of~\eqref{Pauli spectrum}. Since $(M,c)$ is a positive map, $\mathds{1}\pm(\vec{c}+M\vec{n})\cdot\vec{\sigma}$ must be again positive operators, that is (see again~\eqref{Pauli spectrum})
\begin{equation*} |M(\pm n)+c|^2\ \leq\ 1\ . \end{equation*}
Taking one half the sum of these equations, one obtains
\begin{equation*} |M n|^2+|c|^2\ \leq\ 1\ . \end{equation*}
Since we can certainly choose $|Mn|=\|M\|_\infty$, we must have $\|M\|_\infty\leq 1$, where the equality sign can hold if and only if $c=0$. Moreover, we have already observed that the singular values of $M$ are the lengths of the principal axes of the image ellipsoid. Therefore, the image ellipsoid of a unital qubit channel with $\|M\|_\infty=1$ is necessarily tangent to the surface of the Bloch sphere, that is, it contains a pure state.
\end{proof}

Thanks to Lemma~\ref{det 0 q EB}, we can see that the (simplified) restriction $\det\phi\neq 0$ we considered in Theorem~\ref{ES det =/ 0} causes no loss of generality in the $d=2$ case. In fact, quantum channels with zero determinant are easily classified as entanglement--breaking (and so they are not entanglement--saving, of course). We are ready to use Theorem~\ref{ES det =/ 0} to obtain a classification of the ES qubit channels.

\begin{thm}[ES Qubit Channels] \label{ES q} $\\$
Let $\phi\in\mathbf{CPt}_2$ be a qubit channel. Then $\phi$ is entanglement--saving if and only if $\det \phi\neq 0$ and $\phi$ fixes or inverts a pure state. Here the ``\emph{inversion}'' is intended as the geometrical inversion $-\mathds{1}$ in the Bloch sphere. Observe that a map which inverts a pure state is necessarily unital.
\end{thm}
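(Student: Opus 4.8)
The plan is to reduce everything to the general characterization of Theorem~\ref{ES det =/ 0} specialized to $d=2$, using the qubit--specific Lemma~\ref{det 0 q EB} to dispose of the determinant. First I would show that $\det\phi\neq0$ is \emph{necessary}: by Lemma~\ref{det 0 q EB}, $\det\phi=0$ forces $\phi\in\mathbf{EBt}$, so $\phi=\phi^1$ is already entanglement--breaking and $\phi$ cannot be ES. Under the standing hypothesis $\det\phi\neq0$ one has $a_\phi(0)=0<2=2(d-1)$, so Theorem~\ref{ES det =/ 0} applies and ES becomes equivalent to the spectral dichotomy ``$\phi$ has a semipositive fixed point, or $|\sigma_P(\phi)|\geq2$.'' The entire argument then amounts to translating this dichotomy into the geometric language of pure states.

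The crucial dictionary entry is that in dimension two a semipositive density matrix is precisely a rank--one projector, i.e. a pure state; hence ``$\phi$ has a semipositive fixed point'' means exactly ``$\phi$ fixes a pure state.'' For the easy implication $(\Leftarrow)$ I would argue: if $\det\phi\neq0$ and $\phi$ fixes a pure state, that state is a semipositive fixed point and condition~2 of Theorem~\ref{ES det =/ 0} gives ES; if instead $\phi$ inverts a pure state --- meaning it swaps $\Ket{\psi}\!\!\Bra{\psi}$ with $\Ket{\psi^\perp}\!\!\Bra{\psi^\perp}$ --- then $\phi^2(\Ket{\psi}\!\!\Bra{\psi})=\Ket{\psi}\!\!\Bra{\psi}$ is a semipositive fixed point of $\phi^2$, and condition~3 with $n=2\leq d$ gives ES. This reading also proves the parenthetical remark, since $\phi$ then sends $\Ket{\psi}\!\!\Bra{\psi}+\Ket{\psi^\perp}\!\!\Bra{\psi^\perp}=\mathds{1}$ to itself, so an inversion is automatically unital.

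For the converse $(\Rightarrow)$ I would run through the admissible peripheral spectra of a qubit channel listed in~\eqref{Wolf q}. If $\phi$ has a semipositive fixed point it fixes a pure state and we are done; otherwise $|\sigma_P(\phi)|\geq2$ and three cases remain. If $\sigma_P(\phi)=\{1,1,e^{i\theta},e^{-i\theta}\}$, then~\eqref{unitary from spectrum} makes $\phi$ a unitary evolution, i.e. a Bloch rotation, which fixes the two poles of its axis and hence a pure state. If $\sigma_P(\phi)=\{1,1\}$, the fixed subspace has dimension $2>1$, so Lemma~\ref{pos semidef fix point} yields a semipositive fixed point, again a fixed pure state. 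The remaining case $\sigma_P(\phi)=\{1,-1\}$ is the genuinely delicate one: by~\eqref{spectrum q} we have $-1\in\sigma(M)$, so the largest singular value satisfies $\|M\|_\infty\geq1$, while Lemma~\ref{norm infty 1 unital} gives $\|M\|_\infty\leq1$; the pincer forces $\|M\|_\infty=1$ and therefore, by the rigidity clause of the same lemma, $c=0$, i.e. $\phi$ is unital. Taking the real unit eigenvector $\vec{v}$ of $M$ with eigenvalue $-1$, the vanishing of $c$ gives $\phi\!\left(\tfrac{\mathds{1}+\vec{v}\cdot\vec{\sigma}}{2}\right)=\tfrac{\mathds{1}-\vec{v}\cdot\vec{\sigma}}{2}$ and $\phi\!\left(\tfrac{\mathds{1}-\vec{v}\cdot\vec{\sigma}}{2}\right)=\tfrac{\mathds{1}+\vec{v}\cdot\vec{\sigma}}{2}$, so $\phi$ inverts (indeed swaps) the antipodal pure states $\pm\vec{v}$.

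The main obstacle is exactly this last case. One has to exclude a non--unital channel carrying $-1$ as a peripheral eigenvalue, and the clean device is the singular--value pincer $1\leq\|M\|_\infty\leq1$ together with the rigidity clause of Lemma~\ref{norm infty 1 unital}. I would stress that ``inverting a pure state'' must be read as the genuine period--two swap $\Ket{\psi}\!\!\Bra{\psi}\leftrightarrow\Ket{\psi^\perp}\!\!\Bra{\psi^\perp}$ (equivalently, $M$ carrying the eigenvalue $-1$ for a unital $\phi$): a channel for which some single pure state merely lands on its antipode without returning need be neither unital nor ES, and it is precisely the swap structure that simultaneously activates the unitality remark and condition~3 of Theorem~\ref{ES det =/ 0}.
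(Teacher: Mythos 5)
Your proof is correct and follows essentially the same route as the paper's: necessity of $\det\phi\neq 0$ via Lemma~\ref{det 0 q EB}, reduction to Theorem~\ref{ES det =/ 0} (semipositive fixed point or $|\sigma_P(\phi)|\geq 2$), and a case analysis of the admissible peripheral spectra~\eqref{Wolf q} in which the $\{1,-1\}$ case is settled by the pincer $\|M\|_\infty=1$ together with the unitality clause of Lemma~\ref{norm infty 1 unital}. Your closing insistence that ``inverts a pure state'' must mean the genuine period--two swap (equivalently, $\phi$ unital with $Mn=-n$) is exactly the reading the paper's own proof relies on: both its forward direction (``surely $\phi^2$ fixes one of them'') and the parenthetical unitality remark would fail under the weaker ``one--way antipodal'' reading, so this clarification is a point in your favor rather than a deviation.
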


\begin{proof}
If $\det\phi\neq 0$ and $\phi$ fixes or inverts a pure state, then surely $\phi^2$ fixes one of them. In that case, Theorem~\ref{ES det =/ 0} guarantees the entanglement--saving property, because of the fact that a pure state is (as a density matrix) semipositive.

Let us turn our attention to the converse statement. If $\phi$ is entanglement--saving, then certainly $\det\phi\neq 0$ by Lemma~\ref{det 0 q EB}. Moreover, either $\phi$ has a semipositive fixed point (i.e. fixes a pure state), or \mbox{$|\sigma_P(\phi)|\geq2$} (again by Theorem~\ref{ES det =/ 0}). The first possibility gives us directly the thesis, so let us concern ourselves with the second one. If $M$ has an eigenvalue with unit modulus, then $\|M\|_\infty\geq 1$, and so Lemma~\ref{norm infty 1 unital} implies that $\phi$ is unital. This fact will be useful in a moment, but first observe that~\eqref{Wolf q} restricts the possible peripheral spectra to
\begin{equation*} \sigma_P(\phi)\ =\ \{1,1\},\ \{1,-1\},\ \{1,1,e^{i\theta},e^{-i\theta}\}\ . \end{equation*}
Since Lemma~\ref{pos semidef fix point} implies that $\phi$ must necessarily fix a pure state if \mbox{$\{1,1\}\subseteq\sigma_P(\phi)$}, let us examine the case $\sigma_P(\phi)=\{1,-1\}$. Recall that by Theorem~\ref{spect prop Pt} the $-1$ eigenvector can be chosen hermitian and traceless, i.e. of the form $n\cdot\vec{\sigma}$. Moreover, up to a simple rescaling, we can freely suppose $|n|=1$. In that case, using also the unitality, we obtain
\begin{equation*} \phi\ \left( \frac{\mathds{1}+n\cdot\vec{\sigma}}{2} \right)\ =\ \frac{\mathds{1}-n\cdot\vec{\sigma}}{2}\ .\end{equation*}
This is the same as saying that $\phi$ inverts the pure state $\frac{\mathds{1}+n\cdot\vec{\sigma}}{2}$ in the Bloch sphere.
\end{proof}

Theorem~\ref{ES q} gives us a geometrical characterization of the ES set for a single qubit. With this tool at hand, we can find an explicit parametrization of the ES set in the $d=2$ case. This is the content of the following theorem.

\begin{thm}[Explicit Form for ES Qubit Channels] \label{ES q explicit} $\\$
Let $\phi\in\mathbf{CPt}_2$ be a qubit channel represented in the Pauli basis (as in~\eqref{Bloch act}) by a matrix $M\in\mathcal{M}(3;\mathds{R})$ and a vector $c\in\mathds{R}^3$. Then $\phi$ is entanglement--saving if and only if one of the following two possibilities holds.
\begin{enumerate}

\item[]
\item There exist \mbox{$O\in\text{SO}(3)$}, \mbox{$\theta\in\mathds{R}$}, \mbox{$0<\lambda \leq 1$}, \mbox{$\lambda^2 \leq \mu \leq 1$}, \mbox{$\alpha\geq 0$} such that the complete positivity condition \mbox{$\alpha^2\leq(1-\mu)(\mu-\lambda^2)$} holds, and
\begin{multline} M\ =\ O\ M_+ (\lambda, \theta, \alpha, \mu)\ O^T\ \equiv\\
\equiv\ O\ \begin{pmatrix} \lambda \cos \theta & \lambda \sin \theta & \alpha \\ - \lambda \sin \theta & \lambda \cos \theta & 0 \\ 0 & 0 & \mu \end{pmatrix}\ O^T\ ,\label{ES q explicit 1M} \end{multline}
\begin{equation} c\ =\ O\ c_+ (\alpha, \mu)\ \equiv\ O\ \begin{pmatrix} -\alpha \\ 0 \\ 1-\mu \end{pmatrix}\ . \label{ES q explicit 1c}
\end{equation}

\item The channel is unital (that is, $c=0$), and there exist \mbox{$O\in\text{SO}(3)$}, \mbox{$\theta\in\mathds{R}$}, \mbox{$0<\lambda \leq 1$}, such that
\begin{multline} M\ =\ O\ M_- (\lambda, \theta)\ O^T\ \equiv\\
\equiv\ O\ \begin{pmatrix} \lambda \cos \theta & \lambda \sin \theta & 0 \\ \lambda \sin \theta & - \lambda \cos \theta & 0 \\ 0 & 0 & -1 \end{pmatrix}\ O^T\ . \label{ES q explicit 2} \end{multline} 

\end{enumerate}
\end{thm}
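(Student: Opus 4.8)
The plan is to start from the geometric characterization already obtained in Theorem~\ref{ES q}: a qubit channel $\phi=(M,c)$ is entanglement--saving precisely when $\det\phi\neq 0$ and $\phi$ either fixes or inverts a pure state. These two alternatives will give rise to the two families~\eqref{ES q explicit 1M}--\eqref{ES q explicit 1c} and~\eqref{ES q explicit 2}, respectively. Since the entanglement--saving property is invariant under unitary conjugation (see~\eqref{nUC}), and such conjugations act as rotations $O\in\mathrm{SO}(3)$ on the Bloch data via $(M,c)\mapsto(OMO^T,Oc)$, I may freely rotate so that the distinguished pure state is the one with Bloch vector $\hat z=(0,0,1)^T$. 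This accounts for the external factors $O$ in the statement and leaves a residual freedom of rotation about the $z$--axis, which I will exploit at the end of each case to reach the canonical form.

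First I would treat the case in which $\phi$ fixes the pure state $\ket{0}\!\bra{0}$ (Bloch vector $\hat z$). Writing $\phi(\cdot)=\sum_k K_k(\cdot)K_k^\dag$ and using that $\sum_k (K_k\ket{0})(K_k\ket{0})^\dag=\ket{0}\!\bra{0}$ has rank one, each $K_k\ket{0}$ must be proportional to $\ket{0}$; hence every Kraus operator is upper triangular, $K_k=\left(\begin{smallmatrix} a_k & b_k\\ 0 & d_k\end{smallmatrix}\right)$. A direct computation of $M_{ij}=\tfrac12\mathrm{Tr}[\sigma_i\phi(\sigma_j)]$ and $c_i=\tfrac12\mathrm{Tr}[\sigma_i\phi(\mathds{1})]$ then expresses the Bloch data entirely through the ``moments'' $\mu\equiv\sum_k|d_k|^2$, $Q\equiv\sum_k a_k\bar d_k$ and $\beta\equiv\sum_k b_k\bar d_k$, where the trace--preservation and fixed--point conditions force $\sum_k|a_k|^2=1$, $\sum_k|b_k|^2=1-\mu$ and $\sum_k\bar a_kb_k=0$. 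One finds that the lower--left block of $M$ vanishes, the upper--left $2\times 2$ block equals the scaled rotation $\lambda R(-\theta)$ with $Q=\lambda e^{i\theta}$, and $M_{33}=\mu$, while $c$ lies in the plane spanned by $\hat z$ and the direction fixed by $\beta$. Using the residual rotation about $z$, which commutes with (hence fixes) the scaled--rotation block, I can align $c$ so that $c=(-\alpha,0,1-\mu)^T$ with $\alpha\equiv|\beta|\geq 0$; this reproduces exactly~\eqref{ES q explicit 1M}--\eqref{ES q explicit 1c}.

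The main work, and the step I expect to be the genuine obstacle, is to show that complete positivity translates \emph{precisely} into the stated inequalities $0<\lambda\le1$, $\lambda^2\le\mu\le1$ and $\alpha^2\le(1-\mu)(\mu-\lambda^2)$. The clean way to do this is to observe that the prescribed moments are exactly the entries of the Gram matrix of the three vectors $\vec a=(a_k)_k,\ \vec b=(b_k)_k,\ \vec d=(d_k)_k$, and that vectors realizing them (equivalently, a completely positive, trace--preserving $\phi$ of the required type) exist if and only if this Gram matrix is positive semidefinite:
\begin{equation*} G=\begin{pmatrix} 1 & 0 & \lambda e^{-i\theta}\\ 0 & 1-\mu & -\alpha\\ \lambda e^{i\theta} & -\alpha & \mu\end{pmatrix}\ \geq\ 0\, . \end{equation*}
Its leading $2\times 2$ minor gives $\mu\le1$, the principal minor on indices $\{1,3\}$ gives $\mu-\lambda^2\ge0$, and $\det G=(1-\mu)(\mu-\lambda^2)-\alpha^2\ge 0$ gives the last inequality; the remaining principal minor is then automatically nonnegative. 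Finally $\det\phi=\det M=\lambda^2\mu\neq 0$ forces $\lambda>0$ (and $\mu>0$, which is automatic from $\mu\ge\lambda^2$). This settles both directions for family~1 at once, since from any admissible $(\lambda,\theta,\alpha,\mu)$ the positivity of $G$ produces the vectors, hence Kraus operators, hence a genuine channel fixing a pure state with $\det\neq0$, which is entanglement--saving by Theorem~\ref{ES q}.

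The second case, in which $\phi$ inverts a pure state, is handled analogously but is simpler. Here $\phi$ is necessarily unital (as noted after Theorem~\ref{ES q}), so both $\sum_k K_kK_k^\dag=\mathds{1}$ and $\sum_k K_k^\dag K_k=\mathds{1}$ hold. Imposing in addition $\phi(\ket{0}\!\bra{0})=\ket{1}\!\bra{1}$ forces each $K_k\ket{0}\propto\ket{1}$, i.e. $(K_k)_{00}=0$; combining the unital and trace--preserving sum rules then yields $\sum_k|d_k|^2=0$, so every Kraus operator is antidiagonal, $K_k=\left(\begin{smallmatrix}0 & b_k\\ c_k & 0\end{smallmatrix}\right)$ with $\sum_k|b_k|^2=\sum_k|c_k|^2=1$. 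Computing the Bloch data gives $c=0$, $M_{33}=-1$, and an upper--left $2\times2$ block governed by $S\equiv\sum_k b_k\bar c_k$ that coincides with the reflection--type block of $M_-$ in~\eqref{ES q explicit 2} upon writing $\lambda=|S|$. The only remaining restriction from complete positivity is the Cauchy--Schwarz bound $\lambda=|S|\le\sqrt{\sum_k|b_k|^2}\,\sqrt{\sum_k|c_k|^2}=1$, with every value $\lambda\le1$ attainable, while $\det\phi=\det M_-=\lambda^2\neq0$ forces $\lambda>0$, giving $0<\lambda\le1$. Conversely, explicit antidiagonal Kraus operators realize each such $M_-$, and since it inverts $\hat z$ with $\det\neq0$ it is entanglement--saving by Theorem~\ref{ES q}, completing the proof.
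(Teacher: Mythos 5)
Your proof is correct, and it reaches the result by a genuinely different route from the paper's. The common skeleton is the same: both start from Theorem~\ref{ES q}, use the conjugation freedom~\eqref{nUC} to place the fixed (or inverted) pure state at the north pole of the Bloch sphere, and spend a residual rotation about that axis to normalize the transverse parameter to $-\alpha$ with $\alpha\geq 0$. The difference lies in how complete positivity is converted into the parameter constraints. The paper works at the Bloch/Choi level: it first extracts the block structure of $M$ from positivity alone (a first--order variation of $|Mn+c|^2$ at its maximum $n=e_3$ gives $M^Te_3=\mu e_3$), then writes the Choi matrix $R_\phi$ explicitly and obtains $d=a=0$ (resp.\ $s=b=0$) and the inequality $\alpha^2+\beta^2\leq(1-\mu)(\mu-\lambda^2)$ from principal--minor analysis; finally, for the converse it computes the powers $(\tilde M_+,\tilde c_+)^n$ in closed form and checks via the PPT criterion that no power is entanglement--breaking. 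You instead work at the Kraus level: the rank--one condition $\phi(\Ket{0}\!\!\Bra{0})=\Ket{0}\!\!\Bra{0}$ (resp.\ $\phi(\Ket{0}\!\!\Bra{0})=\Ket{1}\!\!\Bra{1}$ together with unitality) forces upper--triangular (resp.\ antidiagonal) Kraus operators, and then complete positivity plus trace preservation of a map with the prescribed Bloch data becomes exactly the solvability of a moment problem, i.e.\ positive semidefiniteness of the Gram matrix $G$; the stated constraints are its principal minors, and your observation that the $\{2,3\}$ minor is implied by the others, $(1-\mu)\mu-\alpha^2\geq(1-\mu)\lambda^2\geq0$, is precisely what closes the equivalence in both directions. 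Your converse is also leaner: the Gram construction hands you an honest channel fixing (or inverting) a pure state with $\det\phi=\lambda^2\mu\neq0$ (resp.\ $\lambda^2\neq0$), so the backward implication of Theorem~\ref{ES q} yields the entanglement--saving property at once, with no need for the paper's explicit power computation and PPT test. What the paper's longer computation buys is the explicit composition law for the parameters under iteration, which has independent interest; what yours buys is the clean one--shot equivalence (complete positivity $\Leftrightarrow$ $G\geq0$) and a converse that recycles the already--proved characterization instead of re--verifying it by hand.
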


\begin{proof} $\\$
Thanks to Theorem~\ref{ES q}, we know that $\phi$ is entanglement--saving if and only if $\det\phi\neq 0$, and it fixes or inverts a pure state. Let us begin with the first possibility. 
In the following, recall the elementary property~\eqref{nUC}, which corresponds to the degree of freedom represented by $O$ in \eqref{ES q explicit 1M},~\eqref{ES q explicit 1c} and \eqref{ES q explicit 2}. Therefore, by applying if necessary an orthogonal matrix before the channel and its inverse after, we can suppose without loss of generality that the fixed point is $\Ket{0}\!\!\Bra{0}=\frac{\mathds{1}+e_3\cdot\vec{\sigma}}{2}$ (with \mbox{$e_3=(0,0,1)^T$}), i.e.
\begin{equation} Me_3+c\, =\, e_3\, . \label{q fix point} \end{equation}
The positivity condition which has to be imposed on $(M,c)$ can be written (exactly as in Lemma~\ref{norm infty 1 unital}) :
\begin{equation} |Mn+c|^2\, \leq\, 1\quad \forall\, n\in\mathds{R}^3\, :\ \ |n|=1\, . \label{q positivity 1} \end{equation}
Since the left--hand side of \eqref{q positivity 1} reaches its maximum at $n=e_3$, here its first--order variation must be zero. Then
\begin{multline*} 2\, \delta n^T\, M^T M\, e_3\, +\, 2\, \delta n^T\, M^T c\, \equiv\, 0 \quad \forall\, \delta n\perp e_3\quad\Rightarrow\\
\Rightarrow\quad M^T\, (M e_3+c)\, \propto\, e_3 . \end{multline*} 
Together with~\eqref{q fix point}, this gives $M^T e_3 = \mu e_3$ for some real $-1\leq\mu\leq 1$ (the restriction comes from the inequality $\|M\|_\infty\leq 1$ of Lemma~\ref{norm infty 1 unital}). This shows that there exist $m\in\mathcal{M}(2;\mathds{R})$ and $-1\leq\alpha,\beta\leq 1$ such that
\begin{equation*} M\ =\ \begin{pmatrix} m_{11} & m_{12} & \alpha \\ m_{21} & m_{22} & \beta  \\ 0 & 0 & \mu \end{pmatrix}\ ,\qquad c\ =\  \begin{pmatrix} -\alpha \\ -\beta \\ 1-\mu \end{pmatrix}\ . \end{equation*} 
It will be more simple to adopt the parametrization
\begin{equation*} m\ =\ \begin{pmatrix} s+d & a+b \\ a-b & s-d \end{pmatrix}\ . \end{equation*}  
Until now we have used only the positivity of $\phi$. In order to exploit the complete positivity, we have to impose that the Choi matrix \mbox{$R_\phi=(\phi\otimes I)(\Ket{\varepsilon}\!\!\Bra{\varepsilon})$} must be positive. In what follows, we will use for the bipartite system the computational basis sorted in lexicographical order, i.e. $\Ket{00},\Ket{01},\Ket{10},\Ket{11}$. With this convention, one has
\begin{equation*} R_\phi\ =\ \frac{1}{2}\ \begin{pmatrix} 1 & 0 & 0 & s + i b \\ 0 & 1-\mu & d - i a & - \alpha+i\beta \\ 0 & d + i a & 0 & 0 \\ s - i b & - \alpha-i\beta & 0 & \mu \end{pmatrix}\ . \end{equation*}
Take the $2\times 2$ principal minor composed of the second and third rows and columns. Then
\begin{multline*} R_\phi\geq 0\quad\Rightarrow\quad 0\ \leq\ \det\begin{pmatrix} 1-\mu & d-ia \\ d+ia & 0 \end{pmatrix}\ =\\
=\, -d^2-a^2\quad\Rightarrow\quad d=a=0\, . \end{multline*}
Let us call $s=\lambda \cos \theta$ and $b=\lambda \sin\theta$, with $\theta\in\mathds{R}$. Observe that $\lambda=0$ is prohibited by $\det\phi\neq 0$, and $\lambda>1$ would imply $\|M\|_\infty>1$. Since this would contradict Lemma~\ref{norm infty 1 unital}, we can require $0<\lambda\leq 1$. Then
\begin{equation*} R_\phi\ =\ \frac{1}{2}\ \begin{pmatrix} 1 & 0 & 0 & \lambda e^{i\theta} \\ 0 & 1-\mu & 0 & - \alpha+i\beta \\ 0 & 0 & 0 & 0 \\ \lambda e^{-i\theta} & - \alpha-i\beta & 0 & \mu \end{pmatrix}\ . \end{equation*}
Exploiting Silvester's criterion on principal minors (see p. 404 of~\cite{HJ1}), it is not difficult to prove that the positivity of this matrix is equivalent to
\begin{multline*} 0\ \leq\ \det\begin{pmatrix} 1 & 0 & \lambda e^{i\theta} \\ 0 & 1-\mu & -\alpha+i\beta \\ \lambda e^{-i\theta} & -\alpha-i\beta & \mu \end{pmatrix}\, =\\
=\, (1-\mu)(\mu-\lambda^2)\, - \alpha^2 - \beta^2\, . \end{multline*}

Until now, we have proved that, if $\det \phi \neq 0$ and \mbox{$\phi=(M,c)$} fixes a pure state, then there exists \mbox{$O\in\text{SO}(3)$}, $\theta\in\mathds{R}$, $0<\lambda \leq 1$, $\lambda^2 \leq \mu \leq 1$, $\alpha,\beta\in\mathds{R}$ satisfying the condition \mbox{$\alpha^2+\beta^2\leq(1-\mu)(\mu-\lambda^2)$} such that 
\begin{multline*} M\ =\ O\ \tilde{M}_+ (\lambda, \theta, \alpha, \beta, \mu)\ O^T\ \equiv\\
\equiv\ O\ \begin{pmatrix} \lambda \cos \theta & \lambda \sin \theta & \alpha \\ - \lambda \sin \theta & \lambda \cos \theta & \beta \\ 0 & 0 & \mu \end{pmatrix}\ O^T\, ,\end{multline*}
\begin{equation*} c\ =\ O\ \tilde{c}_+ (\alpha, \beta, \mu)\ \equiv\ O\ \begin{pmatrix} -\alpha \\ -\beta \\ 1-\mu \end{pmatrix}\, . \end{equation*}
To show that every such a pair $(\tilde{M}_+, \tilde{c}_+)$ is entanglement--saving, observe that
\begin{multline*} \left(\ \tilde{M}_+(\lambda, \theta, \alpha, \beta, \mu)\ ,\ \ \tilde{c}_+ (\alpha, \beta, \mu)\ \right)^n\ =\\
=\ \left(\ \tilde{M}_+(\lambda^n, n \theta, \alpha_n, \beta_n, \mu^n)\ ,\ \ \tilde{c}_+ (\alpha_n, \beta_n, \mu^n)\ \right)\ ,  \end{multline*}
with
\begin{equation*} \begin{pmatrix} \alpha_n \\ \beta_n \end{pmatrix}\ \equiv\ \left(\ \begin{pmatrix} \lambda\cos\theta & \lambda\sin\theta \\ -\lambda\sin\theta & \lambda\cos\theta \end{pmatrix}\ +\ \mu\ \mathds{1}\ \right)^n\ \begin{pmatrix} \alpha \\ \beta \end{pmatrix}\ . \end{equation*}

Therefore, by taking the partial transpose of $R_{\phi^n}$, one obtains

\begin{equation*} R_{\phi^n}^{T_B}\ =\ \frac{1}{2}\ \begin{pmatrix} 1 & 0 & 0 & 0 \\ 0 & 1-\mu^n & \lambda^n e^{in\theta} & - \alpha_n+i\beta_n \\ 0 & \lambda^n e^{-in\theta} & 0 & 0 \\ 0 & - \alpha_n-i\beta_n & 0 & \mu^n \end{pmatrix}\ .  \end{equation*}

The $2\times 2$ principal minor formed of the second and third rows and columns has negative determinant because $\lambda>0$, and this shows that $R_{\phi^n}^{T_B}$ can not be positive definite. Then the PPT criterion implies that $\phi^n$ can not be entanglement--breaking. Observe that it is possible to suppose $\beta=0$ and $\alpha\geq 0$ without compromising~\eqref{q fix point}, by means of the application of an appropriate rotation around $e_3$ before the channel and of its inverse after. In this way, one obtains~\eqref{ES q explicit 1M} and~\eqref{ES q explicit 1c}. This concludes the first part of the proof.

Now, let us concern ourselves with the second possibility. Suppose that $\det\phi\neq 0$ and that $\phi$ inverts a pure state. Proposition~\ref{norm infty 1 unital} shows that such a channel must be unital ($c=0$), since $\|M\|_\infty=1$. As in \eqref{q fix point}, we can suppose $Me_3=-e_3$. Moreover, to avoid $\|M\|_\infty>1$, the third row of $M$ can not contain any other nonzero element, i.e. there must exists
\begin{equation*} \begin{pmatrix} s+d & a+b \\ a-b & s-d \end{pmatrix}\ \in\ \mathcal{M}(2;\mathds{R}) \end{equation*}
such that
\begin{equation*} M\ =\ \begin{pmatrix} s+d & a+b & 0 \\ a-b & s-d & 0 \\ 0 & 0 & -1 \end{pmatrix}\ . \end{equation*}
Now, the corresponding Choi matrix becomes
\begin{equation*}  R_\phi\ =\ \frac{1}{2}\ \begin{pmatrix} 0 & 0 & 0 & s+ib \\ 0 & 1 & d-ia & 0 \\ 0 & d+ia & 1 & 0 \\ s-ib & 0 & 0 & 0 \end{pmatrix}\ . \end{equation*}
The positivity condition for such an object implies \mbox{$s=b=0$}, and \mbox{$d=\lambda\cos\theta,\ a=\lambda\sin\theta$}, again with \mbox{$0<\lambda\leq 1$} and \mbox{$\theta\in\mathds{R}$}. In order to show that every such a pair $(M_-(\lambda,\theta),0)$, with $0<\lambda\leq 1$, is entanglement--saving, it suffices to use the previous reasoning by observing that
\begin{equation*} M_-(\lambda,\theta)^{2n}\ =\ \tilde{M}_+(\lambda^{2n},0,0,0,1)\ . \end{equation*}

\end{proof}

Observe that the tho cases $(M_+,c_+)$ and $(M_-,0)$ are truly different only if $\lambda<1$ (in this case the spectra are indeed different). Conversely, if $\lambda=1$ it is always possible to bring back the second channel into the first form.

Thanks to this result, the set of ES qubit channels is essentially characterized (up to a unitary conjugation) by four parameters, that we called \mbox{$\lambda, \theta, \alpha, \mu$}. It could be useful to write once for all the action of the two maps
\begin{gather} \phi^+_{\lambda,\theta,\alpha,\mu}\ \equiv\ (\ M_+(\lambda,\theta, \alpha, \mu),\ c_+(\alpha, \mu)\ )\ , \label{phi+}\\
\phi^-_{\lambda,\theta}\ \equiv\ ( M_-(\lambda,\theta),\ 0) \label{phi-} \end{gather}
on a generic $2\times 2$ hermitian matrix:
\begin{equation} \phi^+_{\lambda,\theta,\alpha,\mu}\ \begin{pmatrix} a & b \\ b^* & c \end{pmatrix}\ \equiv\ \begin{pmatrix} a+(1-\mu)\ c & \lambda e^{i\theta}\ b - \alpha\ c \\ \lambda e^{-i\theta}\ b^* - \alpha\ c & \mu\ c \end{pmatrix}\ , \label{phi+ action} \end{equation}
\begin{equation} \phi^-_{\lambda,\theta}\ \begin{pmatrix} a & b \\ b^* & c \end{pmatrix}\ \equiv\ \begin{pmatrix} c & \lambda e^{-i\theta}\ b^* \\ \lambda e^{i\theta}\ b & a \end{pmatrix}\ . \label{phi- action}  \end{equation}

Observe that the two real parameters $\lambda,\theta$ can be joined together in order to form an unique complex parameter $z\equiv \lambda e^{i\theta}$ which satisfies $0<|z|\leq 1$ . 

In conclusion, let us examine a particular well--known set of ES channels in the following example.

\begin{ex}[Amplitude Damping Channels as ES] $\\$
The Amplitude Damping channels are a well--known set of qubit channels which reproduce the action of a spontaneous emission process, the system being coupled to a zero--temperature environment. They form a set parametrized by \mbox{$0\leq p\leq 1$} and defined by
\begin{equation*} AD_p\ \begin{pmatrix} a & b \\ b^* & c \end{pmatrix}\ \equiv\ \begin{pmatrix} a+(1-p)\ c & \sqrt{p}\ b \\ \sqrt{p}\ b^* & p\ c \end{pmatrix}\ . \end{equation*}
An easy inspection reveals that $AD_p\notin\mathbf{EBt}_2$ as soon as $p>0$. Moreover, since the composition rule is simply
\begin{equation} AD_p^n\,\equiv\, AD_{p^n}\, , \label{GAD comp n} \end{equation}
we deduce that for $p>0$ their direct $n$--index takes the value $+\infty$, that is, that they are entanglement--saving. As expected, a comparison with \eqref{phi+ action} shows that that
\begin{equation*} AD_p\ =\ \phi^+_{\lambda,\theta,\alpha,\mu} \quad \text{with} \quad \lambda=\sqrt{p},\ \theta=0,\ \alpha=0,\ \mu=p\ . \end{equation*} 
\end{ex}

\section{Asymptotically Entanglement--Saving Channels} \label{sec AES}

Through this section, we define and study an interesting subset of the entanglement--saving set, which we call \emph{asymptotically entanglement--saving} class. This further classification is based on the behaviour of the limit points of the sequence $\left( \phi^n \right)_{n\in\mathds{N}}$. In order to state clearly a well--posed definition, we need some technical preliminaries.

\subsection{Definition}

Recall that a limit point of a sequence is by definition the limit of one of its subsequences. Naturally, if a sequence admits more than one limit point, then it does not converge (e.g. the sequence $\left((-1)^n\right)_{n\in\mathds{N}}$ has the two limit points $+1$ and $-1$). With this concept at hand, we can now examine the structure of the limit points of a sequence $\left( \phi^n \right)_{n\in\mathds{N}}$.

\begin{lemma} \label{LPS} $\\$
Let $A\in\mathcal{M}(m;\mathds{C})$ be a complex square matrix. Then the sequence $\left( A^n \right)_{n\in\mathds{N}}$ has some (finite) limit points if and only if every eigenvalue $z\in\mathds{C}$ of $A$ verifies $|z|\leq 1$, and for each eigenvalue $z$ of modulus $1$ the corresponding Jordan blocks are trivial (that is, there are no off--diagonal elements). Every quantum channel $\phi\in\mathbf{CPt}_d$ has these properties.
\end{lemma}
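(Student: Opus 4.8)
The plan is to diagonalize as far as possible by passing to the Jordan canonical form and then reduce the problem to the behaviour of the powers of a single Jordan block. Writing $A = S J S^{-1}$ with $J$ in Jordan form, one has $A^n = S J^n S^{-1}$, so the sequence $(A^n)_n$ is bounded (resp.\ unbounded) precisely when $(J^n)_n$ is, because the two are related by the fixed similarity $S$ and all norms on the finite--dimensional space $\mathcal{M}(m;\mathds{C})$ are equivalent. Thus it suffices to analyse each Jordan block separately.

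First I would treat a single block $J_k(\lambda) = \lambda\mathds{1} + N$, where $N$ is the nilpotent shift with $N^k = 0$. Since $\lambda\mathds{1}$ and $N$ commute, the binomial theorem gives
\begin{equation*} J_k(\lambda)^n\ =\ \sum_{j=0}^{k-1} \binom{n}{j}\, \lambda^{n-j}\, N^j\, , \end{equation*}
so every entry of $J_k(\lambda)^n$ is, up to a constant, of the form $\binom{n}{j}\lambda^{n-j}$ with $0 \leq j \leq k-1$. This single formula settles all cases: if $|\lambda| > 1$ the diagonal term $\lambda^n$ already diverges; if $|\lambda| < 1$ the geometric decay beats the polynomial factor $\binom{n}{j} \sim n^j/j!$, so $J_k(\lambda)^n \to 0$; and if $|\lambda| = 1$ the block is bounded exactly when $k = 1$, since for $k \geq 2$ the super--diagonal entry $n\lambda^{n-1}$ has modulus $n \to \infty$.

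With this in hand I would derive the equivalence by a boundedness argument. For the ``if'' direction, assuming all eigenvalues satisfy $|\lambda| \leq 1$ and the peripheral ones carry only trivial blocks, each block of $J^n$ is bounded, hence $(A^n)_n$ is a bounded sequence in $\mathcal{M}(m;\mathds{C})$; by Bolzano--Weierstrass it then has a convergent subsequence, i.e.\ a finite limit point. For the converse I would argue by contraposition: if some eigenvalue has $|\lambda| > 1$, or some peripheral eigenvalue carries a nontrivial block, then the block analysis shows $\|J^n\| \to \infty$, whence $\|A^n\| \geq \|J^n\|/(\|S\|\,\|S^{-1}\|) \to \infty$; since this divergence holds along the \emph{whole} sequence it holds along every subsequence, so no subsequence can converge to a finite matrix.

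Finally, the last sentence is immediate: point 4 of Theorem~\ref{spect prop Pt} states precisely that every $\phi \in \mathbf{CPt}_d$ (being in particular in $\mathbf{Pt}$) has all eigenvalues in the closed unit disk and only trivial Jordan blocks on the peripheral spectrum, which are exactly the two conditions just characterised. The main (and only mildly delicate) point of the whole argument is the block computation: one must keep track of the polynomial prefactors $\binom{n}{j}$ to see that they are harmless when $|\lambda| < 1$ but force divergence as soon as $|\lambda| = 1$ with a nontrivial block.
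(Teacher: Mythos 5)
Your proof is correct and follows essentially the same route as the paper: Jordan decomposition to reduce to single blocks, boundedness analysis of the block powers, and Bolzano--Weierstrass plus Theorem~\ref{spect prop Pt} for the conclusion. The only difference is one of detail, not of method -- the paper compresses the block analysis into one sentence, while you spell out the binomial expansion and (commendably) handle the subtlety that mere unboundedness does not by itself exclude limit points, by noting that the divergent entries grow along the \emph{whole} sequence.
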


\begin{proof}
It is enough to put $A$ in Jordan block form to see that the condition expressed in the thesis are necessary and sufficient in order to guarantee that the sequence $\left( A^n \right)_{n\in\mathds{N}}$ is not unbounded. Then, the Bolzano--Weierstrass theorem ensures that every bounded sequence in a finite--dimensional euclidean space (such as $\mathcal{M}(d;\mathds{C})$) admits a limit point. Theorem~\ref{spect prop Pt} states that the quantum channels enjoy these properties. Indeed, the boundness requirement for the sequence of powers (which are again $\mathbf{CPt}$ maps, and $\mathbf{CPt}_d$ is a compact set) is exactly the way we proved that spectral properties for quantum channels.
\end{proof}

Using this discussion as well as the proof of Theorem~\ref{spect prop Pt} as guidelines, the following Lemma should be quite obvious.

\begin{lemma} \label{pLPJf} $\\$
Let $A\in\mathcal{M}(m;\mathds{C})$ be a complex square matrix. As in~\eqref{Jordan}, write a Jordan decomposition for $A$, that is
\begin{equation} A\, =\, \sum_{k} (\lambda_k P_k + N_k)\ , \end{equation}
where the $\lambda_k$ are eigenvalues, the $P_k$ are projectors onto the generalized subspaces, and the $N_k$ are nilpotent applications. If the sequence $\left( A^n \right)_{n\in\mathds{N}}$ has some (finite) limit points, then
\begin{gather}
E_A\ \equiv\, \sum_{k:\, |\lambda_k|=1} P_k\ , \label{E A} \\
I_A\ \equiv\, \sum_{k:\, |\lambda_k|=1} \lambda_k^*\, P_k \label{I A}
\end{gather}
are two of them. Moreover, every limit point is diagonalizable in a Jordan basis for $A$, and has the form
\begin{equation} \sum_{k:\ |\lambda_k|=1 } z_k P_k\ ,\quad |z_k|\equiv 1\ \ \forall\ k \ . \label{limit diag} \end{equation} 
\end{lemma}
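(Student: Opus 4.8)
The plan is to use the Jordan decomposition \eqref{Jordan} to split off the asymptotically relevant part of $A^n$, reduce the analysis to a sequence living on a torus, and then invoke Dirichlet's theorem on simultaneous Diophantine approximation to produce $E_A$ and $I_A$ explicitly.

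First I would expand $A^n = \sum_k (\lambda_k P_k + N_k)^n$. Because $P_h P_k = \delta_{hk}P_k$ and $P_k N_k = N_k P_k = N_k$, the generalized eigenspaces decouple, and on the $k$th one the power equals $\sum_{j=0}^{d_k-1}\binom{n}{j}\lambda_k^{n-j}N_k^j$. For $|\lambda_k|<1$ every term tends to $0$, since the geometric factor $\lambda_k^{n-j}$ dominates the polynomial growth of $\binom{n}{j}$; for $|\lambda_k|=1$ the very hypothesis that $(A^n)_{n}$ has finite limit points forces $N_k=0$ (this is exactly the content of Lemma~\ref{LPS}), so only the term $\lambda_k^n P_k$ survives. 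Collecting everything yields $A^n = \sum_{k:\,|\lambda_k|=1}\lambda_k^n P_k + R_n$ with $R_n\to 0$, so the limit points of $(A^n)_n$ are precisely those of the truncated sequence $\sum_{k:\,|\lambda_k|=1}\lambda_k^n P_k$.

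Next I would use that the $P_k$ are linearly independent in order to identify these limit points. The linear map $(z_k)_k\mapsto \sum_{k:\,|\lambda_k|=1} z_k P_k$ is a homeomorphism onto its image, so limit points of $A^n$ correspond bijectively to limit points of the sequence of phase vectors $(\lambda_k^n)_k$. Each component has modulus $1$, hence every limit point is of the form $\sum_{k:\,|\lambda_k|=1} z_k P_k$ with $|z_k|\equiv 1$, which is exactly \eqref{limit diag}. Moreover, since $N_k=0$ on every peripheral block, such an operator acts as the scalar $z_k$ on the corresponding generalized eigenspace and as $0$ on the remaining ones; it is therefore diagonal in any Jordan basis of $A$, establishing the diagonalizability claim.

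The crux is to realize $E_A$ and $I_A$ themselves as limit points. Writing $\lambda_k=e^{2\pi i\alpha_k}$ over the peripheral indices, I would apply Dirichlet's theorem on simultaneous Diophantine approximation, which furnishes arbitrarily large positive integers $n_j$ such that each $n_j\alpha_k$ lies within $\varepsilon_j$ of an integer, with $\varepsilon_j\to 0$. Then $\lambda_k^{n_j}\to 1$ for all peripheral $k$ simultaneously, so $A^{n_j}\to \sum_{k:\,|\lambda_k|=1} P_k = E_A$ by \eqref{E A}. Finally, the shifted subsequence $m_j=n_j-1$ gives $\lambda_k^{m_j}=\lambda_k^{n_j}\,\lambda_k^{-1}=\lambda_k^{n_j}\lambda_k^*\to\lambda_k^*$ (using $|\lambda_k|=1$), whence $A^{m_j}\to \sum_{k:\,|\lambda_k|=1}\lambda_k^*P_k=I_A$ by \eqref{I A}. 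I expect this Diophantine step to be the main obstacle, since simultaneously driving all peripheral phases close to $1$ is not automatic and is precisely what Dirichlet's theorem guarantees.
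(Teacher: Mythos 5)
Your proof is correct and takes essentially the same route the paper intends: the paper disposes of this lemma by pointing back to the proof of Theorem~\ref{spect prop Pt}, whose fifth item rests precisely on the combination you spell out, namely the Jordan-block reduction (sub-unit eigenvalues decay, peripheral blocks are trivial by Lemma~\ref{LPS}) plus Dirichlet's theorem on simultaneous Diophantine approximation to drive all peripheral phases to $1$ along a subsequence. Your shift $m_j = n_j - 1$ to realize $I_A$ as a limit point is the standard completion of that same argument, so nothing differs in substance.
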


Now we are in position to prove a useful observation. The following Proposition explores the algebraic structure of the set of the limit points of a sequence $\left( A^n \right)_{n\in\mathds{N}}$.

\begin{prop}[Limit Points of the Powers of a Matrix as a Group] \label{LPG} $\\$
Let $A\in\mathcal{M}(m;\mathds{C})$ be a complex square matrix. Consider
\begin{equation} \mathcal{G}_A\,\equiv\,\{ \text{\ limit points of $\left(A^n\right)_{n\in\mathds{N}}$}\ \}\ . \end{equation}
Then $\mathcal{G}_A$, if not empty, is an abelian compact group with the standard operation of matrix multiplication.
\end{prop}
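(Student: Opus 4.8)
The plan is to reduce the whole statement to the elementary theory of rotations on a torus, using Lemma~\ref{pLPJf} as the entry point. By that lemma every element of $\mathcal{G}_A$ has the form $\sum_{k:\,|\lambda_k|=1} z_k P_k$ with $|z_k|\equiv 1$, so all limit points lie inside the commutative algebra generated by the peripheral spectral projectors. These projectors satisfy $P_hP_k=\delta_{hk}P_k$ and are nonzero, hence linearly independent; consequently the map $\Phi$ sending a phase vector $(z_k)_k$, indexed by the $r$ distinct peripheral eigenvalues with $|z_k|\equiv 1$, to $\sum_k z_k P_k$ is a continuous, injective homomorphism from the torus $T=(S^1)^{r}$ (with coordinatewise multiplication) into $\mathcal{M}(m;\mathds{C})$, carrying the group operation of $T$ exactly into matrix multiplication since $\Phi(z)\Phi(w)=\sum_k z_kw_k P_k=\Phi(zw)$. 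The first task is simply to record these properties together with $\Phi((1,\dots,1))=E_A$.

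Next I would identify $\mathcal{G}_A$ with the $\Phi$-image of a concrete subset of $T$. Writing the Jordan decomposition \eqref{Jordan} and invoking Lemma~\ref{LPS} (no nontrivial nilpotent part survives on the peripheral spectrum, since unit-modulus eigenvalues carry trivial blocks), one obtains $A^n=\Phi(g^n)+R_n$, where $g=(\lambda_k)_k$ is the peripheral phase vector and $R_n\to 0$ collects the strictly subunital blocks. Passing to convergent subsequences by compactness of $T$, a matrix $L$ is then a limit point of $(A^n)_n$ if and only if $L=\Phi(x)$ for some limit point $x$ of $(g^n)_n$ in $T$. Thus $\mathcal{G}_A=\Phi(K)$, where $K\subseteq T$ is the set of accumulation points of the cyclic orbit $(g^n)_n$.

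It then remains to show that $K$ is a compact abelian subgroup of $T$; transporting through the embedding $\Phi$ yields the theorem, with unit $E_A=\Phi(e)$ and with the inverse of $\Phi(x)$ equal to $\Phi(x^{-1})$. Compactness is automatic, as $K$ is closed (a set of subsequential limits is always closed) and bounded, and commutativity is inherited from $T$. Closure under multiplication is elementary: if $x=\lim_j g^{n_j}$ and $y\in K$, then for each fixed $j$ the element $g^{n_j}y=\lim_k g^{n_j+m_k}$ is a limit of terms of the sequence with indices tending to infinity, hence lies in $K$; letting $j\to\infty$ and using that $K$ is closed gives $xy\in K$.

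The genuinely delicate point, which I expect to be the main obstacle, is the existence of inverses inside $K$. The cleanest self-contained route is a recurrence argument: compactness of $T$ forces the orbit $(g^n)$ to return near the identity, so there exist $m_\ell\to\infty$ with $g^{m_\ell}\to e$, i.e. $e\in K$ (this also follows directly from $E_A\in\mathcal{G}_A$ via Lemma~\ref{pLPJf}). Given $x=\lim_j g^{n_j}\in K$, fix $j$ and multiply $g^{m_\ell}\to e$ by $g^{-n_j}$: for $m_\ell>n_j$ the terms $g^{m_\ell-n_j}$ are positive powers of $g$ with index tending to infinity and converge to $g^{-n_j}$, so $g^{-n_j}\in K$; letting $j\to\infty$ and using closedness of $K$ yields $x^{-1}=\lim_j g^{-n_j}\in K$. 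Alternatively one may note that $I_A\in\mathcal{G}_A$ from Lemma~\ref{pLPJf} already places $g^{-1}=\Phi^{-1}(I_A)$ in $K$, or simply cite that a nonempty closed sub-semigroup of a compact group is a subgroup. Assembling these facts shows that $\mathcal{G}_A=\Phi(K)$ is an abelian compact group under matrix multiplication.
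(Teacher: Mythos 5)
Your proof is correct, but it is organized quite differently from the paper's. The paper never leaves matrix space: it proves closure under multiplication by adding subsequence exponents ($ST=\lim_n A^{k_n+h_n}$), gets the identity $E_A$ and boundedness from the diagonal form \eqref{limit diag}, proves closedness of $\mathcal{G}_A$ by a diagonal extraction, and -- crucially -- obtains inverses by applying Lemma~\ref{pLPJf} a second time, now to an arbitrary limit point $S$ itself: all powers $S^k$ lie in $\mathcal{G}_A$, so $I_S$ (built from the spectral projectors of $S$) is a limit point of $(S^k)_k$ and hence lies in the closed set $\mathcal{G}_A$, with $S\,I_S=E_A$. You instead conjugate the whole problem onto the torus: the splitting $A^n=\Phi(g^n)+R_n$ with $R_n\to 0$, the identification $\mathcal{G}_A=\Phi(K)$ with $K$ the set of accumulation points of the rotation orbit $(g^n)_n$, and then the classical fact that a nonempty closed subsemigroup of a compact abelian group is a subgroup. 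Both arguments rest on Lemmas~\ref{LPS} and~\ref{pLPJf}, but the mechanisms for inverses differ genuinely: the paper reuses the spectral lemma per limit point, while you use recurrence of the orbit at the identity (your unproved assertion that $g^{m_\ell}\to e$ for some $m_\ell\to\infty$ is the only soft spot, but you correctly note it follows from $E_A\in\mathcal{G}_A$ and injectivity of $\Phi$, so nothing is missing). Your route buys conceptual transparency -- commutativity and compactness are immediate, and one even learns that $\mathcal{G}_A$ is isomorphic to a closed subgroup of a torus, which is more than the statement asks -- at the price of setting up the embedding $\Phi$ and the dictionary $\mathcal{G}_A=\Phi(K)$; the paper's proof is more uniform and self-contained at the matrix level.
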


\begin{proof}
We will prove that $\mathcal{G}_A$ is closed under multiplication, possesses an identity element, is closed and limited as a set, and moreover that each element has an inverse.
\begin{itemize}
\item If $S,T\in\mathcal{G}_A$, then it must be $ST\in\mathcal{G}_A$. In fact, there exist subsequences $(k_n)_{n\in\mathds{N}},(h_n)_{n\in\mathds{N}}$ such that
\begin{equation*} S=\lim_{n\rightarrow \infty} A^{k_n}\ ,\quad T=\lim_{n\rightarrow \infty} A^{h_n}\ . \end{equation*}
But then
\begin{equation*} ST=\lim_{n\rightarrow \infty} A^{k_n+h_n}\ \in\ \mathcal{G}_A\ . \end{equation*}
\item Let us explicitly construct an identity element. If $S\in\mathcal{G}_A$, then Lemma~\ref{pLPJf} ensures that $S$ is diagonalizable in a Jordan basis for $A$. Therefore, with the same notation of~\eqref{E A}, equation~\eqref{limit diag} implies that $S\ E_A = S$. And so, $E_A$ is an identity element for $\mathcal{G}_A$.
\item Observe that $\mathcal{G}_A$ is closed as a set because of its definition. To see this, we consider a limit point of $\mathcal{G}_A$ and show that it actually belongs to $\mathcal{G}_A$ itself. Let $(S_k)_{k\in\mathds{N}}$ be a sequence of elements belonging to $\mathcal{G}_A$. Then for every $k$ there exists a sequence of powers of $A$ which converges to $S_k$. In other words, we can write
\begin{equation*} \tilde{S}=\lim_{k\rightarrow \infty} S_k\ ,\quad S_k=\lim_{n\rightarrow\infty} A^{h^{(k)}_n}\ . \end{equation*}
For each $k\geq 1$, define an integer $n_k$ such that
\begin{equation*} \big\| S_k - A^{h_{n_k}^{(k)}} \big\|_\infty\ \leq\ \frac{1}{k}\ \ . \end{equation*}
Then
\begin{equation*} \tilde{S}\ =\ \lim_{k\rightarrow\infty} S_k\ =\ \lim_{k\rightarrow\infty} A^{h_{n_k}^{(k)}}\ \in\ \mathcal{G}_A\ . \end{equation*}
\item A consequence of Lemma~\ref{pLPJf} is that $\mathcal{G}_A$ must be limited as a set. In fact, all matrices $S\in\mathcal{G}_A$ can be simultaneously diagonalized using a Jordan basis for $A$. In this basis our claim is obvious, because~\eqref{limit diag} guarantees that all the eigenvalues belong to the complex circumference $|z|=1$.
\item Let us prove that each generic $S\in\mathcal{G}_A$ has an inverse internal to $\mathcal{G}_A$ (and so, such an object must be unique). Observe that $S^k\in\mathcal{G}_A$ for each $k\in\mathds{N}$, and that Lemma~\ref{pLPJf} claims that $I_S$ (defined as in \eqref{I A}) is indeed a limit point of this sequence (which is limited because contained inside $\mathcal{G}_A$, and therefore has some limit points). Thanks to the property of closure of $\mathcal{G}_A$, we can deduce that $I_S\in\mathcal{G}_A$. Because of its definition, it must be $S\ I_S=E_A$, so $I_S$ is an inverse of $S$.
\end{itemize}
\end{proof}

Now, consider a quantum channel $\phi\in\mathbf{CPt}_d$. Thanks to Lemma~\ref{LPS}, Proposition~\ref{LPG} applies, and we can define the corresponding (non--empty) set of limit points $\mathcal{G}_\phi$. Since $\mathbf{CPt}_d$ is a closed set, it can immediately seen that $\mathcal{G}_\phi\subseteq\mathbf{CPt}_d$. Our immediate goal is to classify the entanglement--breaking properties of the elements belonging to $\mathcal{G}_\phi$. Fortunately, this task is quite easy; the answer is the content of the following proposition. \\

\begin{prop} \label{Gphi EB} $\\$
Let $\mathcal{G}_\phi$ be the non--empty set of limit points of the powers of a quantum channel $\phi\in\mathbf{CPt}$. There are only two possibilities:
\begin{equation*} \mathcal{G}_\phi\subseteq \mathbf{EBt}\quad\text{or}\quad \mathcal{G}_\phi\cap \mathbf{EBt}\, =\, \emptyset\ . \end{equation*}
\end{prop}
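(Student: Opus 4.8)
The plan is to prove the equivalent dichotomy in its contrapositive form: assuming that the intersection $\mathcal{G}_\phi\cap\mathbf{EBt}$ is \emph{nonempty}, I would deduce that in fact $\mathcal{G}_\phi\subseteq\mathbf{EBt}$. The two key ingredients are already in hand. First, Proposition~\ref{LPG} tells me that $\mathcal{G}_\phi$ is a compact abelian \emph{group} under composition, with identity element $E_\phi$ and with every element $S$ possessing a two--sided inverse $S^{-1}$ that again lies in $\mathcal{G}_\phi$ (this is the map $I_S$ constructed in the proof of that proposition). Second, I recall from the remark preceding the statement that $\mathcal{G}_\phi\subseteq\mathbf{CPt}_d$, so that every element of the group, inverses included, is a genuine completely positive, trace--preserving channel.

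The heart of the argument is the propagation property~\eqref{EB propagates}, namely that composing an entanglement--breaking map with any completely positive map on either side yields again an entanglement--breaking map. So suppose $S\in\mathcal{G}_\phi\cap\mathbf{EBt}$ and let $T\in\mathcal{G}_\phi$ be arbitrary. Using the group structure I write
\begin{equation*}
T\ =\ T\,E_\phi\ =\ T\,(S^{-1}S)\ =\ (T\,S^{-1})\,S\ .
\end{equation*}
Now $T S^{-1}$ is the composition of the two channels $T,S^{-1}\in\mathcal{G}_\phi\subseteq\mathbf{CP}$, hence $T S^{-1}\in\mathbf{CP}$, while $S\in\mathbf{EB}$. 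Applying~\eqref{EB propagates} with the roles $\psi=T S^{-1}$ and $\phi=S$ gives $(T S^{-1})S\in\mathbf{EB}$, that is $T\in\mathbf{EBt}$. Since $T$ was an arbitrary element of $\mathcal{G}_\phi$, this establishes $\mathcal{G}_\phi\subseteq\mathbf{EBt}$, which is exactly the first of the two alternatives. The dichotomy follows because the two stated alternatives exhaust all cases for a nonempty $\mathcal{G}_\phi$: if the intersection is empty we are in the second case, and if it is nonempty the argument above forces the first.

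The argument is short, so there is no serious computational obstacle; the only points requiring care are conceptual rather than technical. I must be sure that the group inverse $S^{-1}$ is a bona fide CPt channel and not merely an abstract group inverse --- this is guaranteed precisely by $\mathcal{G}_\phi\subseteq\mathbf{CPt}_d$ together with the closure of $\mathcal{G}_\phi$ proved in Proposition~\ref{LPG}. Equally, I must use that $E_\phi$ is the true identity element of the group, so that $T E_\phi=T$ holds exactly (and not just up to the peripheral part), which is what legitimizes the factorization $T=(T S^{-1})S$. Once these two observations are in place, the propagation lemma does all the work.
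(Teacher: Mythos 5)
Your proof is correct and takes essentially the same route as the paper's primary proof: both exploit the group structure of $\mathcal{G}_\phi$ established in Proposition~\ref{LPG} to factor an arbitrary limit point through the entanglement--breaking element, and then invoke the propagation property~\eqref{EB propagates}. The only cosmetic difference is that you write $T=(T S^{-1})\,S$ (the EB map applied first), while the paper writes $S=S_0\,(S_0^{-1}S)$ (the EB map applied second); since~\eqref{EB propagates} covers composition with a CP map on either side, the two factorizations are interchangeable.
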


\begin{proof}
The only thing we have to prove is that $\mathcal{G}_\phi\subseteq \mathbf{EBt}$ if there exists $S_0 \in\mathcal{G}_\phi\cap \mathbf{EBt}$. Thanks to the group properties of $\mathcal{G}_\phi$, taken a generic $S\in\mathcal{G}_\phi$ we can certainly write
\begin{equation*} S\ =\ S_0\ (S_0^{-1} S)\ ,\quad S_0^{-1} S\in\mathcal{G}_\phi\subseteq \mathbf{CPt}\ . \end{equation*}
Recall the property~\eqref{EB propagates} of the entanglement--breaking channels. Since $S_0\in\mathbf{EBt}$, it must be also $S\in\mathbf{EBt}$.
\end{proof}

\begin{proof}[Alternative proof (more direct)]
These statements can be proved without invoking nor the particular group structure of $\mathcal{G}_\phi$, neither the other technical lemmas. The fact that $\mathcal{G}_\phi$ is not empty descends from the compactness of $\mathbf{CPt}_d$, as previously observed. Now, suppose that $S_0 \in\mathcal{G}_\phi\cap \mathbf{EBt}$. Then there exists a divergent sequence $(n_k)_{k\in\mathds{N}}$ such that \mbox{$\lim_{k\rightarrow \infty} \phi^{n_k}=S_0$}. Take another limit point $S=\lim_{h\rightarrow\infty}\phi^{m_h}\in\mathcal{G}_\phi$, and construct the divergent sequence $(h_k)_{k\in\mathds{N}}$, where \mbox{$h_k\equiv\min\,\{ h\in\mathds{N} :\, m_h\geq n_k\}$}. Then, for all $k\in\mathds{N}$ we can write 
\begin{equation} \phi^{m_{h_k}}=\phi^{m_{h_k}-n_k}\,\phi^{n_k}\, , \label{decompose} \end{equation}
with $m_{h_k}-n_k\geq 0$ by definition. Now, on one hand we have $\lim_{k\rightarrow\infty}\phi^{m_{h_k}}=S$ and $\lim_{k\rightarrow\infty}\phi^{n_k}=S_0$. On the other hand, the sequence $(\phi^{m_{h_k}-n_k})_{k\in\mathds{N}}$ is bounded (as it is composed by quantum channels), and so it must have at least one limit point $S_1\in\mathbf{CPt}$. Taking the limit of~\eqref{decompose} only on the subsequence that produces that limit point, we obtain $S=S_1 S_0$, and then from~\eqref{EB propagates} we infer again $S\in\mathbf{EBt}$. 
\end{proof}

Proposition~\ref{Gphi EB} makes a clear distinction between the two behaviours of the limit points. The following definition makes sense now; actually, it seems quite natural.

\begin{Def}[Asymptotically Entanglement--Saving Channels] \label{AES} $\\$
Let $\phi\in\mathbf{CPt}$ be a quantum channel, and denote by $\mathcal{G}_\phi$ the non--empty set of limit points of the sequence $\left( \phi^n \right)_{n\in\mathds{N}}$. If $\mathcal{G}_\phi\cap \mathbf{EBt} = \emptyset$ then $\phi$ is called \emph{asymptotically entanglement--saving} (AES).
\end{Def}

Remind that the set $\mathbf{EBt}$ is closed, and so its complement in $\mathbf{CPt}$ is open. Consequently, every AES channel is also ES, but the converse is not necessarily true. Moreover, consider a limit point $S\in\mathcal{G}_\phi$ of the sequence of powers of an AES channel $\phi$. We know (by definition) that $S$ is not entanglement--breaking. But not only: since $\mathcal{G}_\phi$ is a closed set (see Proposition~\ref{LPG}), it turns out that $S$ itself must be an AES channel! This means that
\begin{equation} \text{$\phi$ is AES}\ \Leftrightarrow\ \text{$\mathcal{G}_\phi$ is entirely composed of AES channels.} \label{lim again AES}\end{equation}
This observation leads to an elementary way to construct examples of ES channels with determinant equal to zero, that is, ES channels that might elude the classification provided by Theorem~\ref{ES det =/ 0}. Indeed, consider $S\in\mathcal{G}_\phi$, $\phi$ being a non--unitary AES channel. On one hand,~\eqref{lim again AES} ensures that $S$ is AES and so ES. On the other hand, Theorem~\ref{unitary} and Theorem~\ref{spect prop Pt} impose the constraint $|\det\phi|<1$, which in turn implies that 
\begin{equation*} \det S=\lim_{n\rightarrow\infty}\det(\phi^n)=\lim_{n\rightarrow\infty}(\det\phi)^n=0\ . \end{equation*}
An explicit example of this construction is given in Example~\ref{ex simple AES}.

Another useful characterization of the AES set descends directly from its definition. Since Proposition~\ref{Gphi EB} guarantees that either all the limit points are entanglement--breaking or the channel is AES, we can restrict ourselves to check only a particularly simple limit point, for instance the $E_\phi$ defined through~\eqref{E phi}.
\begin{equation} \text{$\phi$ is AES}\ \Leftrightarrow\ E_\phi\notin\mathbf{EBt}\, . \label{AES E phi}\end{equation}

What is the physical meaning of Definition~\ref{AES}? These AES channels represent a particularly innocuous kind of entanglement--saving noise, in the following sense. It can happen that a quantum channel never breaks completely the entanglement, even if it is applied many times; this is the entanglement--saving property. However, these repeated application can reduce the quantum correlations to an arbitrary low value, and destroy them \emph{only in the limit}. An asymptotic entanglement--saving channels does not exhibit such a behaviour. Instead, \emph{a finite amount of entanglement is present also in the limit}. In other words, suppose that Alice makes sure that only an AES noise is acting on her half of the global system. Then she is guaranteed that the bipartite system will always contain a significant and concretely usable quantity of entanglement.

\subsection{Simple results}

This subsection presents some partial results concerning the AES channels. These results can be easily seen as corollaries of the general theory to be discussed later, but here we will deduce them independently. Then, it will be very instructive to see as they fit into the general scheme drawn by Theorem~\ref{AES CC}.

Firstly, let us present a simple link between the AES property and the size of the peripheral spectrum. Indeed, a large peripheral spectrum is enough to guarantee the asymptotic entanglement saving, as we will see in a moment. We start by recalling that the trace norm $\| A \|_1$ of a generic matrix $A$ can be bounded by the sum of the moduli of its eigenvalues $\{\lambda_i (A) \}$, that is
\begin{equation} \| A \|_1\ \geq\ \sum_i |\lambda_i (A) |\, . \label{bound tr norm} \end{equation}
We refer the reader interested in the proof to~\cite{HJ2}, p. 172. Moreover, let us recap the content of the so--called \emph{reshuffling separability criterion} (see the earlier works~\cite{Resh1}, \cite{Resh2} and \cite{Resh3}, or~\cite{GeometryQuantum} p. 355 for a good review with proofs). Let $\rho_{AB}$ be a separable state on a bipartite system $AB$, with $\dim \mathcal{H}_A=\dim\mathcal{H}_B=d$. Denote by $\phi\in\mathbf{CP}_d$ the unique linear map on states of $A$ associated to $\rho_{AB}$ via the Choi--Jamiolkowski isomorphism~\eqref{Choi state}), i.e. verifying $\, R_\phi=\rho_{AB}$. Considering $\phi$ as a $d^2\times d^2$ complex matrix, its trace norm can not exceed $d$. In short,
\begin{equation} R_\phi\in\mathcal{S}\quad \Rightarrow\quad \|\, \phi\, \|_1\, \leq\, d\, . \label{resh cr} \end{equation}

\begin{prop}[AES Channels and Peripheral Spectrum] \label{AES periph} $\\$
Let $\phi\in\mathbf{CPt}_d$ be a quantum channel. If $\phi$ has more than $d$ peripheral eigenvalues, i.e. $| \sigma_P(\phi) | > d$, then $\phi$ is AES.
\end{prop}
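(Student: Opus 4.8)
The plan is to invoke the characterization~\eqref{AES E phi}, according to which $\phi$ is AES if and only if the spectral projection $E_\phi$ fails to be entanglement--breaking. It therefore suffices to show that the hypothesis $|\sigma_P(\phi)|>d$ forces $E_\phi\notin\mathbf{EBt}$, and I would establish this by contradiction, playing the reshuffling separability criterion~\eqref{resh cr} against the eigenvalue lower bound~\eqref{bound tr norm} on the trace norm.

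First I would collect the relevant properties of $E_\phi$. By Theorem~\ref{spect prop Pt} we have $E_\phi\in\mathbf{CPt}_d$, and by its very definition~\eqref{E phi} it is the spectral projector onto the direct sum of the generalized eigenspaces attached to the peripheral eigenvalues; hence it is idempotent ($E_\phi^2=E_\phi$) and its image is exactly the phase subspace $\chi_\phi$. Since peripheral eigenvalues carry only trivial Jordan blocks (again Theorem~\ref{spect prop Pt}), the algebraic and geometric multiplicities coincide there, so the dimension of $\chi_\phi$ equals the number of peripheral eigenvalues counting multiplicities, giving $\text{rk}\,E_\phi=|\sigma_P(\phi)|$. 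Regarded as an idempotent linear map on $\mathcal{M}(d;\mathds{C})$, $E_\phi$ then has only the eigenvalues $0$ and $1$, the latter occurring with multiplicity equal to its rank $|\sigma_P(\phi)|$.

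Now suppose, for contradiction, that $E_\phi\in\mathbf{EBt}$. On one hand its Choi state $R_{E_\phi}$ is separable by Theorem~\ref{str thm EB}, so the reshuffling criterion~\eqref{resh cr} yields $\|E_\phi\|_1\leq d$. On the other hand, applying~\eqref{bound tr norm} to $E_\phi$ and using that its nonzero eigenvalues are all equal to $1$ with total multiplicity $|\sigma_P(\phi)|$ gives
\begin{equation*} \|E_\phi\|_1\ \geq\ \sum_i |\lambda_i(E_\phi)|\ =\ |\sigma_P(\phi)|\ >\ d\, . \end{equation*}
These two inequalities are incompatible, whence $E_\phi\notin\mathbf{EBt}$, and~\eqref{AES E phi} concludes that $\phi$ is AES.

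The argument is short once the right tools are assembled; the only step requiring care — and the one I expect to be the crux — is the bookkeeping that identifies $\text{rk}\,E_\phi$ with $|\sigma_P(\phi)|$, and hence the sum of the moduli of the eigenvalues of the idempotent $E_\phi$ with $|\sigma_P(\phi)|$. This is precisely where the triviality of the peripheral Jordan blocks is used, and it is what makes the reshuffling bound $\|E_\phi\|_1\leq d$ and the eigenvalue bound $\|E_\phi\|_1\geq|\sigma_P(\phi)|$ directly comparable.
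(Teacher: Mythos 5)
Your proof is correct and takes essentially the same route as the paper's: both play the eigenvalue lower bound~\eqref{bound tr norm} on the trace norm against the reshuffling criterion~\eqref{resh cr} to exclude the entanglement--breaking property. The only difference is bookkeeping: the paper runs the argument on every limit point $S\in\mathcal{G}_\phi$, controlling its spectrum via~\eqref{limit diag}, whereas you run it on the single limit point $E_\phi$ through the equivalence~\eqref{AES E phi}, computing its spectrum from idempotency and the rank identity $\text{rk}\, E_\phi=|\sigma_P(\phi)|$ --- two reductions that both rest on Proposition~\ref{Gphi EB}.
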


\begin{proof}
Suppose $| \sigma_P(\phi) | > d$. Then~\eqref{limit diag} guarantees that every limit point $S\in\mathcal{G}_\phi$ also verifies $| \sigma_P(S) | > d$. Thanks to~\eqref{bound tr norm}, this implies that
\begin{equation*} \| S \|_1\, \geq\, \sum_i |\lambda_i (S)|\, >\, d\, . \end{equation*}
By invoking~\eqref{resh cr}, we can see that this forbids $S\in\mathbf{EBt}_d$.
\end{proof}

The last partial result we are going to present concerns the AES qubit class. Indeed, if $d=2$ then Lemma~\ref{det 0 q EB} and Theorem~\ref{unitary} give us powerful tools to characterize the whole set of AES channels.

\begin{prop} \label{AES q} $\\$
A qubit channel is AES if and only if it is unitary.
\end{prop}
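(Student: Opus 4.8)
The plan is to route both implications through the single spectral projector $E_\phi$ defined in \eqref{E phi}, exploiting the characterization \eqref{AES E phi}, according to which $\phi$ is AES if and only if $E_\phi\notin\mathbf{EBt}$. For the forward implication (unitary $\Rightarrow$ AES) I would simply invoke Proposition~\ref{AES periph}: a qubit unitary channel has peripheral spectrum $\{1,1,e^{i\theta},e^{-i\theta}\}$ by \eqref{unitary from spectrum} (with the degenerate case $\theta=0$ giving $\{1,1,1,1\}$), so that $|\sigma_P(\phi)|=4>2=d$ and the AES property follows at once. Equivalently, since in this case all eigenvalues are peripheral, $E_\phi$ is the identity superoperator, which is manifestly not entanglement--breaking.

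For the converse I would argue by contraposition: assuming $\phi\in\mathbf{CPt}_2$ is \emph{not} unitary, I show $E_\phi\in\mathbf{EBt}_2$, whence $\phi$ fails to be AES by \eqref{AES E phi}. The key remark is that $E_\phi$ is idempotent and, by point~5 of Theorem~\ref{spect prop Pt}, again a legitimate element of $\mathbf{CPt}_2$. Being a projector onto the peripheral generalized eigenspaces, its eigenvalues are $1$ (with multiplicity $|\sigma_P(\phi)|$) and $0$ (with multiplicity $d^2-|\sigma_P(\phi)|=4-|\sigma_P(\phi)|$). Running through the list \eqref{Wolf q} of admissible qubit peripheral spectra, every non--unitary option satisfies $|\sigma_P(\phi)|\leq 2<4$, so $E_\phi$ carries a vanishing eigenvalue and therefore $\det E_\phi=0$ in the sense of \eqref{det q}. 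Lemma~\ref{det 0 q EB} then forces $E_\phi\in\mathbf{EBt}_2$, closing the contraposition.

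The step I expect to require the most care is the determinant bookkeeping: I must confirm that $\det E_\phi$, the product of the superoperator eigenvalues, genuinely vanishes in each of the three non--unitary cases $\{1\}$, $\{1,1\}$, $\{1,-1\}$, rather than leaning on a merely heuristic count. This becomes airtight once one observes that $E_\phi$ is a rank-$|\sigma_P(\phi)|$ projector on the four--dimensional real space $\mathcal{H}(2;\mathds{C})$, hence invertible (equivalently $\det E_\phi\neq 0$) precisely when $|\sigma_P(\phi)|=4$, which by \eqref{unitary from spectrum} is exactly the unitary case. The whole argument thus collapses to the equivalence $\det E_\phi\neq 0\Leftrightarrow\phi\in\boldsymbol{\mathcal{U}}_2$, with Lemma~\ref{det 0 q EB} furnishing the bridge from the determinant condition to entanglement breaking and \eqref{AES E phi} the bridge from entanglement breaking to the failure of the AES property.
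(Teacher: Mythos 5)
Your proof is correct, and it takes a route whose essential ingredients differ from the paper's. Both arguments ultimately rest on Lemma~\ref{det 0 q EB} (a qubit channel with vanishing determinant is entanglement--breaking) applied to a limit point of $(\phi^n)_{n\in\mathds{N}}$, but they manufacture the vanishing determinant differently. The paper works with an arbitrary limit point: by Theorem~\ref{unitary}, non--unitarity means $|\det\phi|<1$, so $\det(\phi^n)=(\det\phi)^n\rightarrow 0$ and every limit point has zero determinant; this needs nothing beyond multiplicativity of the determinant and the characterization of unitaries through $|\det\phi|=1$. You instead single out the canonical limit point $E_\phi$ and read off its determinant from its structure as an idempotent of rank $|\sigma_P(\phi)|$ on the four--dimensional real space $\mathcal{H}(2;\mathds{C})$, using the classification~\eqref{Wolf q} together with~\eqref{unitary from spectrum} to see that non--unitarity forces $|\sigma_P(\phi)|\leq 2<4$; since \eqref{Wolf q} is a consequence of the peripheral--spectrum Theorem~\ref{PS}, your converse leans on heavier machinery than the paper's. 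The same is true of the forward direction, where the paper merely observes that limit points of powers of a unitary are again unitary, while you invoke Proposition~\ref{AES periph} via the count $|\sigma_P(\phi)|=4>2=d$. What your route buys is uniformity and concreteness: the whole argument is channelled through the single equivalence~\eqref{AES E phi}, the entanglement--breaking limit point is exhibited explicitly (it is $E_\phi$ itself, whose rank deficiency is transparent), and the reasoning sits naturally as a stepping stone toward the structural characterization of Theorem~\ref{AES CC}, of which the paper's later ``revisited'' proof of this proposition is the full--strength version.
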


\begin{proof}
Obviously a unitary channel $\mathcal{U}$ is AES, because the limit points of $(\mathcal{U}^n)_{n\in\mathds{N}}$ are again unitary channels. Let us concern ourselves with the converse. Thanks to Theorem~\ref{unitary}, we have only to prove that every AES qubit channel $\phi$ satisfies the property $|\det\phi|=1$. Assume by contradiction that $|\det\phi|<1$. Then the elementary equality $\det(\phi^n)=(\det\phi)^n$ shows that the limit points of $(\phi^n)_{n\in\mathds{N}}$ have zero determinant. This is absurd, because Lemma~\ref{det 0 q EB} shows that this mere property implies that they are entanglement--breaking.
\end{proof}

\subsection{General characterization of AES channels}

Through this subsection, we will prove the theorem that completely solves the characterization problem for the AES channels. This section is where the very general theory presented in Section~\ref{sec advanced} (and in particular Theorem~\ref{phase sub}) comes heavily into play. We postpone the discussion of our results, as well as some clarifying examples, after the statement and proof of the central result.

\begin{thm}[Complete Characterization of AES Channels] \label{AES CC} $\\$
Let $\phi\in\mathbf{CPt}_d$ be a quantum channel. Then the following facts are equivalent:
\begin{enumerate}

\item $\phi$ is asymptotically entanglement--saving.

\item With the notation of Theorem~\ref{phase sub}, at least one of the $d_i^{(1)}$ associated to $\phi$ is strictly greater than $1$.

\item The von Neumann algebra $\eta_{\tilde{E}_\phi^\dag}$ (see~\eqref{eta E phi tilde}) is noncommutative.

\item $\phi$ admits at least two noncommuting phase points.

\end{enumerate}
\end{thm}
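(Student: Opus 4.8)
The plan is to route every implication through the single structural dichotomy ``some $d_i^{(1)}>1$ versus all $d_i^{(1)}=1$'', using the block description of the phase subspace supplied by Theorem~\ref{phase sub} together with the handy criterion \eqref{AES E phi}, namely that $\phi$ is AES if and only if the peripheral projector $E_\phi$ fails to be entanglement--breaking. The equivalences $2\Leftrightarrow 3\Leftrightarrow 4$ are purely algebraic and I would dispose of them first; the genuine content is the equivalence $1\Leftrightarrow 2$, which I would establish by determining exactly when $E_\phi$ is entanglement--breaking.

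For $2\Leftrightarrow 3$ I would recall that $\eta_{\tilde E_\phi^\dag}=\bigoplus_i \mathcal{M}_{d_i^{(1)}}\otimes \mathds{1}_{d_i^{(2)}}$ by \eqref{eta E phi tilde}, and that a matrix algebra $\mathcal{M}_n$ is commutative precisely when $n=1$; hence this direct sum is noncommutative iff some $d_i^{(1)}>1$. For $2\Leftrightarrow 4$ I would use the form \eqref{chi} of $\chi_\phi$: writing two generic elements of the phase subspace as $X=\bigoplus_i X_{i,1}\otimes\rho_{i,2}$ and $Y=\bigoplus_i Y_{i,1}\otimes\rho_{i,2}$, a one--line computation gives $[X,Y]=\bigoplus_i [X_{i,1},Y_{i,1}]\otimes \rho_{i,2}^2$, which can be made nonzero iff some block admits noncommuting $X_{i,1},Y_{i,1}$, i.e. iff some $d_i^{(1)}>1$. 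Since the phase points \emph{span} $\chi_\phi$ and the linear span of a set of pairwise commuting operators is itself commutative, ``$\phi$ admits two noncommuting phase points'' is equivalent to ``$\chi_\phi$ is noncommutative'', closing the loop with condition $2$.

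For the implication $2\Rightarrow 1$ I would exhibit an explicit entangled output of $E_\phi$. Assume $d_{i_0}^{(1)}=k\ge 2$, fix a unit vector $\Ket{\chi}$ in the second factor $\mathcal{K}_{i_0}$ and an orthonormal basis $\Ket{e_1},\dots,\Ket{e_k}$ of the first factor, and introduce a $k$--dimensional reference $R$ with the input $\Ket{\Psi}_{AR}=\tfrac{1}{\sqrt{k}}\sum_{a=1}^k (\Ket{e_a}\otimes\Ket{\chi})\otimes\Ket{a}_R$ supported inside block $i_0$. Using \eqref{E phi act} one computes $E_\phi(\Ket{e_a}\!\!\Bra{e_b}\otimes\Ket{\chi}\!\!\Bra{\chi})=\Ket{e_a}\!\!\Bra{e_b}\otimes\rho_{i_0,2}$, whence $(E_\phi\otimes I)(\Ket{\Psi}\!\!\Bra{\Psi})=\Ket{\Phi^+}\!\!\Bra{\Phi^+}\otimes\rho_{i_0,2}$, where $\Ket{\Phi^+}$ is maximally entangled across the first factor and $R$. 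For $k\ge 2$ this output is entangled across $A{:}R$ (tracing out the local factor $\rho_{i_0,2}$ leaves a maximally entangled state), so $E_\phi\notin\mathbf{EBt}$ and \eqref{AES E phi} gives that $\phi$ is AES.

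For the converse $1\Rightarrow 2$ I would argue by contraposition: assume all $d_i^{(1)}=1$ and show $E_\phi\in\mathbf{EBt}$. Then \eqref{chi} collapses to $\chi_\phi=\mathrm{span}\{\rho_{i,2}\}$, a family of mutually orthogonal states; since $E_\phi$ is idempotent its image equals its fixed space $\chi_\phi$, so every output has the form $E_\phi(Y)=\sum_i c_i(Y)\,\rho_{i,2}$. Projecting onto the $i$th block and invoking complete positivity shows each functional $c_i$ is positive, hence $c_i(Y)=\text{Tr}[F_i Y]$ for some $F_i\ge 0$, while trace preservation forces $\sum_i F_i=\mathds{1}$; this is precisely the Holevo form \eqref{Holevo form}, so $E_\phi$ is entanglement--breaking by Theorem~\ref{str thm EB} and $\phi$ is not AES. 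The main obstacle I anticipate is the $2\Rightarrow 1$ step, where one must produce a concrete non--separability witness for $E_\phi$; the key realisation that unlocks it is that a surviving block with $d_{i_0}^{(1)}\ge 2$ forces $E_\phi$ to act as the (non--entanglement--breaking) identity on a $k$--dimensional factor, which the maximally entangled probe above detects.
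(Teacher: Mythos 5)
Your proof is correct, and its overall skeleton coincides with the paper's: both reduce the AES property to the statement ``$E_\phi\notin\mathbf{EBt}$'' via \eqref{AES E phi}, both settle $2\Leftrightarrow 3\Leftrightarrow 4$ by the same algebraic observations on \eqref{eta E phi tilde} and \eqref{chi}, and your maximally entangled probe for $2\Rightarrow 1$ is the same witness the paper uses, there phrased as the fixed entangled state $\Ket{\varepsilon}\!\!\Bra{\varepsilon}_{i,1;(i,1)'}\otimes\rho_{i,2}$ (your use of \eqref{E phi act} is legitimate because the probe's $A$--marginal is supported inside $\mathcal{K}$). Where you genuinely diverge is the contrapositive direction $1\Rightarrow 2$. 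The paper first proves that $\tilde{E}_\phi$ has the Holevo form when all $d_i^{(1)}=1$, and must then separately establish the equivalence $E_\phi\in\mathbf{EBt}\Leftrightarrow\tilde{E}_\phi\in\mathbf{EBt}$; this requires the factorization $E_\phi=E_\phi^2=\tilde{E}_\phi\,\mathcal{P}E_\phi\oplus 0$ together with the composition property \eqref{EB propagates}, precisely because \eqref{E phi act} only specifies $E_\phi$ on inputs supported in $\mathcal{K}$. You avoid this reduction entirely: idempotence gives you that the image of $E_\phi$ equals $\eta_{E_\phi}=\chi_\phi=\mathrm{span}\{\rho_{i,2}\}$ for \emph{all} inputs, so you may write $E_\phi(Y)=\sum_i c_i(Y)\,\rho_{i,2}$ globally; then positivity of $E_\phi$ combined with the compressions $P_i(\cdot)P_i$ makes each $c_i$ a positive functional $c_i(Y)=\text{Tr}[F_i Y]$ with $F_i\geq 0$, and trace preservation yields $\sum_i F_i=\mathds{1}$, i.e.\ the Holevo form for the full map $E_\phi$ directly. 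Both arguments lean on idempotence as the key lever, but you use it to identify the image with the fixed subspace while the paper uses it to factor $E_\phi$ through its restriction $\tilde{E}_\phi$; your route is slightly more economical for this direction, at the negligible price of invoking the standard representation of positive linear functionals on matrix algebras.
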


\begin{proof}
Through this proof, the notation will follow the one developed for Theorem~\ref{phase sub}.
\begin{itemize}

\item[$1.\Leftrightarrow 2.$] The equivalence~\eqref{AES E phi} states that $\phi$ is AES if and only if $E_\phi$ (as defined in~\eqref{E phi}) is not entanglement--breaking. Observe that~\eqref{E phi act} gives us an explicit expression for the action of $E_\phi$. So we could think that this is enough to decide whether $E_\phi$ is or not entanglement--breaking. The problem here is that this expression works only when the input matrix is of the form $X \oplus 0$, with respect to the decomposition $\mathds{C}^d=\mathcal{K}\oplus\mathcal{K}^\perp$. With the nomenclature of~\eqref{psi bar}, we can say that we only know the action of $\tilde{E}_\phi$, not the one of $E_\phi$.

But the amazing fact is that the difference between $E_\phi$ and $\tilde{E}_\phi$ does not matter when we only care about the entanglement--breaking behaviour. Indeed, 
\begin{equation} E_\phi\in\mathbf{EBt}\quad \Leftrightarrow\quad \tilde{E}_\phi\in\mathbf{EBt}\, . \label{original} \end{equation}
The implication $\Rightarrow$ is trivial, because $\tilde{E}_\phi$ is a restriction of $E_\phi$, and a restriction of an entanglement--breaking channels must be again entanglement--breaking. In order to prove the converse, take a generic input matrix $X$. Since $E_\phi^2=E_\phi$ because of the very definition~\eqref{E phi}, $E_\phi(X)$ must be a fixed point for $E_\phi$, that is its support must be contained in $\mathcal{K}$. Denoting by $\mathcal{P}:\mathcal{M}_d\rightarrow\mathcal{M}_r$ (with $r\equiv\dim\mathcal{K}$) the superoperator that restricts every input to $\mathcal{K}$, we can write $E_\phi(X)=\mathcal{P}E_\phi(X)\oplus 0$, or more generally $E_\phi=\mathcal{P}E_\phi\oplus 0$. But then
\begin{equation} E_\phi\, =\, E_\phi^2\, =\, E_\phi\left(\,\mathcal{P} E_\phi \oplus 0\, \right)\, =\, \tilde{E}_\phi \mathcal{P} E_\phi\, \oplus\, 0 \, . \label{original eq1}\end{equation}
This equality and a slight generalization of~\eqref{EB propagates} show that $E_\phi$ must be entanglement--breaking if so is $\tilde{E}_\phi$.

Thanks to~\eqref{original}, we know that $\phi$ is AES if and only if $\tilde{E}_\phi\notin\mathbf{EBt}$. Moreover,~\eqref{E phi act} shows that
\begin{equation} \tilde{E}_\phi (\cdot)\, =\, \bigoplus_i\ \text{Tr}_{i,2}\, [\, P_i (\cdot) P_i \, ] \otimes \rho_{i,2}\ . \label{E phi tilde act} \end{equation}
A straightforward inspection of~\eqref{E phi tilde act} immediately reveals that $\tilde{E}_\phi$ is entanglement--breaking if and only if \mbox{$d_i^{(1)}=1$} for all $i$. Indeed, if $d_i^{(1)}=1$ for all $i$ then \mbox{$\text{Tr}_{i,2}\, [\, P_i (\cdot) P_i \, ]$} becomes a scalar function, and~\eqref{E phi tilde act} takes the form~\eqref{Holevo form}, meaning that $\tilde{E}_\phi$ is entanglement--breaking. Conversely, if at least one of the $d_i^{(1)}$ is strictly greater than $1$, then there exists a nontrivial, preserved qudit ``inside'' the Hilbert space. To be more precise, consider an entangled state of the form $A=\Ket{\varepsilon}\!\!\Bra{\varepsilon}_{i,1; (i,1)'}\otimes\rho_{i,2}$, where 
\begin{equation*} \Ket{\varepsilon}_{i,1; (i,1)'}\, =\, \frac{1}{\sqrt{d_i^{(1)}}}\, \sum_{j=1}^{d_i^{(1)}}\, \Ket{j}_{i,1}\otimes \Ket{j}_{(i,1)'} \end{equation*}
is a partial maximally entangled state over two copies of the $(i,1)$--th subspace in accordance with the decomposition~\eqref{chi}. Since $\tilde{E}_\phi$ fixes that entangled state $A$, $\tilde{E}_\phi$ can not be entanglement--breaking.

\item[$2.\Leftrightarrow 3.$] Obvious consequence of~\eqref{eta E phi tilde}.

\item[$2.\Leftrightarrow 4.$] Observe that $\phi$ admits two noncommuting phase points if and only if at least two noncommuting elements belong to $\chi_\phi$. Equation~\eqref{chi} allows us to conclude that $\chi_\phi$ is noncommutative as a set if and only $d_i^{(1)}>1$ for some $i$.

\end{itemize}
\end{proof}

\begin{proof}[Alternative proof of $\,3.\Rightarrow 1.$] Let us present now an alternative, direct proof of the implication $\,3.\Rightarrow 1.$, which does not use the heavily technical theory of the phase subspace summarized in Theorem~\ref{phase sub}. Consider two hermitian, noncommuting matrices $X,Y\in\eta_{\tilde{E}_\phi^\dag}$, and define $Z\equiv X+iY\in \eta_{\tilde{E}_\phi^\dag}$. Since $[X,Y]\neq 0$, one can easily verify that $[Z,Z^\dag]\neq 0$. Then, Lemma~\ref{ZZ} allows us to construct the entangled state $Q(Z)$. Moreover, thanks to the \mbox{*--algebra} property of $\eta_{\tilde{E}_\phi^\dag}$, we know that $Z^\dag Z\in\eta_{\tilde{E}_\phi^\dag}$, so that indeed $\left( I\otimes \tilde{E}_\phi^\dag \right)\,\left(Q(Z)\right)\, =\, Q(Z)$. Since $Q(Z)$ is entangled, this imply that $\tilde{E}_\phi^\dag\notin\mathbf{EBu}$, that is (see~\eqref{EB adjoint}) $\tilde{E}_\phi\notin\mathbf{EBt}$. Equations~\eqref{AES E phi} and~\eqref{original} allows us to conclude that $\phi$ is AES.

\end{proof}

\subsection{Examples and discussion}

What is the intuitive meaning of Theorem~\ref{AES CC}? In our opinion, the most easily understandable requirement contained in Theorem~\ref{AES CC} is condition $4$. Remind that the only input states which survive in the limit of an infinite number of iterations are the phase points (by virtue of Theorem~\ref{spect prop Pt}). Then, condition $4$ says simply that the entanglement (the most genuine quantum mechanical property) can survive in that limit only if a noncommutation relation (the most basic feature of quantum mechanics) exists among the suriving states, i.e. the phase points. 

Before we give some general class of examples of AES channels, let us discuss how the simple results contained in Proposition~\ref{AES periph}
and Proposition~\ref{AES q} fit into the general scheme drawn by Theorem~\ref{AES CC}.

\begin{proof}[Proof of Proposition~\ref{AES q} (revisited).] The fact that the only AES qubit channels are the unitary evolutions can be easily proved now. In fact, in the qubit case the second condition of Theorem~\ref{AES CC} imposes that $\mathcal{K}=\mathds{C}^2$ is indeed the whole space, and that there is only one addend in the direct sum~\eqref{chi}. Moreover, also the tensor product decomposition in that addend must have a trivial second factor, that is $\rho_{i,2}=1$. Thanks to~\eqref{chi eq2}, this implies that $\phi$ is an unitary evolution.
\end{proof}

\begin{proof}[Proof of Proposition~\ref{AES periph} (revisited).] We have to prove that a peripheral spectrum strictly larger than $d$ (counting multiplicities) invariably denotes an AES behaviour. This can be easily regarded as a consequence of condition $4$ (or $3$) of Theorem~\ref{AES CC}. Indeed, suppose by contradiction that $|\sigma_P(\phi)|>d$ and that the associated (at least) $d+1$ phase points commute with each other. Then the linear span $\chi_\phi$ of the phase points has dimension at least $d+1$. Since $\phi$ is hermiticity--preserving, $\chi_\phi$ is a real subspace (see Theorem~\ref{spect prop Pt}), and so it contains $d+1$ linearly independent (and commuting) hermitian matrices. It is a well--known fact that commuting hermitian operators can be simultaneously diagonalized. We would obtain $d+1$ linearly independent, diagonal matrices. Since the diagonal is composed of only $d$ entries, this is clearly absurd.
\end{proof}

Finally, we present the simplest possible class of examples of AES channels.

\begin{ex}[Simple AES Channels] \label{ex simple AES} $\\$
Once again, we refer the reader to Theorem~\ref{phase sub} for the basic notation. Consider a Hilbert space $\mathds{C}^d=\mathcal{K}$ decomposed in the form
\begin{equation} \mathds{C}^d\, =\ \bigoplus_i\ \mathcal{K}_i\, =\ \bigoplus_i\ \mathcal{K}_i^{(1)}\otimes \mathcal{K}_i^{(2)}\ , \label{decompose} \end{equation}
with at least one of the $d_i^{(1)}\equiv\dim\mathcal{K}_i^{(1)}$ strictly greater than $1$. Call $P_i$ the orthogonal projector onto $\mathcal{K}_i$. Take unitary matrices $U_i$ acting on $\mathcal{K}_i^{(1)}$, density matrices $\rho_{i,2}$ defined on $\mathcal{K}_i^{(2)}$ and a permutation $\pi$ over the set of indices $i$ which exchanges only indices sharing the same dimension $d_i^{(1)}$. Then construct the channel $S$ given by the formula
\begin{equation} S(\cdot)\ \equiv\ \bigoplus_i\ U_i\, \left(\, \text{\emph{Tr}}_{\pi(i),2}\, [P_{\pi(i)}\, (\cdot)\, P_{\pi(i)}]\, \right)_{i,1}\, U_i^\dag\ \otimes\ \rho_{i,2}\ . \label{S act}\end{equation}

These channels enjoy the following properties.

\begin{itemize}

\item If at least one of the $d_i^{(1)}$ is strictly greater than $1$ (as we supposed), then $S$ is asymptotically entanglement--saving. This can be easily seen by observing that its phase subspace is indeed
\begin{equation*} \chi_S\, =\ \bigoplus_i\, \mathcal{M}_{d_i^{(1)}} \otimes \rho_{i,2}\ , \end{equation*}
and so it is not commutative (by Theorem~\ref{AES CC} this is enough to conclude).

\item These channels $S$ are the simplest AES channels in the sense that they have only eigenvalues either of unit modulus or equal to $0$. Indeed, it can be easily verified that the $S$ defined through~\eqref{S act} has $\sum_i {d_i^{(1)}}^2$ eigenvalues of unit modulus (see the construction reported in the proof of Theorem~\ref{phase sub} for more details). Moreover, it has two different kinds of zero eigenvalues: the first group is composed of $\sum_i {d_i^{(1)}}^2\, \left( {d_i^{(2)}}^2-1\right)$ zeros corresponding to eigenvectors of the form $\bigoplus_i a_{i,1}\otimes x_{i,2}$, where the $x_{i,2}$ satisfies \mbox{$\text{\emph{Tr}}\,x_{i,2}\equiv 0$} for all $i$. The second group corresponds to the $\sum_{i\neq j} d_i^{(1)}d_i^{(2)} d_j^{(1)} d_j^{(2)}$ independent hermitian input matrices having only off--diagonal elements with respect to the block decomposition~\eqref{decompose}. Observe that the total number of eigenvalues gives correctly $\left(\,\sum_i d_i^{(1)} d_i^{(2)}\, \right)^2$. The presence in the spectrum of only zero or unit complex eigenvalues by the way implies that $S$ itself is a limit point of the sequence $(S^n)_{n\in\mathds{N}}$ of its powers. This in turn shows that $S$ is an ES channel that might be impossible to classify according to Theorem~\ref{ES det =/ 0}, because it has a large number of zero eigenvalues. 

\end{itemize}

\end{ex}

\section{Conclusions}  \label{sec con} 

In the recent years enormous progresses have been put forward in the development of a brand new form of technology based on a clever use  of quantum mechanical systems. 
In particular, a lot has been learned on how to attenuate the detrimental effects of noise arising from the interaction of a system of interest with its surrounding environment, either by building better devices (hardware approach), or by exploiting complex coding procedures (software approach), yielding as an outcome an effectively noise whose intensity is milder than the original one.  
In view of all this, it makes sense to study more closely the properties of those processes which induce only small (yet nontrivial) perturbations on the system of interest. 
In order to characterize such a class of transformations, in the present work we focus on their ability of preserving entanglement of a bipartite quantum system when acting locally on one of the subparts. 
This yields to the introduction of three special set of superoperators, namely the universal preserving channels (i.e. those mapping which preserve any form of entanglement initially present in the system, no matter how weak it may be), the set of entanglement saving channels (i.e. maps which preserve entanglement of a maximally entangled state even after $n$ iterations, $n$ being an arbitrary integer), and the set of asymptotically entanglement saving channels (that make the entanglement survive even in the limit $n\rightarrow\infty$). First of all, we proved that the only universal entanglement preservers are the unitary evolutions. Then we found a partial characterization theorem for the entanglement saving channels (which in turn allows to fully understand the qubit case) and a complete characterization result holding for the asymptotically entanglement saving ones.

Most of the major open problems here concern the lack of a full understanding of the ES class. It is apparent that an ES channel has to produce very little noise in the system, but in what sense? Are there some easily verified criteria that can certify that a given channel is (or is not) entanglement saving? We are confident that a more complete knowledge of this kind of questions could reveal us some interesting phenomena appearing when the concatenation of quantum noises takes place.

\end{document}